\newtheorem{theorem}{Theorem}
\newtheorem{lemma}{Lemma}
\newtheorem{claim}{Claim}
\newtheorem{proposition}{Proposition}
\newtheorem{corollary}{Corollary}
\theoremstyle{definition}
\theoremstyle{remark}
\newtheorem{remark}{Remark}
\theoremstyle{remark}
\newtheorem{example}{Example}
\newenvironment{continuance}[1]
  {\newcommand\continuanceref{\ref{#1}}\continuancex}
  {\endcontinuancex}
\date{}
\title{The Structure of Equilibria in Trading Networks with Frictions}
\author{Jan Christoph Schlegel\footnote{An extended abstract under the previous title "Trading Networks with General Preferences" appeared in the Proceedings of the 20th ACM Conference on Economics and Computation (EC’19). The paper extends and supersedes~\cite{Schlegel2018} which proves similar result in the more restrictive model of job matching with salaries. I gratefully acknowledge financial support by the Swiss National Science Foundation (SNSF) under project 100018-150086. I thank Ravi Jagadeesan, Bettina Klaus, Alex Nichifor, Alex Teytelboym, Ning Yu and Klaus Zauner, seminar participants in Bristol, Lausanne and Oxford, participants of the 2018 Lisbon Game Theory Meetings, the Matching in Practice workshop in Mannheim, the 5th Match-Up workshop, the 20th ACM Conference on Economics and Computation (EC19), the 2019 North American Summer Meeting of the Econometric Society and the International Conference on Game Theory \& the 6th Microeconomics Workshop in Nanjing, for valuable comments.}\\ \small Department of Economics, City, University of London, UK\\{\small\tt jansc@alumni.ethz.ch}}
\begin{document}
\maketitle
\begin{abstract}Several structural results for the set of competitive equilibria in trading networks with frictions are established: The lattice theorem, the rural hospitals theorem, the existence of side-optimal equilibria, and a group-incentive-compatibility result hold with imperfectly transferable utility and in the presence of frictions. While our results are developed in a trading network model, they also imply analogous (and new) results for exchange economies with combinatorial demand and for two-sided matching markets with transfers.
\emph{JEL-classification:}C78, D47, D52, L14\\
\emph{Keywords:} Trading Networks; Full Substitutability; Imperfectly Transferable Utility; Competitive Equilibrium; Indivisible Goods; Frictions; Lattice; Rural Hospitals
\end{abstract}
\section{Introduction}
The assumption of quasi-linear utility in transfers is pervasive in models of matching markets, exchange economies with indivisible goods, trading networks, and in mechanism design. Quasi-linearity can simplify the analysis considerably since it allows us to exploit the duality between optimal allocations and supporting equilibrium prices. While the assumption of quasi-linearity simplifies the analysis, it is often empirically problematic. Wealth effects are present in marriage and labor markets so that matching models with transferable utility are unrealistic for these applications. Even if wealth effects are absent, transaction frictions such as those induced by taxation, subsidies, or transaction costs, make a transferable utility model inapplicable. This has motivated researchers to explore how results for matching markets with transfers~\cite[]{Demange.1985,Legros2007,Noldeke2018,Galichon2018} and for trading networks~\cite[]{Ravi,Hatfield:2018} can be generalized beyond quasi-linear utility.

In this paper, we contribute to this discussion and study imperfectly transferable utility and frictions\footnote{In the following, we distinguish language-wise between the case of pure imperfectly transferable utility where utility is not necessarily linear in transfers, but a function of the sum of transfers received, and the case of frictions where utility is not necessarily a function of the sum of transfers received, but a function of the entire vector of transfers. If frictions are present, it thus not only matters how much a firm receives in total transfers, but also through which trades it receives the transfers. This can e.g.~be the case if different trades involve different transaction costs. The results in our paper apply to both pure imperfectly transferable utility and frictions.} in the context of trading networks~\cite[]{Hatfield2013}. Trading networks with bilateral contracts model complex supply chains in an industry where firms are engaged in upstream as well as downstream contracts. They generalize two-sided matching markets, in the sense that they replace a bipartite graph of potential relations, by an arbitrary graph where each edge represents a potential trade. We show that important structural results for  trading networks do not depend on the assumption of quasi-linear utility, and establish several results about the set of competitive equilibria under minimal assumptions on utility functions. Our results apply even in the case of imperfectly transferable utility, in the presence of frictions, and if constraints make the execution of certain combinations of trades infeasible.

Our results can be summarized as follow: For a model of trading networks with frictions~\cite[]{Ravi} and under the assumptions of full substitutability~\cite[]{Inder,Ostrovsky2008,Hatfield2013} and the laws of aggregate demand and supply~\cite[]{Ostrovsky2008,Hatfield:2018}, we show that 
\begin{itemize}
	\item[-] the set of competitive equilibria is a sublattice of the price space (first part of Theorem~\ref{Util}),
	\item[-] a generalized "rural hospitals theorem" holds: the difference between the number of signed downstream and the number of signed upstream contracts is the same for each firm in each equilibrium (second part of Theorem~\ref{Util}),
	\item[-] assuming additionally "bounded compensating variations" (\citealp{Ravi}), there is an equilibrium that is most preferred by terminal sellers and an equilibrium that is most preferred by terminal buyers (Theorem~\ref{extreme}),
	\item[-]  a mechanism that selects buyer-optimal equilibria is group-strategy-proof for terminal buyers on the domain of unit-demand utility functions and similarly a mechanism that selects seller-optimal equilibria is group-strategy-proof for terminal sellers on the domain of unit-supply utility functions~(Theorem~\ref{GSP}).
	\end{itemize}

While our results are established for trading networks, the results are already new for many-to-one matching markets and for exchange economies with combinatorial demand which are special cases of our model. For matching markets, similar results were so far only known (a) under quasi-linear utility, (b) for models without transfers and with strict preferences, or (c) for one-to-one matching markets with imperfectly transferable utility. For exchange economies with indivisible goods, analogous results were so far only known for (a) quasi-linear utility, or for the case of (b) unit demand.

Working with imperfectly-transferable utility and frictions requires us to develop fundamentally new techniques: Similar results for the case of imperfectly transferable utility were so far only known for one-to-one matching markets~\cite[]{Demange.1985} and the proof techniques developed in this context (in particular the "decomposition lemma") do not adapt to more general settings. On the other hand, techniques from transferable utility models do not generalize to our model: \cite{Hatfield2013} use the efficiency of competitive equilibria and the submodularity of the indirect utility function to establish the lattice property.  For our model with frictions, competitive equilibria can fail to be efficient. Moreover, full substitutability implies only the weaker notion of quasi-submodularity~\cite[]{Ravi}. More subtly, as we will discuss below, without quasi-linearity there are several non-equivalent definitions of full substitutability and these definitions are not distinguishable through conditions on the indirect utility function alone. Thus, an approach as in~\cite{Hatfield2013} that characterizes competitive equilibria through the indirect utility function and uses properties of that function under full substitutability does not generalize. Likewise, the network flow approach to trading networks~\cite[]{Candogan2016} obtains structural results on the set of equilibria through the duality between optimal allocations and supporting prices. Since efficiency fails in our setting this duality approach does not generalize. Finally, techniques from trading networks without transfers~\cite[]{Ostrovsky2008} that rely on Tarski's fixpoint theorem do not apply to the our model. With transfers, the issue of tie-breaking arises that is not present in models without transfers and with strict preferences. While versions of Tarski's fixpoint theorem for correspondences are known~\cite[]{Zhou1994}, none of them work under sufficiently weak assumptions to be useful in our environment.

Since existing techniques do not work for our setting, we introduce a new approach to establish structural results for the set of competitive equilibria.  The approach can be characterized as a tie-breaking approach: we show that for each finite set of (equilibrium) price vectors and each firm a single-valued selection from the demand correspondence can be made such that the properties of full substitutability and the laws of aggregate demand and supply are satisfied by the selection, and moreover, relevant trades that are demanded in the supporting equilibrium allocations are demanded in the selection. The assumption of the Laws of Aggregate Demand and Supply is necessary for our tie-breaking argument, and our result does not hold under Full Substitutability alone (see Example~\ref{ex22}).

We also make a more technical contributions to the literature on trading networks with imperfectly transferable utility nad frictions and clarify issues related to the definition of Full Substitutability:
For the quasi-linear model, there are various equivalent definitions of Full Substitutability~\cite[]{Hatfield2018}. The equivalence, however, breaks down if we go beyond quasi-linear utility, and, for our results, it matters which of the full substitutability notions is used. More specifically, it matters how full substitutability restricts the demand at price vectors at which the demand is multi-valued.  We introduce weak notions of  Full Substitutability and the Laws of Aggregate Demand and Supply that only restrict the demand at price vectors where the demand is single-valued and stronger versions that also restrict it at prices where the demand is multi-valued. The notions are equivalent under quasi-linear utility, but not in general. 
 The set of competitive equilibria is a lattice only under the strong versions of Full Substitutability (see Example~\ref{ex}) and the rural hospitals theorem requires the strong versions of the law of aggregate demand and supply (see Example~\ref{ex3}). Our group-strategy-proofness result, however, holds also under the weaker notions (Corollary~\ref{weakGSP}). Thus, the exact definition of Full Substitutability matters in the model with frictions.\footnote{Related issues occur in~\cite{Hatfield:2018} where the stronger Monotone Substitutability property is needed that restricts the choice in circumstances where the choice correspondence is multi-valued. See our discussion in Appendix~\ref{Full}.}

We proceed as follows: In Section~2, we introduce the model and discuss different versions of the Full Substitutability conditions and their relation to each other. We provide a more detailed discussion of different Full Substitutability conditions in Appendix~\ref{Full}. In Section~3, we prove our main results: the lattice structure of the set of competitive equilibria, the generalized rural hospitals theorem, the existence of extremal equilibria, and group-incentive compatibility for terminal buyers.
In Section~4, we apply our main results to two-sided matching markets, and to exchange economies with indivisible goods. 

\subsection{Related Literature}
The literature on trading networks has its origins in the literature on matching markets with transfers. In a seminal paper, \cite{KelsoCrawford1982} show that, under the assumption of gross substitutability, competitive equilibria with personalized prices exist and are equivalent to core allocations in a many-to-one labor market matching model. The construction is by an approximation argument where the existence in the continuum is obtained from the existence of an equilibrium in a discrete markets with smaller and smaller price increments. Different versions of a strategy-proofness result for a many-to-one matching model with continuous transfers were established by \cite{Hatfieldetal2016,Schlegel2018,Jagadeesan2018}. Subsequent to~\cite{KelsoCrawford1982}, the question of existence of equilibria has been studied in the context of exchange economies with indivisibilities. See for example~\cite{GulStacchetti1999} and the recent contribution of~\cite{Baldwin2016}. 

Trading networks with bilateral contracts and continuous transfers were introduced by~\cite{Hatfield2013}. Under the assumption of quasi-linear utility and full substitutability they establish many results that we generalize to the case of general utility functions. The notion of full substitutability has been studied in detail by~\cite{Hatfield2018} who show the equivalence of various different definitions of full substitutability. The existence result of~\cite{Hatfield2013} is proved via a reduction to the existence result of~\cite{KelsoCrawford1982}. An alternative approach is via a submodular version of a network flow problem~\cite[]{Candogan2016}.

The work of~\cite{Hatfield2013} builds on the work of~\cite{Ostrovsky2008} on trading networks without transfers that generalizes matching models with contracts \cite[]{HatfieldMilgrom2005,Fleiner2003,Roth1984b} beyond two-sided markets. The matching model with contracts in turn originates in the discrete version of the model of \cite{KelsoCrawford1982}.
\cite{HatfieldKominers2012} and \cite{Fleiner2016} provide additional results for the discrete trading networks model, which in many ways are parallel to the results we obtain in the continuous model. Importantly, results for the model without continuous transfers rely on the assumption of strict preferences.

All the above mentioned work for the continuous models make the assumption of quasi-linear utility.\footnote{Note however that the existence proof of~\cite{KelsoCrawford1982} is actually more general and also applies to non-quasi-linear preferences, provided they are continuous, monotonic and unbounded in transfers for each bundle.} There are few papers that deal with non-quasi-linear utility functions, frictions or constraints and that are particularly close to our work: In a classical paper,~\cite{Demange.1985} establish several structural results about the core (or equivalently the set of competitive equilibria) for a one-to-one matching model with continuous transfers. In particular, they show that the core has a lattice structure and an agent that is unmatched in one core allocation receives his reservation utility in each core allocation (the result is often called the rural hospitals theorem in the literature on discrete matching markets). Moreover, they show that the mechanism that selects an extreme point of the bounded lattice is strategy-proof for one side of the market.
Importantly, these results are established without assuming quasi-linearity in transfers. They only require that utility is increasing, continuous in transfers and satisfies a full range assumption. We generalize this work to trading networks and to situations in which utility does not satisfy the full range assumption.

In recent work, \cite{Ravi} study trading networks with imperfectly transferable utility and frictions. Their work is in many regards complementary to our work. In particular, \cite{Ravi} establish the existence of a competitive equilibrium under the assumption of Full Substitutability and mild regularity conditions. Moreover, they study the efficiency of competitive equilibria and provide conditions under which equilibria correspond to allocations satisfying different related cooperative solution concepts. We derive our results for competitive equilibria. However, by the equivalence result of~\cite{Ravi} analogous results also would hold for "trail-stable" allocations. All results of~\cite{Hatfield2013}, except for the maximal domain result (Theorem~7) are generalized to general utility functions with frictions, either in our work or by~\cite{Ravi}. Table~\ref{table:HAproperties} in Appendix~\ref{yolo} summarizes known results for trading networks with frictions.

\cite{Kojimaetal2019} introduce constraints in the job matching model of ~\cite{KelsoCrawford1982} and characterize constraints that leave the gross substitutes condition invariant. The model with constraints is a special case of the model in the current paper so that we obtain as a corollary of our results a version of a lattice and of the rural hospital theorems for their model of job matching under constraints. In a spin-off paper,~\cite{Kojimaetal2020} study comparative statics for their model and also proof versions of the lattice result and the rural hospitals theorem. These results have been obtained independently and contemporaneously with the results in the current paper.\footnote{Weaker versions of these results were obtained prior to that in~\cite{Schlegel2018}.} 
\section{Model}
The model is based on~\cite{Hatfield2013}, and the extensions of~\cite{Ravi} and~\cite{Hatfield:2018}.
We consider a finite set of {\bf firms} $F$ and a finite set of {\bf trades} $\Omega$. Each trade $\omega\in \Omega$ is associated with a buyer $b(\omega)\in F$ and a seller $s(\omega)\in F$ with $b(\omega)\neq s(\omega)$.
For a set of trades $\Psi\subseteq \Omega$ and firm $f\in F$ we define the set of {\bf downstream trades} for $f$ by $\Psi_{f\rightarrow}:=\{\omega\in \Psi:s(\omega)=f\}$ and the set of {\bf upstream trades} by $\Psi_{\rightarrow f}:=\{\omega\in \Omega:b(x)=f\}$. Moreover, we let $\Psi_f:=\Psi_{f\rightarrow}\cup \Psi_{\rightarrow f}$.
A firm $f\in F$ such that $\Omega_{f\rightarrow}=\emptyset$ is called a {\bf terminal buyer} and a firm such that $\Omega_{\rightarrow f}=\emptyset$ is called a {\bf terminal buyer}. Note that terminal buyers and/or terminal buyers do not need to exist. A {\bf contract} is a pair $(\omega,p_{\omega})\in\Omega\times\mathbb{R}$, where $p_{\omega}$ is the price attached to the trade $\omega$.

 An {\bf allocation} is a pair $(\Psi,p)$ consisting of a set of trades $\Psi\subseteq\Omega$ and a price vector $p\in\mathbb{R}^{\Psi}$. We denote the set of allocations by $\mathcal{A}$ and we let $\mathcal{A}_f:=\{(\Psi_f,(p_{\omega})_{\omega\in\Psi_f}):(\Psi,p)\in\mathcal{A}\}$. An {\bf arrangement} is a pair $[\Psi,p]\in2^{\Omega}\times\mathbb{R}^{\Omega}$. In contrast to an allocation the price vector also contains prices for unrealized trades.

Each firm has a utility function $u^f:\mathcal{A}_f\rightarrow\mathbb{R}\cup\{-\infty\}$.
 For notational convenience we extend  $u^f$ to $2^{\Omega}\times\mathbb{R}^{\Omega}$ by defining for $\Psi\subseteq\Omega$ and $p\in\mathbb{R}^{\Omega}$, the utility $u^f (\Psi,p):=u^f(\Psi_f,(p_{\omega})_{\omega\in\Psi_f}).$
 We allow the utility function to take on a value of $-\infty$ to model technical or institutional constraints faced by a firm that make the execution of the combination of trades infeasible. We require that  \begin{itemize}
 	\item if a bundle is infeasible under some prices, then it is infeasible under all prices: if $u^f(\Psi,p)=-\infty$ for $p\in\mathbb{R}^{\Omega_f}$ then $u^f(\Psi,p')=-\infty$ for each $p'\in\mathbb{R}^{\Omega_f}$,
 	\item  at least one bundle of trades is feasible: there is a $\Psi\subseteq \Omega_f$ such that $u^f(\Psi,\cdot)>-\infty$.\footnote{In contrast to~\cite{Ravi} where it is always feasible to refuse all trades, we allow for the possibility that $u^f(\emptyset)=-\infty$. This does not really make our model more general, since we could alternatively apply a continuous and monotonic utility transformation to guarantee that utility for all feasible bundles is positive and assign zero utility to $\emptyset$.} 
 \end{itemize}
Moreover, we make the following assumptions on utility functions:
\begin{itemize}
\item {\bf Continuity}: For $\Psi\subseteq \Omega_f$ with $u^f(\Psi,\cdot)>-\infty$  the function $u^f(\Psi,\cdot)$ is continuous on $\mathbb{R}^{\Psi}$.
\item {\bf Monotonicity}:
For $\Psi\subseteq \Omega_f$ with $u^f(\Psi,\cdot)>-\infty$ and $p,p'\in\mathbb{R}^{\Psi}$ with $p'\neq p$:
\begin{enumerate}
\item If $p'_{\omega}=p_{\omega}$ for $\omega\in{\Psi_{f\rightarrow}}$ and $p_{\omega}\leq p'_{\omega}$ for $\omega\in\Psi_{\rightarrow f}$, then $u^f(\Psi,p)>u^f(\Psi,p')$.
\item If $p'_{\omega}=p_{\omega}$ for $\omega\in{\Psi_{\rightarrow f}}$ and $p_{\omega}\geq p'_{\omega}$ for $\omega\in\Psi_{f\rightarrow }$, then $u^f(\Psi,p)>u^f(\Psi,p')$.
\end{enumerate}
\end{itemize}
Thus, utility is continuous in prices and firms strictly prefer higher sell prices to lower sell prices and lower buy prices to higher buy prices.
\begin{remark}
Note that we allow utility for a set of trades $\Psi$ to be different for prices $p,p'\in\mathbb{R}^{\Psi}$, even if the transfers received are the same for both price vectors, i.e. even if $\sum_{\omega\in\Psi_{f\rightarrow}}p_{\omega}-\sum_{\omega\in\Psi_{\rightarrow f}}p_{\omega}=\sum_{\omega\in\Psi_{f\rightarrow}}p'_{\omega}-\sum_{\omega\in\Psi_{\rightarrow}f}p'_{\omega}.$ This can e.g.~be the case if different trades involve different transaction costs.

If utility only depends on the set of trades and the transfers received, we have a special case of our model: We say that $u^f$ satisfies {\bf no frictions}~\cite[]{Ravi} if there is a function $\tilde{u}^f:2^{\Omega_f}\times\mathbb{R}\rightarrow \mathbb{R}\cup\{-\infty\}$ such that $$u^f(\Psi,p)=\tilde{u}^f(\Psi,\sum_{\omega\in\Psi_{f\rightarrow}}p_{\omega}-\sum_{\omega\in\Psi_{\rightarrow f}}p_{\omega}).$$
{\bf Quasi-linear} utility corresponds to the case without frictions where $\tilde{u}^f$ is linear in transfers.  In this case there is a valuation function $v^f:2^{\Omega_f}\rightarrow\mathbb{R}\cup\{-\infty\}$ such that
$$\tilde{u}^f(\Psi,t)=v^f(\Psi)+t.$$\qed
\end{remark}
A utility function $u^f$ induces an {\bf indirect utility function} $v^f:\mathbb{R}^{\Omega}\rightarrow\mathbb{R}$ by $$v^f(p):=\max_{\Psi\subseteq\Omega_f}u^f(\Psi,p),$$
and a {\bf demand correspondence} $D^f:\mathbb{R}^{\Omega}\rightrightarrows2^{\Omega}$ by:
$$D^f(p):=\text{argmax}_{\Psi\subseteq \Omega_f}u^f(\Psi,p).$$
Continuity of the utility
function implies (e.g.~by Berge's maximum theorem) that the demand correspondence is upper hemi-continuous.
Monotonicity of the utility
function implies that price
vectors where the demand is single-valued are dense in price space. We will repeatedly use these facts to
generate a single-valued selection from the demand-correspondence by
perturbing the price vector such that it becomes single-valued. The proof is straightforward and hence omitted.
\begin{lemma}\label{Fede}
For a continuous and monotonic utility function $u^f$,
\begin{enumerate}
    \item the induced demand $D^f$ is upper hemi-continuous, i.e.~for each $p\in \mathbb{R}^{\Omega_f}$ there is an $\epsilon>0$ such for any $q\in\mathbb{R}^{\Omega_f}$ with $\|p-q\|<\epsilon$ (where $\|\cdot\|$ denotes the Euclidean norm) we have $D^{f}(q)\subseteq D^f(p)$,
    \item the set of price vectors such that the induced demand is single-valued is dense in $\mathbb{R}^{\Omega_f}$, i.e.~for each $\epsilon>0$ and $p\in \mathbb{R}^{\Omega_f}$ there is a $q\in\mathbb{R}^{\Omega_f}$ with $\|p-q\|<\epsilon$  such that $|D^f(q)|=1$.
\end{enumerate}
\end{lemma}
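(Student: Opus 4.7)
The key observation is that the maximization defining $D^f$ is over the fixed \emph{finite} set of feasible bundles $\mathcal{F} := \{\Psi \subseteq \Omega_f : u^f(\Psi,\cdot) > -\infty\}$ (feasibility is price-independent by the first bullet in the assumptions on $u^f$). Given this, part 1 is a finitary Berge-type argument, and for part 2 I show that any tie between two distinct feasible bundles is broken by perturbing a single price coordinate.

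\textbf{Part 1.} For every $\Psi \in \mathcal{F} \setminus D^f(p)$ we have $v^f(p) - u^f(\Psi,p) > 0$. Since $v^f$ is a finite maximum of continuous functions it is continuous, so by continuity of $u^f(\Psi,\cdot)$ there is $\epsilon(\Psi) > 0$ such that $v^f(q) > u^f(\Psi,q)$ whenever $\|p-q\| < \epsilon(\Psi)$; then $\Psi \notin D^f(q)$. Taking $\epsilon$ to be the minimum of the finitely many $\epsilon(\Psi)$ gives the required neighborhood, because any $\Psi \in D^f(q)$ automatically lies in $\mathcal{F}$ and hence, on that neighborhood, in $D^f(p)$.

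\textbf{Part 2.} For distinct $\Psi, \Psi' \in \mathcal{F}$, define
\[
T_{\Psi,\Psi'} := \{q \in \mathbb{R}^{\Omega_f} : u^f(\Psi,q) = u^f(\Psi',q)\}.
\]
Then $\{q : |D^f(q)| > 1\} \subseteq \bigcup_{\Psi \neq \Psi'} T_{\Psi,\Psi'}$, a finite union, and each $T_{\Psi,\Psi'}$ is closed by continuity. The central step is to show each has empty interior. Pick any $\omega \in \Psi \triangle \Psi'$; without loss of generality $\omega \in \Psi \setminus \Psi'$, and note $\omega \in \Omega_f$ since $\Psi,\Psi' \subseteq \Omega_f$. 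Because the extension rule makes $u^f(\Psi,q)$ depend only on $(q_\eta)_{\eta \in \Psi}$ and $u^f(\Psi',q)$ only on $(q_\eta)_{\eta \in \Psi'}$, the coordinate $q_\omega$ enters only the first. By monotonicity, perturbing $q_\omega$ in the appropriate direction (upward if $\omega$ is downstream for $f$, downward if upstream) strictly changes $u^f(\Psi,q)$ while $u^f(\Psi',q)$ stays fixed, so the equality is broken. Hence each $T_{\Psi,\Psi'}$ is closed with empty interior, the finite union is nowhere dense, and its complement intersects every neighborhood of $p$, yielding the required $q$ with $|D^f(q)|=1$.

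\textbf{Main difficulty.} There is no substantive obstacle; the only care needed is that monotonicity is a strict \emph{one-sided} property (not a nonvanishing derivative), so the tie-breaking perturbation must use the sign dictated by whether the distinguishing trade $\omega$ is upstream or downstream for $f$. Once this sign is chosen correctly, finiteness of $\mathcal{F}$ converts both statements into routine continuity arguments.
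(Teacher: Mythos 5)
Your proof is correct. The paper omits this proof as ``straightforward,'' pointing only to Berge's maximum theorem for part 1 and monotonicity for part 2, and your argument is precisely the intended one: a finite-choice specialization of Berge for upper hemi-continuity, and for part 2 the observation that the tie set between any two distinct feasible bundles is closed with empty interior (broken by perturbing a coordinate in the symmetric difference in the sign dictated by whether it is upstream or downstream), so that the union of finitely many such sets is nowhere dense.
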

\subsection{Full Substitutability}\label{FSS}
Our results rely on a full substitutability assumption on utility
functions. Informally, the condition requires that a firm sees
upstream (downstream) trades as substitutes to each other, and
upstream and downstream trades as complements to each
other.~\cite{Hatfield2018} show that for quasi-linear utility
various ways of defining full substitutability are equivalent, and
hence one can work with either of the definitions discussed in their
paper. Going beyond quasi-linear utility makes issues more subtle:
Not all equivalence results of~\cite{Hatfield2018} generalize to
non-quasi-linear utility and it matters which of the full
substitutability conditions are used. More specifically, it matters
how the full substitutes condition is defined in instances where
indifferences matter, i.e.~when the demand is multi-valued.

We will proceed as follows: First, we introduce our main definition of full
substitutability which restrict the demand both at price vectors where the demand is single-valued and where it is multi-valued. Second, we introduce the weak version of full
substitutability that only restricts the demand at price vectors where the demand is single-valued.  We provide an example that shows that the single-valued version of full substitutability is strictly weaker than the multi-valued version. We later show, using this example, that the single-valued full substitutability condition is not sufficient for establishing the lattice and the rural hospitals theorem. Importantly, the difference between the single-valued and multi-valued version of full substitutability only matters for the "cross-side conditions" on firms' demand functions. In particular, the notions are equivalent for a two-sided market and the results for two-sided markets (see Section~\ref{two}) hold under the single-valued notion of full substitutability. Third, we show that the multi-valued and the single-valued versions are, however, closely related in the sense that for each demand correspondence satisfying weak full substitutability, a selection from the demand correspondence exists that satisfies (multi-valued) full substitutability and can be rationalized by a utility function inducing the same indirect utility. In particular, this will allow us, later on (Corollary~\ref{weakGSP}), to obtain a group-strategy-proofness result using only the weak full substitutability notion. 

In Appendix~\ref{Full} we discuss in more detail the logical relation between different notions of full substitutability. In particular, we show (Corollary~\ref{MON}) that under the assumption of continuity and monotonicity, monotone substitutability as defined in~\cite{Hatfield:2018}, is equivalent to the combination of (the multi-valued versions of) full substitutability, the laws of aggregate demand and supply. Thus, the main results in our paper hold under (the demand language version of) the substitutability notion used by~\cite{Hatfield:2018} for their main result.

\subsubsection{Multi-Valued Full Substitutability}
The following notion of full substitutability is due to~\cite{Hatfield2018}.\footnote{\citealp{Hatfield2018} call this version of full substitutability the "demand-language expansion" version of full
	substitutability (cf. Definition~A.3 in~\citealp{Hatfield2018}). There are alternative multi-valued definitions of full substitutability which we discuss in Appendix~\ref{Full}.	
	 Throughout the paper, we use ``demand language" definitions of full substitubility that restrict the demand correspondence induced by the utility function. The demand language definitions are generally weaker than the
	corresponding ``choice language" definitions that restrict the choice correspondence induced by the utility function. Consequently, all of our result would also hold under the corresponding ``choice language" notions of full substitutability.} Precursors of the full substitutability notion were introduced for exchange economies \cite[]{Inder} and for trading networks without transfers~\cite[]{Ostrovsky2008}. Full substitutability can be further decomposed in a same-side substitutability and a cross-side complementarity notion.
  \newline\newline
     \noindent
{\bf (Expansion) Same-Side Substitutability (SSS)}: For
$p,p'\in\mathbb{R}^{\Omega_f}$ and each $\Psi'\in D^f(p')$ there
exists a $\Psi\in D^f(p)$ such that if $p_{\omega}= p'_{\omega}$ for
$\omega\in\Omega_{f\rightarrow}$ and $p_{\omega}\leq p'_{\omega}$
for $\omega\in \Omega_{\rightarrow f} $, then
$$\{\omega\in\Psi_{\rightarrow f}:p_{\omega}=p'_{\omega}\}\subseteq
\Psi'_{\rightarrow f},$$ and if $p_{\omega}= p'_{\omega}$ for
$\omega\in\Omega_{\rightarrow f}$ and $p_{\omega}\geq p'_{\omega}$
for $\omega\in \Omega_{f\rightarrow } $, then
$$\{\omega\in\Psi_{f\rightarrow }:p_{\omega}=p'_{\omega}\}\subseteq \Psi'_{f\rightarrow }.$$

     \noindent {\bf (Expansion) Cross-Side Complementarity (CSC)}: For $p,p'\in\mathbb{R}^{\Omega_f}$ and each $\Psi'\in D^f(p')$ there exists a $\Psi\in D^f(p)$ such if  $p_{\omega}= p'_{\omega}$ for $\omega\in\Omega_{f\rightarrow}$ and $p_{\omega}\leq p'_{\omega}$ for $\omega\in \Omega_{\rightarrow f} $, then $$\Psi'_{f\rightarrow}\subseteq\Psi_{f\rightarrow},$$
and if $p_{\omega}= p'_{\omega}$ for $\omega\in\Omega_{\rightarrow f}$ and $p_{\omega}\geq p'_{\omega}$ for $\omega\in \Omega_{f\rightarrow } $, then $$\Psi'_{\rightarrow f}\subseteq\Psi_{\rightarrow f}.$$

The combination of the two properties is called full substitutability.
\newline\newline
     \noindent {\bf (Expansion) Full Substitutability (FS)}: The demand of firm $f$ satisfies Expansion Full Substitutability if it satisfies Expansion Same-Side Substitutability and Expansion Cross-Side Complementarity.\newline
Next we introduce monotonicity properties called the Law of Aggregate Demand respectively the Law of Aggregate Supply. Under quasi-linear utility, the two properties are implied by full substitutability. However, in general they are independent of full substitutability.\newline\newline
 \noindent  {\bf Law of Aggregate  Demand  (LAD)}: For $p,p'\in\mathbb{R}^{\Omega_f}$ and each $\Psi'\in D^f(p')$ there exists a $\Psi\in D^f(p)$ such that if  $p_{\omega}= p'_{\omega}$ for $\omega\in\Omega_{f\rightarrow}$ and $p_{\omega}\leq p'_{\omega}$ for $\omega\in \Omega_{\rightarrow f} $, then $$|\Psi_{\rightarrow f}|- |\Psi_{f\rightarrow}|\geq |\Psi'_{\rightarrow f}|- |\Psi'_{f\rightarrow}|.$$\newline
 \noindent  {\bf Law of Aggregate  Supply (LAS)}: For $p,p'\in\mathbb{R}^{\Omega_f}$ and each $\Psi'\in D^f(p')$ there exists a $\Psi\in D^f(p)$ such that if  $p_{\omega}= p'_{\omega}$ for $\omega\in\Omega_{\rightarrow f}$ and $p_{\omega}\geq p'_{\omega}$ for $\omega\in \Omega_{f\rightarrow} $, then $$|\Psi_{f\rightarrow}|- |\Psi_{\rightarrow f}|\geq |\Psi'_{f\rightarrow}|- |\Psi'_{\rightarrow f}|.$$
\subsubsection{Single-Valued Full Substitutability}
Next we introduce the weaker notion of full substitutability where the condition only needs to hold at price vectors where the demand is single-valued.
\newline\newline
\noindent
{\bf Weak Full Substitutability (weak FS)}:
For $p,p'\in\mathbb{R}^{\Omega_f}$ such that $D^f(p)=\{\Psi\}$ and $D^f(p')=\{\Psi'\}$, if $p_{\omega}= p'_{\omega}$ for $\omega\in\Omega_{f\rightarrow}$ and $p_{\omega}\leq p'_{\omega}$ for $\omega\in \Omega_{\rightarrow f} $, then $$\{\omega\in\Psi_{\rightarrow f}:p_{\omega}=p'_{\omega}\}\subseteq \Psi'_{\rightarrow f}\text{ and }\Psi'_{f\rightarrow}\subseteq\Psi_{f\rightarrow},$$
and if $p_{\omega}= p'_{\omega}$ for $\omega\in\Omega_{\rightarrow f}$ and $p_{\omega}\geq p'_{\omega}$ for $\omega\in \Omega_{f\rightarrow } $, then
$$\{\omega\in\Psi_{f\rightarrow }:p_{\omega}=p'_{\omega}\}\subseteq \Psi'_{f\rightarrow }\text{ and }\Psi'_{\rightarrow f}\subseteq\Psi_{\rightarrow f}.$$
\begin{remark}
	The weak and the expansion notions of same sided substitutability are equivalent. See Proposition~\ref{B.1} in Appendix~\ref{Full}. Thus, for two-sided markets the two notions of full substitutability are  equivalent.\qed
\end{remark}
Similarly, we can define weak versions of the laws of aggregate demand and supply.
\newline\newline
\noindent  {\bf Weak Law of Aggregate  Demand  (Weak LAD)}: For $p,p'\in\mathbb{R}^{\Omega_f}$ such that $D^f(p)=\{\Psi\}$ and $D^f(p')=\{\Psi'\}$, if  $p_{\omega}= p'_{\omega}$ for $\omega\in\Omega_{f\rightarrow}$ and $p_{\omega}\leq p'_{\omega}$ for $\omega\in \Omega_{\rightarrow f} $, then $$|\Psi_{\rightarrow f}|- |\Psi_{f\rightarrow}|\geq |\Psi'_{\rightarrow f}|- |\Psi'_{f\rightarrow}|.$$\newline
\noindent  {\bf Weak Law of Aggregate  Supply (Weak LAS)}: For $p,p'\in\mathbb{R}^{\Omega_f}$ such that $D^f(p)=\{\Psi\}$ and $D^f(p')=\{\Psi'\}$, if  $p_{\omega}= p'_{\omega}$ for $\omega\in\Omega_{\rightarrow f}$ and $p_{\omega}\geq p'_{\omega}$ for $\omega\in \Omega_{f\rightarrow} $, then $$|\Psi_{f\rightarrow}|- |\Psi_{\rightarrow f}|\geq|\Psi'_{f\rightarrow}|- |\Psi'_{\rightarrow f}|.$$\newline
The following example shows that the two notions of full substitutability that we have defined can differ for non-quasi-linear utility. In Section~\ref{lattice}, we will use the example to show that under weak FS the lattice result  in our paper does not necessarily hold.
Similar examples show that weak LAD/weak LAS is strictly weaker than LAD/LAS, see Example~\ref{ex3} in Appendix~\ref{Full}. 
%
%
\begin{example}\label{ex}
	Consider four trades $\Omega=\{\alpha^1,\alpha^2,\beta^1,\beta^2\}$ with $f=b(\alpha^1)=b(\alpha^2)=s(\beta^1)=s(\beta^2)$.
	We let
	\begin{align*}
	&u^{f}(\emptyset)=0,\\
	&u^f(\{\alpha^i,\beta^j\},p_{\alpha^i},p_{\beta^j})=2-p_{\alpha^i}+p_{\beta^j},\text{ for }i,j,=1,2,\\
	&u^{f}(\{\alpha^1,\alpha^2,\beta^1,\beta^2\},p)=
	4-\exp\left(\frac{p_{\alpha^1}+p_{\alpha^2}}{2}-1\right)
	-\exp\left(1-\frac{p_{\beta^1}+p_{\beta^2}}{2}\right).
	\end{align*}
	We let $u^f(\Psi,p)=-\infty$ for any other $\Psi\subseteq\Omega$.
	Observe that $$D^f(1,1,1,1)=\{\{\alpha^1,\beta^1\},\{\alpha^1,\beta^2\},\{\alpha^2,\beta^1\},\{\alpha^2,\beta^2\},\{\alpha^1,\alpha^2,\beta^1,\beta^2\}\}$$
	but $$D^f(0,1,1,1)=\{\{\alpha^1,\beta^1\},\{\alpha^1,\beta^2\}\}.$$
	As $\{\alpha^1,\alpha^2,\beta^1,\beta^2\}\in D^f(1,1,1,1)$, FS would require that there is a $\Psi\in D^f(0,1,1,1)$ with $\{\beta^1,\beta^2\}\subseteq \Psi$.  Hence FS is not satisfied.  As the demand at $(0,1,1,1)$ and $(1,1,1,1)$ is multi-valued, Weak FS does not impose any structure here. More generally, note that the bundle $\{\alpha^{1},\alpha^2,\beta^1,\beta^2\}$ is only demanded at prices $(1,1,1,1)$ so that if we replace $u^f$ by the utility function $\tilde{u}^f$ such that
	\begin{align*}
	&\tilde{u}^f(\{\alpha^1,\alpha^2,\beta^1,\beta^2\},\cdot)=-\infty\\
	&\tilde{u}^f(\Psi,\cdot)=u^f(\Psi,\cdot)\text{ for }\Psi\neq\{\alpha^1,\alpha^2,\beta^1,\beta^2\},
	\end{align*}
	only the demand at prices $(1,1,1,1)$ changes. One readily checks that $\tilde{u}^f$ satisfies FS. Hence $u^f$ satisfies Weak FS. Note, moreover, that $u^f$ satisfied LAD and LAS.
	\qed
\end{example}
\begin{remark}\label{rmk}
In the example, the bundle $\{\alpha^1,\alpha^2,\beta^1,\beta^2\}$ is only demanded at prices $(1,1,1,1)$, but not at any price vector in the neighborhood. One can show (see Proposition~\ref{single} in Appendix~\ref{Full}) that if there are no such "isolated" bundles, i.e.~bundles that are demanded at a price vector but nowhere in the neighborhood of it, then weak FS and FS are equivalent. Formally this requirement is the following:\newline\newline 
\noindent  {\bf No Isolated Bundles (NIB)}: For each $p\in\mathbb{R}^{\Omega_f}$, $\Psi\in D^f(p)$ and $\epsilon>0$ there is a $q\in\mathbb{R}^{\Omega_f}$ with $\|p-q\|<\epsilon$ and $D^f(q)=\{\Psi\}.$\newline\newline
It is not generally true that, conversely, FS implies NIB (see Example~\ref{ex3} in Appendix~\ref{Full}). However, CSC (SSS is not necessary here), LAD and LAS together imply NIB (see Proposition~\ref{NIB} in Appendix~\ref{Full}). This observation will be important for the proof of our main result, Theorem~\ref{Util}.\qed
\end{remark}
Next we show that for each utility function satisfying weak FS, weak LAD and weak LAS,  we can construct a demand correspondence that satisfies FS, LAD and LAS by removing isolated bundles from the original demand. The resulting demand can be rationalized by a continuous and monotonic utility function. Put differently, we show that the two notions of Full
Substitutability are almost equivalent in the following sense: for
each utility function $u^f$ for which the induced demand $D^f$ satisfies weak FS, weak LAD, and weak LAS, there is a
corresponding utility function $\tilde{u}^f$ for which the induced demand $\tilde{D}^f$ satisfies FS, LAD, and LAS, and such that $\tilde{D}^f$ selects from $D^f$. The utility function can be chosen such that the induced indirect utility is the same. The result is proved in Appendix~B.
\begin{proposition}\label{weak}
	Let $u^f$ satisfy weak FS, weak LAD, and weak LAS. Then there is a utility function $\tilde{u}^f$ that satisfies FS, LAD, and LAS such that the induced indirect utility functions are the same $$v^f(p)=\tilde{v}^f(p)\text{ for each }p\in\mathbb{R}^{\Omega_f},$$
	and the induced demand is a selection from the original demand,
	$$\tilde{D}^f(p)\subseteq D^f(p)\text{ for each }p\in\mathbb{R}^{\Omega_f}.$$
\end{proposition}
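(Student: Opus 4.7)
The plan is to construct $\tilde{u}^f$ by removing from $u^f$'s domain those bundles that are never uniquely demanded. Specifically, let $R_\Psi := \{p \in \mathbb{R}^{\Omega_f} : D^f(p) = \{\Psi\}\}$ and define $\tilde{u}^f(\Psi,\cdot) := u^f(\Psi,\cdot)$ if $R_\Psi \neq \emptyset$ and $\tilde{u}^f(\Psi,\cdot) := -\infty$ otherwise. Continuity, monotonicity, and the all-or-nothing feasibility convention are preserved by construction, and $\tilde{D}^f(p) \subseteq D^f(p)$ is immediate. The indirect-utility equality $\tilde{v}^f(p)=v^f(p)$ follows from Lemma~\ref{Fede}: for any $p$, density of single-valued prices together with the finiteness of $2^{\Omega_f}$ yields, via pigeonhole and upper hemi-continuity, a bundle $\Psi^* \in D^f(p)$ with $R_{\Psi^*} \neq \emptyset$, which remains feasible under $\tilde{u}^f$ and attains $v^f(p)$.

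The substantive task is to verify the multi-valued properties FS, LAD, and LAS for $\tilde{u}^f$. I would route this through the two auxiliary results flagged in Remark~\ref{rmk}: Proposition~\ref{NIB} (CSC $+$ LAD $+$ LAS $\Rightarrow$ NIB) and Proposition~\ref{single} (NIB $+$ weak FS $\Rightarrow$ FS). Because $\tilde{D}^f$ agrees with $D^f$ at every single-valued price of $u^f$, weak FS, weak LAD, and weak LAS transfer from $u^f$ to $\tilde{u}^f$ directly. It therefore suffices to lift weak LAD, weak LAS, and weak CSC to their multi-valued counterparts for $\tilde{u}^f$; NIB and then full FS follow automatically. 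Each lifting is a density-and-limit argument: given $\tilde{\Psi}' \in \tilde{D}^f(p')$, take single-valued $q'_n \to p'$ with $D^f(q'_n) = \{\tilde{\Psi}'\}$; for each $n$, choose a single-valued perturbation $q_n \to p$ within the affine subspace enforcing the prescribed equality-on-one-side pattern (using Lemma~\ref{Fede} applied to the restriction of $u^f$ to that subspace); invoke weak CSC, weak LAD, or weak LAS at the pair $(q_n, q'_n)$; and extract a constant subsequence whose limit, by upper hemi-continuity, lies in $D^f(p)$ and whose uniquely-demanded region contains the $q_n$, placing it in $\tilde{D}^f(p)$ and witnessing the required containment or cardinality inequality against $\tilde{\Psi}'$.

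The main obstacle is the step that requires $p' \in \overline{R_{\tilde{\Psi}'}}$ whenever $\tilde{\Psi}' \in \tilde{D}^f(p')$; the construction only guarantees $R_{\tilde{\Psi}'}$ nonempty, not that it accumulates at $p'$. One has to rule out the pathology where $\tilde{\Psi}'$ is uniquely demanded in some region of price space but merely appears as a tied bundle at the isolated point $p'$. My plan is to handle this by iterating the remove-the-never-uniquely-demanded operation; since $2^{\Omega_f}$ is finite the iteration stabilizes after finitely many rounds, and at the fixed point every feasible bundle must be uniquely demanded in every neighborhood of every price at which it is in demand. An alternative, more direct route is to exploit strict monotonicity: for each $\Phi \in D^f(p') \setminus \{\tilde{\Psi}'\}$ the symmetric difference $\Phi \triangle \tilde{\Psi}'$ is nonempty, so a coordinate-wise perturbation in a direction that strictly favors $\tilde{\Psi}'$ over $\Phi$ exists; combining finitely many such directions should yield a nearby price at which $\tilde{\Psi}'$ is uniquely demanded. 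A secondary technical obstacle, the existence of single-valued $q_n$ within the affine equality-pattern subspace, reduces to Lemma~\ref{Fede} applied to the restriction of $u^f$ to that subspace and is expected to be routine.
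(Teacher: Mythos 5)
Your construction removes a bundle $\Psi$ \emph{globally} (set $\tilde{u}^f(\Psi,\cdot)=-\infty$) whenever $R_\Psi=\emptyset$, whereas the paper's construction removes $\Psi$ \emph{locally at $p$} whenever $p\notin\overline{R_\Psi}$. The paper implements this by defining $\tilde{D}^f(p):=\{\Psi: p\in\overline{R_\Psi}\}$ and rationalizing it through $\tilde{u}^f(\Psi,p):=u^f(\Psi,p)-d\bigl(p|_\Psi,R_\Psi^{pr}\bigr)$, where $R_\Psi^{pr}$ is the projection of $R_\Psi$ onto $\mathbb{R}^\Psi$. With that definition NIB for $\tilde{D}^f$ is true by fiat, and the real work shifts to verifying monotonicity of $\tilde{u}^f$ (this uses Lemma~\ref{DemandInv}) and showing that the single-valued loci of $D^f$ and $\tilde{D}^f$ coincide, so that the weak conditions genuinely transfer. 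This difference is not cosmetic: your construction leaves a bundle "on" everywhere once it is uniquely demanded \emph{somewhere}, so a bundle that is uniquely demanded in one region of price space but only appears as a tie at a distant, isolated price $p'$ survives under $\tilde{u}^f$ and still violates NIB at $p'$. This is precisely the pathology you flag in your third paragraph, and it is not ruled out by weak FS, weak LAD, and weak LAS.

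Neither of your proposed repairs closes the gap. Iteration is a no-op: removing bundles can only shrink $D^f$, so it can only \emph{enlarge} every $R_\Psi$; after the first pass all surviving $\Psi$ already have $R_\Psi\neq\emptyset$, nothing further is removed, and the problematic isolated ties remain. The strict-monotonicity perturbation is also insufficient. Starting from $p'$ and $\Psi'\in\tilde{D}^f(p')$, increasing $p_\omega$ for $\omega\in\Omega_{\rightarrow f}\setminus\Psi'$ and decreasing $p_\omega$ for $\omega\in\Omega_{f\rightarrow}\setminus\Psi'$ gives $q$ with $\Psi'\in D^f(q)\subseteq 2^{\Psi'}$, but it does \emph{not} give $D^f(q)=\{\Psi'\}$: ties with proper subsets $\Phi\subsetneq\Psi'$ are untouched because $u^f(\Phi,\cdot)$ and $u^f(\Psi',\cdot)$ depend on the same coordinates $p|_\Phi\subseteq p|_{\Psi'}$, which you held fixed, and moving those coordinates shifts both utilities in the same direction with an ambiguous net effect. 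Breaking these remaining subset ties is exactly what the proof of Proposition~\ref{NIB} does, and it requires the \emph{multi-valued} CSC, LAD, and LAS — the very properties you are trying to establish for $\tilde{u}^f$ — so routing through that argument here is circular. (Example~\ref{ex3} is designed to exhibit this class of pathology under the weak conditions.) The upshot: the parts of your outline covering the selection property, the indirect-utility equality, and the "density-and-limit" lifting given NIB are sound, but the construction itself must be the closure-based local one, not the global remove-infeasible one, before those lifts can be invoked.
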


\section{Results}\label{lattice}
\subsection{The Lattice Theorem and the Rural Hospitals Theorem}
As our first main result we establish that equilibrium prices in trading networks form a lattice and that (modulo indifferences) for each firm the difference between the number of signed upstream and downstream contracts is the same in each equilibrium. The join and meet are the coordinate-wise maximum and minimum of the two price vectors under consideration, i.e.~the lattice is a sublattice of $\mathbb{R}^{\Omega}$ with the usual partial order. These results extend results established by~\cite{Hatfield2013} for the case of quasi-linear utility functions.

In the following, a {\bf competitive equilibrium} for utility profile $u=(u^f)_{f\in F}$ is an arrangement $[\Psi,p]\in2^{\Omega}\times\mathbb{R}^{\Omega}$ such that for each $f\in F$ and the demand $D^f$ induced by $u^f$ we have $\Psi_f\in D^f(p)$. We call $(\Psi,(p_{\omega})_{\omega\in\Psi})$ the {\bf equilibrium allocation} induced by $[\Psi,p].$ We denote the set of equilibrium price vectors for $u$ by $\mathcal{E}(u)$ and define for each price vector $p\in\mathbb{R}^{\Omega}$ the (possibly empty) set $\mathcal{E}(u,p):=\{\Psi\subseteq\Omega:\Psi_f\in D^f(p)\text{ for each }f\in F\}$ of sets of trades that support $p$ as a competitive equilibrium under $u$.

\begin{theorem}\label{Util}
Let $u$ be a utility profile such that for each firm the induced demand satisfies FS, LAD and LAS.
\begin{enumerate}
\item {\bf Lattice Theorem}: Let $p,p'\in\mathcal{E}(u)$ be equilibrium prices. Then $\bar{p},\underline{p}\in\mathbb{R}^{\Omega}$ defined by $$\bar{p}_{\omega}:=\max\{p_{\omega},p'_{\omega}\},\quad \underline{p}_{\omega}:=\min\{p_{\omega},p'_{\omega}\},$$
are equilibrium prices.
\item {\bf Rural Hospitals Theorem}: Let $p,p'\in\mathcal{E}(u)$ be equilibrium prices. For each $\Psi\in\mathcal{E}(u,p)$ there exists a $\Psi'\in\mathcal{E}(u,p')$ such that for each $f\in F$ we have $|\Psi_{\rightarrow f}|-|\Psi_{f\rightarrow }|=|\Psi'_{\rightarrow f}|-|\Psi'_{f\rightarrow }|$.
\end{enumerate}
\end{theorem}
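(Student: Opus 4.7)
The plan is to prove both parts by a tie-breaking argument. Given $\Psi \in \mathcal{E}(u,p)$ and $\Psi' \in \mathcal{E}(u,p')$, I would define the candidate allocations
\begin{align*}
\bar{\Psi} &:= \{\omega \in \Psi : \bar{p}_\omega = p_\omega\} \cup \{\omega \in \Psi' : \bar{p}_\omega = p'_\omega\}, \\
\underline{\Psi} &:= \{\omega \in \Psi : \underline{p}_\omega = p_\omega\} \cup \{\omega \in \Psi' : \underline{p}_\omega = p'_\omega\},
\end{align*}
which are automatically trade-consistent because membership depends only on the shared price vectors, on which buyer and seller necessarily agree. The aim is to show that $\bar{\Psi}_f \in D^f(\bar{p})$ and $\underline{\Psi}_f \in D^f(\underline{p})$ for every firm $f$, and that $|\Psi_{\rightarrow f}|-|\Psi_{f\rightarrow}|$, $|\Psi'_{\rightarrow f}|-|\Psi'_{f\rightarrow}|$, $|\bar{\Psi}_{\rightarrow f}|-|\bar{\Psi}_{f\rightarrow}|$, and $|\underline{\Psi}_{\rightarrow f}|-|\underline{\Psi}_{f\rightarrow}|$ all coincide.

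The first step is a tie-breaking lemma: for the finite set of prices $\{p,p',\bar{p},\underline{p}\}$ and each firm $f$, I would construct a single-valued selection $\widehat{\Psi}_f^q \in D^f(q)$ satisfying FS, LAD, and LAS pairwise among these prices, with $\widehat{\Psi}_f^p = \Psi_f$ and $\widehat{\Psi}_f^{p'} = \Psi'_f$. The enabling tool is the No Isolated Bundles property (Proposition~\ref{NIB}), which under CSC together with LAD and LAS guarantees that every demanded bundle is the single-valued limit of demands at nearby prices, so that a representative selection can be made without leaving $D^f$. The construction has to be done jointly across firms, since each trade is shared by a buyer and a seller; consistency then forces the selections at $\bar{p}$ and $\underline{p}$ to be exactly $\bar{\Psi}_f$ and $\underline{\Psi}_f$.

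The second step is the classical FS argument applied to these single-valued selections. Fixing $f$, I would compare $\widehat{\Psi}_f^p = \Psi_f$ with $\widehat{\Psi}_f^{\bar{p}}$ by decomposing the move $p \to \bar{p}$ into (A) an upstream-only sub-move in which upstream prices for $f$ rise from $p$ to $\bar{p}$ while downstream prices are held, and (B) a downstream-only sub-move completing the transition. On (A), SSS retains exactly the upstream trades of $\Psi$ with $\bar{p}_\omega = p_\omega$, while CSC contracts the downstream demand; on (B) the symmetric halves of SSS and CSC apply. A parallel chain starting from $\widehat{\Psi}_f^{p'} = \Psi'_f$ accounts for the trades of $\Psi'$ with $\bar{p}_\omega = p'_\omega$. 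Combining the two chains pins down $\widehat{\Psi}_f^{\bar{p}} = \bar{\Psi}_f$, and the symmetric argument yields $\widehat{\Psi}_f^{\underline{p}} = \underline{\Psi}_f$.

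The rural hospitals claim follows by running LAD and LAS along the same two sub-moves: each monotone transition produces a one-sided inequality on the net count $|{\cdot}_{\rightarrow f}|-|{\cdot}_{f\rightarrow}|$, and chaining the inequalities through both $\bar{p}$ and $\underline{p}$ collapses them into the required equalities. The principal obstacle is the first step: simultaneous tie-breaking across firms is delicate precisely because trades link firms, so one cannot simply pick nearby single-valued perturbations firm by firm and recombine. NIB, which is itself the main technical payoff of combining CSC with LAD and LAS, is what makes the joint selection possible on a fixed finite set of prices; without it, the Example~\ref{ex} pathology would obstruct the argument.
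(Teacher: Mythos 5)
Your high-level plan — tie-break to single-valued selections, run FS/LAD/LAS, then close with a counting argument over firms — is the right flavor, but the specific lemma you propose in the first step cannot be proved, and the candidate allocation you write down is not the right one.

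The step that fails is the joint tie-breaking: you ask for a single-valued selection $\widehat{\Psi}_f^q$ on $\{p,p',\bar p,\underline{p}\}$ that satisfies FS/LAD/LAS \emph{and} is pinned to $\widehat{\Psi}_f^p=\Psi_f$ \emph{and} $\widehat{\Psi}_f^{p'}=\Psi'_f$ simultaneously. NIB does give, for each price vector separately, a nearby perturbation that isolates any chosen bundle, but those two perturbations (the one isolating $\Psi_f$ at $p$, and the one isolating $\Psi'_f$ at $p'$) are in general different translations, and FS/LAD/LAS on a finite set of prices is obtained precisely by translating \emph{all} of them by a common shift (as in Lemma~\ref{12}); incompatible shifts at $p$ and $p'$ destroy the property. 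Moreover, nothing about $\Psi_f$ and $\Psi'_f$ being equilibrium supports forces them to satisfy the FS inclusions that a single-valued fully-substitutable selection would impose between $p$ and $p'$. The paper's Lemma~\ref{lemma} deliberately avoids this: it pins the selection only at (a perturbation of) $p$ equal to $\Psi_f$, and at $p'$ it only controls the selection on trades where $p_\omega\neq p'_\omega$ (via the perturbation $\epsilon'$ and Claim~\ref{clcl}); the uncontrolled ties are handled by the asymmetric definition of the candidate and the final counting argument. This is also why your candidate is off: you define $\bar\Psi$ with the symmetric conditions $\bar p_\omega=p_\omega$ (i.e.\ $p_\omega\geq p'_\omega$) on the $\Psi$ side and $\bar p_\omega=p'_\omega$ (i.e.\ $p'_\omega\geq p_\omega$) on the $\Psi'$ side, so on ties $p_\omega=p'_\omega$ a trade in $\Psi'\setminus\Psi$ is included, whereas the paper's $\overline{\Xi}$ uses $p_\omega\geq p'_\omega$ and the \emph{strict} $p'_\omega>p_\omega$, resolving ties toward $\Psi$. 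Your candidate can thus be strictly larger than the paper's, and the inclusions that Lemma~\ref{lemma} actually delivers do not sandwich it; the counting argument would not close. Finally, the paper does not run the tie-breaking jointly across firms at all — Lemma~\ref{lemma} is applied firm by firm, and the cross-firm consistency falls out of summing the LAD/LAS inequalities and invoking trade-consistency of $\overline{\Xi}$, not out of a joint selection.
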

The proof and all subsequent proofs of this section are in Appendix~\ref{anotherApp}. However, for the moment we give a sketch of the proof strategy and comment on the challenges when generalizing from quasi-linear to more general preferences.
Let $p,p'\in\mathcal{E}(u)$ be two equilibrium price vectors and consider the pairwise maximum $\bar{p}\in\mathbb{R}^{\Omega}$ (a dual argument works for the pairwise minimum $\underline{p}$). Suppose for the moment that the demand for each firm is single-valued at $p$ and at $p'$, i.e.~each firm $f$ has unique optimal bundles of trades $\Psi_f$ and $\Psi_f'$ at the equilibrium prices $p$ and $p'$. For each firm $f$ let $\bar{\Psi}_f\in D^f(\bar{p})$ be a bundle demanded at $\bar{p}$. Full Substitutability for individual firms implies full substitutability for the aggregate demand so that we obtain
$$\bigcup_{f\in F}\bar{\Psi}_{f\rightarrow}\subseteq\{\omega\in\Psi:p_{\omega}\geq p'_{\omega}\}\cup\{\omega\in\Psi':p_{\omega}'\geq p_{\omega}\}\subseteq\bigcup_{f\in F}\bar{\Psi}_{\rightarrow f}.$$
The equation states that there is no excess supply of trades at $\bar{p}.$
The laws of aggregate demand and supply, can be used to show that $$|\bigcup_{f\in F}\bar{\Psi}_{f\rightarrow}|\geq|\bigcup_{f\in F}\bar{\Psi}_{\rightarrow f}|.$$
Combining these two observations shows that the market clears, i.e.
$$\bigcup_{f\in F}\bar{\Psi}_{f\rightarrow}=\bigcup_{f\in F}\bar{\Psi}_{\rightarrow f},$$
and thus $\bar{p}\in\mathcal{E}(u).$

Suppose now that we want to generalize the argument to the case of multi-valued demand at the equilibrium prices. A natural idea is to use a perturbation argument: Continuity and monotonicity of utility in transfers allows us (see Lemma~\ref{Fede}) to perturb price vectors to obtain a single-valued selection from the demand correspondence at prices $p,p'$ and $\bar{p}$.
\begin{lemma}\label{12}
    Let $u^f$ be a utility function inducing a demand correspondence $D^f$ satisfying weak FS, weak LAD and weak LAS. Let $P\subseteq\mathbb{R}^{\Omega_f}$ be finite.
    Then there is a (single-valued) demand function $\tilde{D}^f:P\rightarrow2^{\Omega_f}$ that selects from $D^f$, i.e.~$\tilde{D}^f(p)\in D^f(p)$ for $p\in P$ and satisfies FS, LAD and LAS.
\end{lemma}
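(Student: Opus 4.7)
The plan is a uniform perturbation argument. Since $P$ is finite, I intend to find a single small vector $\delta \in \mathbb{R}^{\Omega_f}$ such that $D^f(p+\delta)$ is simultaneously single-valued and contained in $D^f(p)$ for every $p \in P$, and then set $\tilde{D}^f(p)$ to be the unique bundle in $D^f(p+\delta)$. The crucial feature of this construction is that shifting every price vector by the same $\delta$ preserves all coordinate-wise comparisons, so the price hypotheses appearing in FS, LAD, and LAS for pairs $(p,p')$ in $P$ are identical to those for the shifted pairs $(p+\delta,\, p'+\delta)$. Since demand is single-valued at the shifted prices, the weak forms of FS, LAD, LAS (available by hypothesis) apply there, and their conclusions are precisely what the (multi-valued) versions of FS, LAD, LAS demand of the single-valued selection $\tilde{D}^f$ at the original prices.

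The construction of $\delta$ proceeds as follows. By upper hemi-continuity (Lemma~\ref{Fede}(1)) I choose, for each $p \in P$, a radius $\epsilon_p > 0$ with $D^f(q) \subseteq D^f(p)$ whenever $\|q - p\| < \epsilon_p$, and set $\epsilon := \min_{p \in P} \epsilon_p > 0$. For each $p \in P$ let $A_p := \{\delta \in \mathbb{R}^{\Omega_f} : |D^f(p+\delta)| = 1\}$. This set is dense by Lemma~\ref{Fede}(2), and it is open because $D^f$ is always nonempty (the argmax over the finite nonempty set of feasible bundles is attained), so if $D^f(p+\delta) = \{\Psi\}$ then upper hemi-continuity forces $D^f(p+\delta') = \{\Psi\}$ for $\delta'$ sufficiently close to $\delta$. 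Since $P$ is finite, $\bigcap_{p \in P} A_p$ is open and dense, hence contains some $\delta$ with $\|\delta\| < \epsilon$. Define $\tilde{D}^f(p)$ to be the unique element of $D^f(p+\delta)$; by the choice of $\epsilon$, this element also lies in $D^f(p)$, so $\tilde{D}^f$ is a bona fide selection from $D^f$.

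Finally, I verify FS, LAD, LAS for $\tilde{D}^f$. Take $p, p' \in P$ satisfying, say, $p_\omega = p'_\omega$ for $\omega \in \Omega_{f\rightarrow}$ and $p_\omega \le p'_\omega$ for $\omega \in \Omega_{\rightarrow f}$. The shifted vectors $\hat{p} := p + \delta$ and $\hat{p}' := p' + \delta$ satisfy the same inequalities, and by construction $D^f(\hat{p})$ and $D^f(\hat{p}')$ are both singletons. Weak FS then delivers exactly the inclusions between $D^f(\hat{p}) = \{\tilde{D}^f(p)\}$ and $D^f(\hat{p}') = \{\tilde{D}^f(p')\}$ required by FS, and weak LAD and weak LAS yield the cardinality inequalities required by LAD and LAS; the symmetric halves with the roles of $\Omega_{f\rightarrow}$ and $\Omega_{\rightarrow f}$ exchanged are handled identically. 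The main obstacle I see is securing simultaneous single-valuedness across all of $P$ without destroying the price orderings on which the full-substitutability hypotheses depend; the Baire-type open-dense argument combined with the trick of a common additive shift is what reconciles these two requirements.
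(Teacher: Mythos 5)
Your proposal is correct and follows essentially the same strategy as the paper: shift all of $P$ by a single small vector $\delta$ so that demand becomes single-valued at each shifted point yet remains a subset of the original demand, and then observe that the common shift preserves all coordinate-wise price comparisons, so the weak conditions at the shifted vectors deliver FS, LAD, and LAS for the selection. The only difference is in how the common shift is produced: you obtain it in one step from the open-dense intersection $\bigcap_{p\in P}A_p$, whereas the paper builds it iteratively as $\sum_{j=1}^{n}\epsilon^{j}$, perturbing one point of $P$ at a time and shrinking the admissible radius at each stage so as not to undo single-valuedness already achieved; your topological phrasing is a bit slicker but the underlying idea is identical.
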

Once perturbed, the argument above could be applied to the perturbed
price vectors. However, this line of argument has a flaw: There is
no guarantee that the trades demanded at the perturbed prices support an equilibrium, since not every collection of demanded
trades at an equilibrium price vector support these prices as an
equilibrium. This problem would not occur under quasi-linear utility: for quasi-linear utility it is easy to show that the set of
competitive equilibrium price vectors is convex. Thus, for quasi-linear utility either there is a unique equilibrium price vector, in which case the lattice result trivially holds, or there is  in the neighborhood of each equilibrium price vector another equilibrium price vector at which demand is single-valued for each firm (since price vectors where demand is single-valued are dense in price space). For general utility functions, convexity and
more generally connectedness of the set of equilibria can
fail\footnote{This is already true for one-to-one matching with transfers which is a special case of our model. See the example in~\cite{RothSotomayor1988}.} so that a perturbation argument might fail. 

 While a naive perturbation argument fails to work, we can
use a more intricate perturbation argument. We
perturb prices for each firm individually. Importantly, we can rely on the observation (recall Remark~\ref{rmk}) that for each firm $f$ there are prices $q$ (in general different for different firms) close to $p$ where the equilibrium set of trades $\Psi_f$ is the unique demanded bundle of trades. 
The following lemma is the main ingredient in the proof of the theorem.
\begin{lemma}\label{lemma}
    Let $u^f$ be a utility function inducing a demand correspondence
    $D^f$ satisfying FS, LAD and LAS.
    Let $p,p'\in \mathbb{R}^{\Omega_f}$ and define $\bar{p},\underline{p}\in\mathbb{R}^{\Omega_f}$ by
    $$\overline{p}_{\omega}:=\max\{p_{\omega},p'_{\omega}\},\quad\underline{p}_{\omega}:=\min\{p_{\omega},p'_{\omega}\}.$$
    Let $\Psi\in D^f(p)$ and $\Psi'\in D^f(p')$.
    \begin{enumerate}
        \item There is a $\bar{\Psi}\in D^f(\bar{p})$ with
        \begin{align*}&\{\omega\in \Psi_{\rightarrow f}: p_{\omega}\geq p'_{\omega}\}\cup\{\omega\in  \Psi'_{\rightarrow f}:p'_{\omega}>p_{\omega}\}\subseteq \bar{\Psi}_{\rightarrow f},\\
        &\bar{\Psi}_{f\rightarrow}\subseteq\{\omega\in \Psi_{f\rightarrow}: p_{\omega}\geq p'_{\omega}\}\cup\{\omega\in  \Psi'_{f\rightarrow }:p'_{\omega}>p_{\omega}\}.\end{align*}
        \item There is a $\underline{\Psi}\in D^f(\underline{p})$ with
        \begin{align*}&
        \underline{\Psi}_{\rightarrow f}\subseteq\{\omega\in \Psi_{\rightarrow f}: p'_{\omega}\geq p_{\omega}\}\cup\{\omega\in \Psi'_{\rightarrow f}: p_{\omega}> p'_{\omega}\},\\
        &
        \{\omega\in \Psi_{f\rightarrow }: p'_{\omega}\geq p_{\omega}\}\cup\{\omega\in \Psi'_{f\rightarrow}: p_{\omega}> p'_{\omega}\}\subseteq\underline{\Psi}_{f\rightarrow}
        .\end{align*}
        \item $\bar{\Psi}$ and $\underline{\Psi}$ can be chosen such that
        $$|\underline{\Psi}_{\rightarrow f}|-|\underline{\Psi}_{f\rightarrow }|\geq|{\Psi}_{\rightarrow f}|-|{\Psi}_{f\rightarrow }|\geq|\bar{\Psi}_{\rightarrow f}|-|\bar{\Psi}_{f\rightarrow }|.$$

    \end{enumerate}
\end{lemma}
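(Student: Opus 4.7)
The plan is to follow the tie-breaking paradigm introduced in Section~\ref{FSS}: reduce the multi-valued setting of the lemma to a single-valued one via perturbation and Lemma~\ref{12}, then obtain the three conclusions by chaining SSS, CSC, LAD, and LAS along a staircase price path that moves only upstream or only downstream coordinates at each step.

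\textbf{Reduction to single-valued demand.} By Proposition~\ref{NIB} (which shows that CSC, LAD, and LAS imply No Isolated Bundles, as proved in Appendix~\ref{Full}), for every $\epsilon>0$ one can find $q$ within $\epsilon$ of $p$ with $D^f(q)=\{\Psi\}$ and $q'$ within $\epsilon$ of $p'$ with $D^f(q')=\{\Psi'\}$; by Lemma~\ref{Fede}(1) the single-valued regions containing $q$ and $q'$ are open, giving some freedom to orient the perturbation. Apply Lemma~\ref{12} to the finite set consisting of $q,q',\bar q:=q\vee q',\underline q:=q\wedge q'$, and the four ``corner'' intermediate price vectors that combine the upstream coordinates of $\bar q$ or $\underline q$ with the downstream coordinates of $q$ or $q'$. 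This produces a single-valued selection $\tilde D^f$ on that finite set satisfying FS, LAD, and LAS, and since $D^f(q)$ and $D^f(q')$ are already singletons we automatically have $\tilde D^f(q)=\Psi$ and $\tilde D^f(q')=\Psi'$.

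\textbf{Staircase interpolation.} Set $\bar\Psi_\epsilon:=\tilde D^f(\bar q)$ and $\underline\Psi_\epsilon:=\tilde D^f(\underline q)$. For Part~1, chain SSS and CSC along the two-step paths from $q$ and from $q'$ up to $\bar q$, each leg varying either only upstream or only downstream prices. Because $\tilde D^f$ is single-valued on the chosen finite set, the inclusions at consecutive steps chain transitively without the ambiguity of multi-valued FS. Combining the two chains gives
$$\{\omega\in\Psi_{\rightarrow f}: q_\omega\geq q'_\omega\}\cup\{\omega\in\Psi'_{\rightarrow f}: q'_\omega\geq q_\omega\}\subseteq\bar\Psi_{\epsilon,\rightarrow f},$$
together with the dual downstream inclusion $\bar\Psi_{\epsilon,f\rightarrow}\subseteq\{\omega\in\Psi_{f\rightarrow}: q_\omega\geq q'_\omega\}\cup\{\omega\in\Psi'_{f\rightarrow}: q'_\omega\geq q_\omega\}$. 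A dual staircase from $q,q'$ down to $\underline q$ produces the analogous inclusions of Part~2 for $\underline\Psi_\epsilon$. For Part~3, chaining LAD on the upstream leg and LAS on the downstream leg of each staircase produces the cardinality inequalities for the same $\bar\Psi_\epsilon,\underline\Psi_\epsilon$ obtained in Parts~1--2.

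\textbf{Limit passage and main obstacle.} Let $\epsilon\to 0$. Since $2^{\Omega_f}$ is finite, along some subsequence $\bar\Psi_\epsilon\equiv\bar\Psi$ and $\underline\Psi_\epsilon\equiv\underline\Psi$ are constant; upper hemi-continuity (Lemma~\ref{Fede}(1)) places $\bar\Psi\in D^f(\bar p)$ and $\underline\Psi\in D^f(\underline p)$, and the combinatorial subset and cardinality conditions pass to the limit. The main obstacle is arranging the perturbation in the first step so that the limit of the perturbed inclusions agrees with the lemma's asymmetric statement (weak inequality on the $\Psi$-side, strict on the $\Psi'$-side). For coordinates with $p_\omega\neq p'_\omega$ the sign of $q_\omega-q'_\omega$ is fixed once $\epsilon$ is small. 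At a tied coordinate $p_\omega=p'_\omega$, however, the single-valued regions for $\Psi$ near $p$ and for $\Psi'$ near $p'$ jointly constrain where $q_\omega$ and $q'_\omega$ can sit, and one must verify that enough simultaneous freedom remains to choose $q_\omega\geq q'_\omega$ for every tied $\omega\in\Psi_{\rightarrow f}\setminus\Psi'_{\rightarrow f}$ (and analogously on the downstream side). This orientation step is the technical heart of the lemma, and it is precisely where the assumptions LAD and LAS beyond bare FS become essential, as anticipated in the paper's introduction and reflected by Example~\ref{ex3}.
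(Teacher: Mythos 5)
Your reduction-to-single-valued strategy (NIB plus Lemma~\ref{12} plus chaining SSS/CSC/LAD/LAS along a two-leg staircase) is the right family of ideas, and your framing of the difficulty is honest, but the proposal leaves exactly the crux unproved. You perturb $p\to q$ and $p'\to q'$ \emph{independently} so that $D^f(q)=\{\Psi\}$ and $D^f(q')=\{\Psi'\}$, and then note that at a tied coordinate $p_\omega=p'_\omega$ with $\omega\in\Psi_{\rightarrow f}\setminus\Psi'_{\rightarrow f}$ you must be able to orient the tie as $q_\omega\geq q'_\omega$. You call this ``the technical heart of the lemma'' and assert that ``one must verify that enough simultaneous freedom remains,'' but you never verify it. There is in fact no reason two independently chosen points of the open regions $\{q:D^f(q)=\{\Psi\}\}$ near $p$ and $\{q':D^f(q')=\{\Psi'\}\}$ near $p'$ admit the needed coordinatewise orientation; these open sets have arbitrary geometry and nothing forces them to be compatibly positioned on every tied coordinate at once. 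So the proposal has a genuine gap at the one step it identifies as central.

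The paper closes this gap by refusing to make $D^f$ single-valued at a perturbation of $p'$, and by coupling the two perturbations. It first builds a \emph{directed} perturbation $\epsilon'$ of $p'$ (perturb only the non-tied coordinates, in the sign that favors keeping the trades of $\Psi'$ and dropping the others) and proves a Claim, via Lemma~\ref{DemandInv}, that \emph{every} bundle $\Xi\in D^f(p'+\epsilon')$ satisfies $\{\omega\in\Psi':p_\omega\neq p'_\omega\}\subseteq\Xi$ and $\{\omega\notin\Psi':p_\omega\neq p'_\omega\}\cap\Xi=\emptyset$; nothing is asserted about tied coordinates. It then uses NIB (Proposition~\ref{NIB}) only at $p$ to get $q$ with $D^f(q)=\{\Psi\}$, and sets $q':=p'+\epsilon'+(q-p)$, translating $p'+\epsilon'$ by the same vector $q-p$. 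This forces $q_\omega=q'_\omega$ on every tied coordinate and preserves the strict sign of $p_\omega-p'_\omega$ on non-tied coordinates. The orientation problem then disappears: for a tied $\omega\in\Psi_{\rightarrow f}$ you have $q_\omega=q'_\omega=\bar q_\omega$, so SSS of the single-valued selection from $q$ to $\bar q$ already yields $\omega\in\bar\Psi_{\rightarrow f}$, and the tied coordinates of $\Psi'$ need not be controlled at all. Lemma~\ref{12} is then applied exactly as you intend, to the finite cube $\{\tilde q:\tilde q_\omega\in\{q_\omega,q'_\omega\}\}$, and the chaining and cardinality arguments go through; no limit passage is needed since $\epsilon$ can simply be taken small enough that $D^f(\bar q)\subseteq D^f(\bar p)$ and $D^f(\underline q)\subseteq D^f(\underline p)$ by upper hemi-continuity.

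One further remark: your attribution of the role of LAD/LAS to the orientation step is off. They enter earlier and elsewhere: through Proposition~\ref{NIB} (to get NIB, hence to find $q$ with $D^f(q)=\{\Psi\}$), through Lemma~\ref{DemandInv} (used in the Claim), and directly in Part~3's cardinality bound; they do not do the work of tie orientation, which is done by the translation coupling just described.
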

With the lemma the theorem follows straightforwardly. The lemma and the first part of the theorem fail to hold if we replace FS by weak FS, as the following example shows. A similar example shows that the second part of the theorem fails if LAD (LAS) is replaced by weak LAD (weak LAS), see Example~\ref{ex3} in Appendix~\ref{anotherApp}.
\begin{continuance}{ex}
    Consider the set of trades $\Omega=\{\alpha^1,\alpha^2,\beta^1,\beta^2\}$ and firm $f$ with the utility function $u^f$ as defined in Example~\ref{ex}. The induced demand $D^f$ satisfies weak FS as previously shown. Moreover, for each $p\in\mathbb{R}^{\Omega_f}$ and $\Psi\in D^f(p)$ we have $|\Psi_{f\rightarrow}|=|\Psi_{\rightarrow f}|$. Thus $D^f$ satisfies LAD and LAS.  Consider four additional firms $s^1,s^2,b^1,b^2$ with $s^1=s(\alpha^1), s^2=s(\alpha^2),b^1=b(\beta^1)$ and $b^2=b(\beta^2)$. Define utility functions for the additional firms as follows:
    For $i=1,2$ define
    \begin{align*}
    &u^{s^i}(\{\alpha^i\},p_{\alpha^i})=p_{\alpha^i},\\
    &u^{b^i}(\{\beta^i\},p_{\beta^i})=2-p_{\beta^i},\\
    &u^{s^i}(\emptyset)=u^{b^i}(\emptyset)=0.
    \end{align*}  Observe that the equilibria for $u$ are $[\Omega,(1,1,1,1)]$ and $[\{\alpha^i,\beta^j\},(0,0,2,2)]$ for $i,j=1,2$. In particular, the vector $(1,1,2,2)$ is not an equilibrium price vector, since $D^{s^1}(1,1,2,2)=\{\{\alpha^1\}\}$ and $D^{s^2}(1,1,2,2)=\{\{\alpha^2\}\}$ but $D^f(1,1,2,2)=\{\{\alpha^1,\beta^1\},\{\alpha^1,\beta^2\},\{\alpha^2,\beta^1\},\{\alpha^2,\beta^2\}\}$. 
    \qed
\end{continuance}
It is well-known that the theorem fails to hold without FS, even for quasi-linear utility functions. The following example shows that the first part of the theorem fails without LAD. More generally, the example shows that without LAD the set of equilibria can even fail to be a lattice with respect to the (weaker) partial ordering induced by terminal sellers' preferences.
\begin{example}\label{ex22}
	Let $\Omega=\{\omega_1,\omega_2,\omega_3\}$. Let $b(\omega_i)=f$ for $i=1,2,3$ and $s(\omega_i)\neq s(\omega_j)$ for $i\neq j$. We let $u^{s(\omega_i)}(\omega_i,p_{\omega_i})= p_{\omega_i},u^{s(\omega_i)}(\emptyset)=0$ for $i=1,2$ and $u^{s(\omega_3)}(\omega_3,p_{\omega_3})=-\infty,u^{s(b_3)}(\emptyset)=0$.  We define $u^f$ by
	\begin{align*}
	&u^{f}(\emptyset)=0,\\
	&u^{f}(\{\omega_1,\omega_2\},(p_{\omega_1},p_{\omega_2}))=2-p_{\omega_1}-p_{\omega_2},\\
	&u^{f}(\{\omega_1,\omega_2,\omega_3\},p)=1-\frac{1}{1+\exp(-(p_{\omega_1}+p_{\omega_2}+p_{\omega_3}))},
	\end{align*}
	and $u^f(\Psi,\cdot)=-\infty$ else.
	
Consider the price vectors $p=(0,1,0)$ and $p'=(1,0,0)$. Note that $\{\omega_1,\omega_2\}\in D^f(p)$ and $\{\omega_1,\omega_2\}\in D^f(p')$. Moreover, we have $  D^{s(\omega_1)}(p)=\{\{\omega_1\}\}= D^{s(\omega_1)}(p')$, $ D^{s(\omega_2)}(p)=\{\{\omega_2\}\}= D^{s(\omega_2)}(p')$ and $D^{s(\omega_3)}(p)=\{\emptyset\}=D^{s(\omega_3)}(p')$.	Thus $p$ and $p'$ are equilibrium price vectors. Suppose there is a $\bar{p}$ that each terminal seller weakly prefers to $p$ and $p'$, i.e.~$v^{s(\omega_1)}(\bar{p})\geq\max\{v^{s(\omega_1)}(p),v^{s(\omega_1)}(p')\}=1$, $v^{s(\omega_2)}(\bar{p})\geq\max\{v^{s(\omega_2)}(p),v^{s(\omega_2)}(p')\}=1$, and $v^{s(\omega_3)}(\bar{p})\geq v^{s(\omega_3)}(p)=v^{s(\omega_3)}(p')=u^{s(\omega_3)}(\emptyset)=0$. Thus $\bar{p}_{\omega_1}\geq 1$ and $\bar{p}_{\omega_2}\geq1$. But then $D^f(\bar{p})=\{\{\omega_1,\omega_2,\omega_3\}\}$. Moreover, $D^{s(\omega_3)}(\bar{p})=\{\emptyset\}.$ Thus,  there is no such equilibrium price vector $\bar{p}$.
	
To check that $u^f$ satisfies FS, first note that for each $p\in\mathbb{R}^{\Omega_f}$, we have $u^f(\{\omega_1,\omega_2,\omega_3\},p)>0=u^f(\emptyset)$. Thus, at each $p\in\mathbb{R}^{\Omega_f}$ we have $D^f(p)\subseteq \{\{\omega_1,\omega\},\{\omega_1,\omega_2,\omega_3\}\}$ and the only possible FS violation could occur for $p\leq p'$ with $p'_{\omega_3}=p_{\omega_3}$ and $\{\omega_1,\omega_2,\omega_3\}\in D^f(p)$. However, if $u^f(\{\omega_1,\omega_2,\omega_3\},p)\geq u^f(\{\omega_1,\omega_2\},p)$, then, as $u^f(\{\omega_1,\omega_2,\omega_3\},p)- u^f(\{\omega_1,\omega_2\},p)$ is increasing in $p_{\omega_1}$ and in $p_{\omega_2}$ for each $p_{\omega_3}$, we have $u^f(\{\omega_1,\omega_2,\omega_3\},p')\geq u^f(\{\omega_1,\omega_2\},p').$ 
Thus, FS holds.
	\qed
\end{example}
 \subsection{Extremal equilibria}
So far we have not considered whether competitive equilibria exist in our model and, in principle, the lattice in Theorem~\ref{Util} could be empty. Next we show that under the additional assumption of bounded compensating variations, as introduced by~\cite{Ravi}, side-optimal equilibria exist, i.e.~there exist an equilibrium that is a most preferred equilibrium for all terminal buyers and an equilibrium that is a most preferred equilibrium  for all terminal sellers.
\newline\newline
\noindent
{\bf Bounded compensating variations}: The utility function of firm $f$ satisfies bounded compensating variations if for each $\Psi\subseteq\Omega$ we have $$\inf_{p\in\mathbb{R}^{\Psi}:u^f(\Psi,p)>u^f(\emptyset)}\left(\sum_{\omega\in\Psi_{f\rightarrow}}p_{\omega}-\sum_{\omega\in\Psi_{\rightarrow f}}p_{\omega}\right)>-\infty.
$$

The condition rules out for example the case that for a trade the seller would never sell under any price and the buyer would buy under any price, by  guaranteeing that utility for individually rational allocations is bounded for all agents. \cite{Ravi} show that under the assumption of BCV, (weak) FS equilibria exist. We generalize the result by proving the existence of side-optimal equilibria:
\begin{theorem}[\bf Existence of Extremal Equilibria]\label{extreme}
Under the assumption of BCV, FS, LAD, LAS, there exists a seller-optimal equilibrium, i.e.~a $\bar{p}\in\mathcal{E}(u)$ such that for each terminal seller $f\in F$:
$$v^f(\bar{p})\geq v^f(p)\text{ for each }p\in\mathcal{E}(u),$$
and a buyer-optimal equilibrium, i.e.~a~$\underline{p}\in\mathcal{E}(u)$ such that for each terminal buyer $f\in F$:
$$v^f(\underline{p})\geq v^f(p)\text{ for each }p\in\mathcal{E}(u).$$
\end{theorem}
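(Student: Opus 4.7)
The plan is to combine the sublattice structure of $\mathcal{E}(u)$ given by Theorem~\ref{Util} with closedness and boundedness of the equilibrium set, and then to extract greatest and least elements which, by monotonicity, will be the seller- and buyer-optimal equilibria.

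First I would verify non-emptiness and closedness of $\mathcal{E}(u)$. Non-emptiness is the existence result of~\cite{Ravi}. For closedness, let $p^n\to p$ with $p^n\in\mathcal{E}(u)$ supported by $\Psi^n$; since $2^\Omega$ is finite, pass to a subsequence with $\Psi^n\equiv\Psi$ constant, and Lemma~\ref{Fede}(1) then yields $\Psi_f=\Psi^n_f\in D^f(p^n)\subseteq D^f(p)$ for every firm $f$ and all sufficiently large $n$, so $p\in\mathcal{E}(u)$. Next I would establish that $\mathcal{E}(u)$ is bounded. Under BCV, each firm's equilibrium net transfer $T^p_f:=\sum_{\omega\in\Phi_{f\to}}p_\omega-\sum_{\omega\in\Phi_{\to f}}p_\omega$ admits a uniform lower bound $L^f>-\infty$, taking the minimum over the finitely many supporting allocations $\Phi$; since $\sum_f T^p_f=0$ identically, each $T^p_f$ is also uniformly bounded above. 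Combining these aggregate transfer bounds with monotonicity and full substitutability and propagating along the trading network (in the spirit of the compactness argument underlying the existence result of~\cite{Ravi}) yields a uniform bound on every individual $p_\omega$. This boundedness step is the main technical obstacle; the remaining ingredients are structural.

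Given non-emptiness, closedness, boundedness and the sublattice property, extracting the extremal element is standard. Define $\bar{p}_\omega:=\sup\{p_\omega:p\in\mathcal{E}(u)\}$, which is finite, and for each $\omega$ fix a sequence $p^{n,\omega}\in\mathcal{E}(u)$ with $p^{n,\omega}_\omega\to\bar{p}_\omega$. Setting $q^n:=\bigvee_{\omega\in\Omega,\,k\leq n}p^{k,\omega}$, iterated application of the lattice property in Theorem~\ref{Util} gives $q^n\in\mathcal{E}(u)$; the sequence $(q^n)$ is coordinatewise monotone and bounded above by $\bar{p}$, and since $(q^n)_\omega\geq(p^{n,\omega})_\omega\to\bar{p}_\omega$ the coordinatewise limit is exactly $\bar{p}$, which then lies in $\mathcal{E}(u)$ by closedness. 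The dual construction produces $\underline{p}:=\inf\mathcal{E}(u)\in\mathcal{E}(u)$. Finally, $\bar{p}$ is seller-optimal: for any terminal seller $f$ we have $\Omega_f=\Omega_{f\to}$, so monotonicity makes $v^f$ weakly increasing in each coordinate of $p$, and $\bar{p}\geq p$ componentwise gives $v^f(\bar{p})\geq v^f(p)$ for all $p\in\mathcal{E}(u)$; the argument for $\underline{p}$ and terminal buyers is dual.
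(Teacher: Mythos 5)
Your overall skeleton (closedness $+$ boundedness $+$ sublattice $\Rightarrow$ greatest and least elements, then monotonicity of terminal sellers' indirect utility) is the right shape, and indeed the paper uses exactly this skeleton as an intermediate step (Proposition~\ref{BWP}). However, there is a genuine gap at the step you yourself flagged, and it is not a technicality: \textbf{BCV does not bound individual equilibrium prices}, and the paper does not claim that $\mathcal{E}(u)$ is bounded under BCV. BCV only bounds each firm's \emph{net} transfer $T_f^p=\sum_{\omega\in\Psi_{f\rightarrow}}p_\omega-\sum_{\omega\in\Psi_{\rightarrow f}}p_\omega$; if the trading network has a cycle, prices along the cycle can grow without bound while all net transfers remain zero, so your claim that the aggregate transfer bounds ``propagate along the trading network'' to a uniform bound on every $p_\omega$ is unjustified and in general false. (Compare the paper's own remark that even under the \emph{stronger} condition BWP, what one gets is compactness of the set of equilibrium prices \emph{for realized trades}.) Because of this, your construction $\bar p_\omega:=\sup\{p_\omega:p\in\mathcal{E}(u)\}$ may be $+\infty$ and the monotone sequence $q^n$ need not converge, so the extremal element cannot be extracted the way you propose.

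The paper handles this by a two-stage reduction. First, the direct argument you sketch is carried out under the stronger assumption BWP, where a genuine uniform price bound $K$ is available: one intersects $\mathcal{E}(u)$ with $[-K,K]^{\Omega}$, observes this truncation preserves indirect utilities, and then compactness $+$ the sublattice property from Theorem~\ref{Util} gives the extremal element (Proposition~\ref{BWP}). Second, to go from BWP to BCV, the paper introduces the ``trade-endowed'' utility $u^f_{(\bar\Psi,\bar p)}$, proves via Lemma~\ref{Alex} that the trade-endowment operation preserves FS/LAD/LAS, and constructs a modified profile $\tilde u$ that satisfies BWP and whose equilibria are also equilibria of $u$. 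The seller-optimal equilibrium of $\tilde u$ is then shown to be seller-optimal for $u$ by a contradiction argument that itself invokes the rural hospitals theorem (second part of Theorem~\ref{Util}) and another application of the trade endowment construction. This reduction is the substantive content of the BCV case and cannot be replaced by a boundedness-plus-compactness argument. You would need either to prove the boundedness claim under BCV (which would require ruling out the cycle pathology and is not, in general, possible) or to follow some analogue of the paper's BWP-reduction route.
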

\begin{remark}
    \cite{Ravi} also introduce a stronger regularity condition, called bounded willigness to pay (BWP), which guarantees that prices in individually rational allocations are bounded for all agents.
    \newline\newline
    \noindent
    {\bf Bounded willingness to pay (BWP)}: The utility function $u^f$ satisfies bounded willingness to pay if there exists a $K\geq0$ such that for all $p\in\mathbb{R}^{\Omega_f}$ and $\Psi\in D^f(p)$
    if $\omega\in\Psi_{\rightarrow f}$ then $p_{\omega}<K$ and
    if $\omega\in \Psi_{f\rightarrow }$ then $p_{\omega}>-K$.\newline\newline
     The previous result also holds if BCV is replaced by BWP, and more generally one can show that in this case the set of equilibrium prices for realized trades is compact (see Proposition~\ref{BWP} and its proof in the appendix).
Under the assumptions of (weak) FS and BWP, \cite{Ravi} establish that equilibrium allocations are equivalent to trail-stable allocations. Thus, under BWP, FS, LAD, LAS there is a seller-optimal trail-stable allocation and a buyer-optimal trail-stable allocation. In the case of no frictions, BWP is implied by requiring, that $u^f(\emptyset)>-\infty$ and utility functions have {\bf full range}: for each $\Psi\subseteq \Omega_f$, $\Psi\neq\emptyset$, $\tilde{u}^f(\Psi,\cdot)$ is a surjective function onto $\mathbb{R}$.
      \qed 
\end{remark}
\subsection{Strategic Considerations}
The existence of buyer-optimal equilibria established in Theorem~\ref{extreme}, allows us to obtain a group-incentive compatibility result.\footnote{In the following we talk about incentives for terminal buyers. A completely analogous result also holds for terminal sellers.} In the following, a domain of utility profiles is a set $\mathcal{U}=\bigtimes_{f\in F}\mathcal{U}_f$ where $\mathcal{U}_f$ is a set of (continuous and monotonic) utility functions for firm $f$. A {\bf mechanism}  is a function  $\mathcal{M}:\mathcal{U}\rightarrow\mathcal{A}$. A mechanism is {\bf (weakly) group-strategy-proof} for a set of workers $F'\subseteq F$ on the domain $\mathcal{U}'\subseteq\mathcal{U}$ if for each $u,\tilde{u}\in\mathcal{U}'$ with  $\tilde{u}^{-F'}={u}^{-F'}$, there exist a $f\in F'$ with
$$u^f(\mathcal{M}(u))\geq u^f(\mathcal{M}(\tilde{u})).$$

Theorem~\ref{extreme} allows us to define a class of focal mechanisms on the domain of utility profiles satisfying  BCV, FS, LAD and LAS: a {\bf buyer-optimal mechanism} maps to each utility profile a buyer-optimal equilibrium allocation.

To obtain a group-strategy-proofness results for terminal buyers for buyer-optimal mechanisms, we have to restrict the domain. In the following a {\bf unit demand} utility function is a $u^f$ such that for the induced demand $D^f$ at each $p\in\mathbb{R}^{\Omega_f}$ and $\Psi\in D^f(p)$ we have $|\Psi_{\rightarrow f}|\leq1$.

\begin{theorem}[\bf Group-Strategy-Proofness]\label{GSP}
Each buyer-optimal mechanism is group-strategy-proof for terminal buyers on the domain of utility profiles such that terminal buyers' utility functions satisfy Unit Demand and BCV and all other firms' utility functions satisfy BCV, FS, LAD and LAS.
\end{theorem}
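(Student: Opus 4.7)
The plan is to argue by contradiction, following the template used for strategy-proofness proofs in two-sided matching (Demange–Gale, Hatfield–Milgrom) and adapting it to the trading-network setting. The two main tools will be Theorem~\ref{extreme} (to produce buyer-optimal equilibria in modified markets) and the unit-demand restriction on terminal buyers (which turns each member of $F'$ into a simple agent with at most one trade in any demanded bundle).

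First, suppose for contradiction that there are profiles $u,\tilde u$ in the stated domain with $\tilde u^{-F'} = u^{-F'}$ and $u^f(\mathcal{M}(\tilde u)) > u^f(\mathcal{M}(u))$ for every $f \in F'$. Write $(\Psi,p):=\mathcal{M}(u)$ and $(\tilde\Psi,\tilde p):=\mathcal{M}(\tilde u)$. Individual rationality at the equilibrium $(\Psi,p)$ gives $u^f(\Psi_f,p|_f)\ge u^f(\emptyset)$, so the strict improvement rules out $\tilde\Psi_f=\emptyset$. Unit demand then forces a unique $\omega_f\in\Omega_{\rightarrow f}$ with $\tilde\Psi_f=\{\omega_f\}$, and we obtain $u^f(\{\omega_f\},\tilde p_{\omega_f}) > u^f(\Psi_f,p|_f) \ge u^f(\emptyset)$.

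Next I would construct a modified profile $u^*$ by replacing each $u^f$, $f\in F'$, with the truncation
\[ u^{*f}(\{\omega_f\},\cdot)=u^f(\{\omega_f\},\cdot),\qquad u^{*f}(\emptyset)=u^f(\emptyset),\qquad u^{*f}(\Psi,\cdot)=-\infty\text{ otherwise,} \]
while leaving $u^{*g}=u^g$ for $g\notin F'$. Each $u^{*f}$ is unit-demand, continuous, monotonic, and inherits BCV from $u^f$, so $u^*$ lies in the domain required by Theorem~\ref{extreme}. The arrangement $(\tilde\Psi,\tilde p)$ is still an equilibrium of $u^*$: demands for firms outside $F'$ are unchanged, while for $f\in F'$ the bundle $\{\omega_f\}$ remains demanded at $\tilde p_{\omega_f}$ since $u^f(\{\omega_f\},\tilde p_{\omega_f}) > u^f(\emptyset)$. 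Let $(\Psi^*,p^*):=\mathcal{M}(u^*)$ be a buyer-optimal equilibrium of $u^*$. Buyer-optimality gives $v^{*f}(p^*)\ge v^{*f}(\tilde p)=u^f(\{\omega_f\},\tilde p_{\omega_f})$ for every $f\in F'$; since only $\emptyset$ and $\{\omega_f\}$ are feasible under $u^{*f}$ and the right-hand side strictly beats $u^f(\emptyset)$, this forces $\Psi^*_f=\{\omega_f\}$ and, by strict monotonicity in the buy price, $p^*_{\omega_f}\le\tilde p_{\omega_f}$.

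The final step, and the main technical obstacle, is to deduce that $(\Psi^*,p^*)$ is itself a competitive equilibrium of the original profile $u$: this would yield $u^f(\Psi_f,p|_f)\ge u^f(\{\omega_f\},p^*_{\omega_f})\ge u^f(\{\omega_f\},\tilde p_{\omega_f})>u^f(\Psi_f,p|_f)$ for every $f\in F'$ by the buyer-optimality of $\mathcal{M}(u)$, the desired contradiction. The difficulty is that, under $u^f$ (unlike $u^{*f}$), a terminal buyer may demand an alternative trade $\{\omega'\}$ with $\omega'\in\Omega_{\rightarrow f}\setminus\{\omega_f\}$ at $p^*$, so $\Psi^*_f$ need not lie in $D^f(p^*)$ a priori. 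To close this gap I would argue within the market $u^*$: if some $\omega'$ were strictly preferred to $\omega_f$ by $f$ under $u^f$ at $p^*$, then since $s(\omega')$ and all non-$F'$ firms have identical utility in $u^*$ and $u$, one can combine FS and LAD/LAS of the non-$F'$ firms with the Rural Hospitals part of Theorem~\ref{Util} (applied in $u^*$) to trace how the alternative trade $\omega'$ would propagate through the network and produce another equilibrium of $u^*$ strictly more favorable to $f$ than $(\Psi^*,p^*)$ under $u^{*f}$, contradicting the buyer-optimality of $(\Psi^*,p^*)$. Executing this rigorously will likely require first perturbing $p^*$ to a nearby generic price vector via Lemma~\ref{Fede}, then applying the lattice structure of equilibria of $u^*$ in that neighborhood, and exploiting unit demand to control the chain of substitutions triggered by the single alternative trade $\omega'$.
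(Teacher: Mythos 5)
Your reduction to the auxiliary profile $u^*$, where each terminal buyer $f\in F'$ is restricted to $\{\omega_f\}$ or $\emptyset$, creates a gap at exactly the point you flag: there is no reason that the $u^*$-buyer-optimal arrangement $(\Psi^*,p^*)$ should be a competitive equilibrium of the original profile $u$. Under $u^{*f}$ all alternatives to $\omega_f$ are declared infeasible, so the prices $p^*_{\omega'}$ of unrealized trades $\omega'\in\Omega_{\rightarrow f}\setminus\{\omega_f\}$ need only deter the sellers $s(\omega')$, not $f$; at such a low $p^*_{\omega'}$ the original $u^f$ may strictly prefer $\{\omega'\}$ to $\{\omega_f\}$, so $\Psi^*_f\notin D^f(p^*)$. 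Your sketch of ``tracing the chain of substitutions'' via FS, LAD/LAS, rural hospitals, perturbations, and the lattice in $u^*$ does not constitute an argument: the objects you would need to control (prices of trades that are infeasible for $f$ under $u^*$ but not under $u$) are precisely the ones that the structure of $\mathcal{E}(u^*)$ says nothing about. This is not a routine step to defer; it is the crux, and as written the proof does not go through.

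The paper's proof avoids this problem by never discarding the alternatives: instead of collapsing $u^f$ to a two-option utility, it modifies $u^f$ \emph{upward on a single trade} $\tilde\omega$ (and only on the part of the range where $u^f(\tilde\omega,\cdot)$ exceeds the buyer-optimal payoff), leaving $u^f(\omega,\cdot)$ unchanged for every $\omega\neq\tilde\omega$ and leaving $u^f(\emptyset)$ unchanged. The heart of the argument is a claim (``immunity to one-trade modifications'') showing that the buyer-optimal equilibrium under $u$ survives as a buyer-optimal equilibrium under the modified profile. That claim is proved by combining a truncation claim (raise $u^f(\emptyset)$ just above the buyer-optimal payoff to force $f$ to be unmatched in every equilibrium of the truncated profile) with the Rural Hospitals Theorem (second part of Theorem~\ref{Util}), applied to a pair of further-restricted utility functions, to rule out the sub-case in which $\hat\Psi_f=\{\hat\omega\}$. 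Iterating the one-trade modification over $f\in F'$ then yields the contradiction directly. So the paper's route is genuinely different from yours and sidesteps the difficulty your approach runs into; your steps 1--5 are fine and consistent with standard intuition, but step 6 needs to be replaced by a construction that keeps the original demand correspondence of $f$ available at all prices, as the paper does.
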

In view of Proposition~\ref{weak}, we can extend the construction to profiles  satisfying  BCV, weak (!) FS, weak LAD and weak LAS. For each such profile $u$ there exists a corresponding profile $\tilde{u}$ satisfying BCV, FS, LAD and LAS such that the indirect utility functions are the same for both profiles. The mechanism that assigns to each profile $u$ a buyer-optimal equilibrium allocation under a corresponding profile $\tilde{u}$ is group-strategy-proof for terminal buyers (since for terminal buyers (and terminal sellers) the weak FS and the FS condition coincide), and the assigned allocations are equilibrium allocations under $u$ as well.
\begin{corollary}[\bf Group-Strategy-Proofness under weak FS]\label{weakGSP}
    On the domain of utility profiles such that terminal buyers' utility functions satisfy Unit Demand and BCV and all other firms' utility functions satisfy BCV, weak FS, weak LAD and weak LAS, there exists a group-strategy-proof mechanisms for terminal buyers that implements a competitive equilibrium.
\end{corollary}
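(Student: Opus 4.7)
The plan is to reduce the weak-domain case to Theorem~\ref{GSP} by applying Proposition~\ref{weak} firm-by-firm. Given a utility profile $u$ in the weak domain, I would define an auxiliary profile $\tilde{u}$ by letting $\tilde{u}^f$ be the utility function supplied by Proposition~\ref{weak} whenever $f$ is not a terminal buyer, and setting $\tilde{u}^f = u^f$ for every terminal buyer. By construction, $\tilde{u}$ lies in the domain of Theorem~\ref{GSP}: terminal buyers still satisfy Unit Demand and BCV, while the remaining firms now satisfy FS, LAD, and LAS (together with BCV, which transfers because Proposition~\ref{weak} preserves the indirect utility).

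I then define the mechanism by letting $\mathcal{M}(u)$ be a buyer-optimal equilibrium allocation for $\tilde{u}$, whose existence is guaranteed by Theorem~\ref{extreme}. To see that $\mathcal{M}(u)$ is actually a competitive equilibrium under the original profile $u$, I would observe that if $[\Psi,p]$ supports the allocation under $\tilde{u}$, then $\Psi_f \in \tilde{D}^f(p) \subseteq D^f(p)$ for every firm $f$, using the selection property of Proposition~\ref{weak} for non-terminal-buyer firms and the identity $\tilde{D}^f = D^f$ for terminal buyers. Hence $[\Psi,p]$ is also an equilibrium arrangement under $u$, so $\mathcal{M}$ indeed implements a competitive equilibrium.

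For group-strategy-proofness, I would fix a set $F'$ of terminal buyers and weak-domain profiles $u, u'$ with ${u'}^{-F'} = u^{-F'}$. Because Proposition~\ref{weak} acts on each firm's utility independently and terminal buyers are left untouched, this immediately yields $\tilde{u}^{-F'} = {\tilde{u}'}^{-F'}$. Applying Theorem~\ref{GSP} to the FS-domain profiles $\tilde{u}$ and $\tilde{u}'$ produces some $f \in F'$ with $\tilde{u}^f(\mathcal{M}(u)) \geq \tilde{u}^f(\mathcal{M}(u'))$. Since $f$ is a terminal buyer, $\tilde{u}^f = u^f$, and the required inequality $u^f(\mathcal{M}(u)) \geq u^f(\mathcal{M}(u'))$ follows.

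The only delicate point, and the one I would expect to be the main obstacle, is verifying that Proposition~\ref{weak}'s construction preserves BCV. Since BCV is a statement about the infimum of $\sum_{\omega\in\Psi_{f\rightarrow}} p_\omega - \sum_{\omega\in\Psi_{\rightarrow f}} p_\omega$ over transfer vectors at which some bundle gives strictly more utility than the outside option, and since the indirect utility is preserved by Proposition~\ref{weak}, this property should carry over provided that $\tilde{u}^f(\emptyset) = u^f(\emptyset)$ and that the set of transfers supporting $\tilde{u}^f(\Psi,\cdot) > \tilde{u}^f(\emptyset)$ is contained in the analogous set for $u^f$ (both of which one should be able to read directly from the explicit construction in Appendix~B). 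Once this bookkeeping is settled, the corollary follows at once from the combination of Proposition~\ref{weak} and Theorem~\ref{GSP}.
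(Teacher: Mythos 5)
Your proposal is correct and follows essentially the same route as the paper's one-paragraph argument preceding the corollary: apply Proposition~\ref{weak} firm-by-firm to non-terminal-buyer firms, leave terminal buyers untouched, run a buyer-optimal mechanism on the auxiliary profile $\tilde{u}$, and note that the selection property $\tilde{D}^f\subseteq D^f$ makes the output an equilibrium under $u$ while the identity $\tilde{u}^f=u^f$ for terminal buyers transfers the group-strategy-proofness inequality. The one point you flag as "delicate" --- that the auxiliary profile still satisfies BCV --- is in fact a straightforward consequence of the explicit construction in the proof of Proposition~\ref{weak}: there $\tilde{u}^f(\Psi,p)=u^f(\Psi,p)-d(p,P^{pr}_\Psi)\leq u^f(\Psi,p)$ for all $(\Psi,p)$, and $\tilde{u}^f(\emptyset)=u^f(\emptyset)$ (note BCV forces $u^f(\emptyset)>-\infty$, and then monotonicity and continuity give $P_\emptyset\neq\emptyset$ with $d(\cdot,P^{pr}_\emptyset)=0$ since $\mathbb{R}^\emptyset$ is a point), so $\{p:\tilde{u}^f(\Psi,p)>\tilde{u}^f(\emptyset)\}\subseteq\{p:u^f(\Psi,p)>u^f(\emptyset)\}$ and the infimum in the BCV definition can only increase. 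The paper states this preservation implicitly (phrasing it via equality of indirect utilities); your version makes the check explicit, which is a small improvement in rigor, but the substance is the same.
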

\section{Applications}
\subsection{Two-sided Matching Markets}\label{two}
The results in the previous sections immediately apply to two-sided matching markets. In this case, the results generalize previously known results for two-sided matching markets in two directions: we provide a lattice result, a rural hospitals theorem and a group-strategy-proofness result for markets with a) possibly non-quasi-linear preferences for both sides of the market b) the possibility that it is infeasible for a hospital to hire certain groups of doctors. As remarked in Section~\ref{FSS}, the weak version of Full Substitutability is sufficient to obtain the results for two-sided markets.
 
Instead of a set of firms, the economy now consists of a finite set of {\bf hospitals} $H$ and a finite set of {\bf doctors} $D$. Each hospital  $h$ has a utility function $u^h:\{(D',p):D'\subseteq D,p\in \mathbb{R}^{D'}\}\rightarrow \mathbb{R}\cup\{-\infty\}$ that assigns to each $D'\subseteq D$ and price vector $p\in\mathbb{R}^{D'}$ a utility level. We extend $u^h$ to $2^D\times \mathbb{R}^{D}$ by letting $u^h(D',p):=u^h(D',(p_{d})_{d\in D'})$. We allow the utility function to take on a value of $-\infty$ to indicate that it is infeasible for the hospital to hire a particular group of doctors. This allows us for example to incorporate institutional constraints such as the ``generalized interval constraints" characterized by~\cite{Kojimaetal2019} which specify a lower and an upper bound on the number of doctors a hospital can hire. We assume that  $u^h(D',p)=-\infty$ implies $u^h(D',p')=-\infty$ for each $p'\in\mathbb{R}^{D'}$. We assume that there is at least one group of doctors $D'\subseteq D$ that is feasible to hire, i.e.~such that $u^h(D',\cdot)>-\infty.$  Moreover, we require that for $u^h(D',\cdot)\neq -\infty$, the utility function $u^h (D',\cdot)$ is continuous and strictly decreasing in prices. The utility function induces a demand correspondence $D^h:\mathbb{R}^D\rightrightarrows 2^D$ by $D^h(p):=\text{argmax}_{D'\subseteq D}u^h(D',p).$ We assume that doctors are gross substitutes for hospitals. We only need to require the condition for price vectors where the demand is single-valued.
\newline\newline
\noindent
{\bf Weak Gross Substitutability}:
For $p,p'\in\mathbb{R}^{D}$ with $p\leq p',$  ${D}^h(p)=\{D'\}$ and ${D}^h(p')=\{D''\}$ we have $\{d\in D':p'_{d}=p_x\}\subseteq D''$.\newline\newline
Moreover, we require the law of aggregate demand.
\newline\newline
{\bf Law of Aggregate Demand}:
For $p,p'\in\mathbb{R}^{D}$ with $p\leq p'$ and each $D'\in D^f(p')$ there is a $\tilde{D}\in D^f(p)$  with $|\tilde{D}|\geq|D'|$.
\newline\newline
Each doctor $d$ has a utility function $u^d:H\times\mathbb{R}\cup\{\emptyset\}\rightarrow \mathbb{R}$ that is strictly increasing and continuous in its second argument. We extend $u^d$ to $H\cup\{\emptyset\}\times \mathbb{R}^{H}$ by letting $u^d(h,p):=u^d(h,p_{hd})$ and $u^d(\emptyset,p):=u^d(\emptyset)$.


A {\bf matching} is a function $\mu:H\times D\rightarrow 2^D\cup H$ with $\mu(h)\subseteq D$ for each $h\in H$ and $\mu(d)\in H\cup\{\emptyset\}$ for each $d\in D$ such that $d\in\mu(h)$ if and only if $h= \mu(d).$ 
A {\bf competitive equilibrium} $(\mu,p)$ is a pair consisting of a matching $\mu$, and a price vector $p\in\mathbb{R}^{H\times D}$ such that for each $h\in H$ and $p_h:=(p_{hd})_{d\in D}$ we have $\mu(h)\in D^h(p_h)$ and for each $d\in D$ and $p_d=(p_{hd})_{h\in H}$ we have
$u^d(\mu(d),p_{d})=\max_{h\in H\cup\{\emptyset\}}u^d(h,p_d).$
The following is an immediate consequence of Theorems~\ref{Util},~\ref{extreme} and~\ref{GSP}.
\begin{corollary}
    For each matching market such that doctors are weak gross substitutes for hospitals and the law of aggregate demand holds the following is true:
\begin{enumerate}
        \item Let $p,p'\in\mathbb{R}^{H\times D}$ be equilibrium prices. Then  $\bar{p},\underline{p}\in\mathbb{R}^{H\times D}$ defined by
        $$\bar{p}_{hd}=\max\{p_{hd},p_{hd}'\},\quad \underline{p}_{hd}=\min\{p_{hd},p_{hd}'\}$$
        are equilibrium prices.
        \item 
        Let $p,p'\in\mathbb{R}^{H\times D}$ be equilibrium prices. For each matching $\mu$ supporting $p$ as an equilibrium $(\mu,p)$ there is a matching $\mu'$ supporting $p'$ as an equilibrium $(\mu',p')$ such that
        \begin{enumerate}
        	\item  a doctor is unemployed in $\mu$ if and only if he is unemployed in $\mu'$, i.e. $\mu(d)=\emptyset\Leftrightarrow\mu'(d)=\emptyset,$ for each $d\in D$,
        	\item each hospital hires the same number of doctors in $\mu$ and $\mu'$, i.e. $|\mu(h)|=|\mu'(h)|$ for each $h\in H.$ 
        \end{enumerate}
    \item If utility functions satisfy, moreover, BCV, then there exists a worker-optimal equilibrium allocation and a hospital-optimal equilibrium allocation.
    \item The worker-optimal mechanism is group-strategy-proof for workers on the domain of utility profiles such that workers' utility functions satisfy Unit Supply and BCV and hospitals' utility functions satisfy BCV, weak GS, and LAD.
        \end{enumerate}
\end{corollary}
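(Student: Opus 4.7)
The plan is to embed the two-sided matching market into the trading-network framework of Section~2 and then directly invoke Theorems~\ref{Util}, \ref{extreme}, \ref{GSP} together with Corollary~\ref{weakGSP}. Concretely, I set $F := H \cup D$ and $\Omega := H \times D$, declaring $b(\omega) := h$ and $s(\omega) := d$ for $\omega = (h,d)$. Under this embedding every hospital has $\Omega_{h\rightarrow} = \emptyset$ (a terminal buyer) and every doctor has $\Omega_{\rightarrow d} = \emptyset$ (a terminal seller). The hospital utility $u^h$ is reinterpreted on $2^{\Omega_{\rightarrow h}}$ with prices $(p_{(h,d)})_{d}$; the doctor's unit-supply utility is encoded by $u^d(\{(h,d)\},p) := u^d(h,p_{(h,d)})$, $u^d(\emptyset,p) := u^d(\emptyset)$, and $u^d(\Psi,\cdot) := -\infty$ whenever $|\Psi| \ge 2$. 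Continuity and monotonicity transfer verbatim, and matchings $\mu$ correspond bijectively to trade sets $\Psi := \{(h,d) : d \in \mu(h)\}$, so the two notions of competitive equilibrium coincide.

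Next I verify the substitutability hypotheses. For each hospital $h$, the absence of downstream trades makes Cross-Side Complementarity and the Law of Aggregate Supply vacuous, while Same-Side Substitutability on $\Omega_{\rightarrow h}$ is exactly the expansion-form statement of gross substitutability. By the equivalence of weak and expansion SSS (the remark after Weak FS, i.e.\ Proposition~\ref{B.1} in Appendix~\ref{Full}), Weak Gross Substitutability implies the multi-valued SSS needed for Theorem~\ref{Util}; thus FS holds for hospitals. Similarly, LAD in the matching sense coincides with LAD in the trading-network sense, and LAS is vacuous. For doctors, demand returns either $\emptyset$ or a singleton, so FS, LAD, LAS and BCV all hold trivially (the latter because $u^d(\emptyset)$ is finite). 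Hence the profile satisfies the joint hypotheses of Theorems~\ref{Util} and~\ref{extreme} under the corollary's assumptions, and the hypotheses of Corollary~\ref{weakGSP} in the weaker formulation relevant to part~(4).

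With the embedding in place, parts (1) and (3) are immediate from Theorem~\ref{Util}(1) and Theorem~\ref{extreme}, once one checks that the coordinatewise max and min of two price vectors in $\mathbb{R}^{H\times D}$ correspond to the coordinatewise max and min in $\mathbb{R}^{\Omega}$, which is tautological. For part (2), Theorem~\ref{Util}(2) says that for each $\Psi \in \mathcal{E}(u,p)$ there exists $\Psi' \in \mathcal{E}(u,p')$ with $|\Psi_{\rightarrow f}| - |\Psi_{f\rightarrow}| = |\Psi'_{\rightarrow f}| - |\Psi'_{f\rightarrow}|$ for every $f$. Specialising to $f = h \in H$, the second terms vanish, giving $|\mu(h)| = |\mu'(h)|$. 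Specialising to $f = d \in D$, the first terms vanish, so $|\Psi_{d\rightarrow}| = |\Psi'_{d\rightarrow}|$; since each side is either $0$ or $1$ by unit supply, this is exactly the biconditional $\mu(d) = \emptyset \Leftrightarrow \mu'(d) = \emptyset$. Part (4) is a direct application of Corollary~\ref{weakGSP}, noting that for terminal buyers (which is what the workers/doctors become after the usual relabelling, or symmetrically for hospitals) Unit Demand in the trading-network sense is precisely the Unit Supply condition on doctors.

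The only nontrivial step is the translation of Weak Gross Substitutability into the multi-valued FS required by Theorem~\ref{Util}; this is the substantive content that lets the one-sided matching conditions (which only constrain single-valued demand) suffice. Everything else is bookkeeping: checking that the embedding preserves equilibria, that monotonicity holds on the doctor side, and that the invariant $|\Psi_{\rightarrow f}| - |\Psi_{f\rightarrow}|$ decomposes cleanly on each side of the market.
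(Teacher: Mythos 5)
Your proposal follows essentially the same route as the paper's own (very terse) proof: embed the two-sided market into the trading-network framework, observe that for terminal buyers and terminal sellers the cross-side conditions and the opposite-direction law are trivially satisfied, invoke Proposition~\ref{B.1} to upgrade weak GS (= weak SSS) to expansion SSS and hence FS, and then apply Theorems~\ref{Util},~\ref{extreme},~\ref{GSP}. Your reading-off of the rural-hospitals conclusions for $f=h$ and $f=d$ from the invariant $|\Psi_{\rightarrow f}|-|\Psi_{f\rightarrow}|$ is exactly how the paper intends the translation. Two small points are worth flagging. First, your claim that BCV ``holds trivially'' for doctors because $u^d(\emptyset)$ is finite is not correct: BCV for a doctor requires $\inf\{p : u^d(h,p)>u^d(\emptyset)\}>-\infty$ for each $h$, and this can fail even when $u^d(\emptyset)$ is finite (e.g.~$u^d(h,p)=e^p+1$, $u^d(\emptyset)=0$). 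This does not damage your argument, because BCV is an explicit hypothesis in parts (3) and (4) of the corollary and is not needed for parts (1) and (2) -- but you should not assert it comes for free. Second, your treatment of part (4) via ``relabelling'' is a bit muddled: there is no relabelling needed, one simply invokes the seller-side analogue of Theorem~\ref{GSP} (noted in the paper's footnote to that theorem) with ``Unit Supply'' $|\Psi_{f\rightarrow}|\le 1$ as the dual of Unit Demand; and since you already established genuine (multi-valued) FS from weak GS via Proposition~\ref{B.1}, the detour through Corollary~\ref{weakGSP} is unnecessary.
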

\begin{proof}
We can construct a corresponding trading network with $\Omega=H\cup D$ and $\tilde{u}^h(\Psi,p)=u^h(\{d:(h,d)\in\Psi \},p)$ for $\Psi\subseteq \Omega_f$ and $\tilde{u}^d(\{(h,d)\},p_{hd})=u^d(h,p_{hd}),$ $\tilde{u}^d(\emptyset)=u^d(\emptyset)$ and $\tilde{u}^d(\Psi,\cdot)=-\infty$ if $\Psi\subseteq \Omega_f$ with $|\Psi|>1$. The weak gross substitutes condition then corresponds to the weak SSS condition (see Appendix~\ref{Full}) which by Proposition~\ref{B.1} in Appendix~\ref{Full} is equivalent to the SSS condition. Since the market is two-sided, SSS and FS are equivalent. The corollary follows from Theorems~\ref{Util},~\ref{extreme} and~\ref{GSP}.
\end{proof}
\subsection{Exchange economies with uniform pricing}\label{Exchange}
Next, we apply the model to the exchange of indivisible objects. The result extend results of~\cite{GulStacchetti1999} and~\cite{Hatfield2013} (see the discussion in their Section IV.B) to imperfectly transferable utility. As in~\cite{GulStacchetti1999}, we maintain the assumption that the market is cleared through transfers of a perfectly divisible good and there is no constraint on the amount of the divisible good an agent can consume. Moreover, negative quantities of the divisible good can be consumed. However, we do not assume that utility in the divisible good is quasi-linear. Similar assumption are standard in the object allocation literature with general preferences, see for example~\cite{Shige}.

In the following, we let $X$ be a finite set of heterogeneous indivisible {\bf objects}. From now on, we use the term agents in lieu of firms. Agents have utility functions over bundles of objects and transfers, $\tilde{u}^f:2^X\times\mathbb{R}\rightarrow \mathbb{R}$ such that for each $Y\subseteq X$, $\tilde{u}^f(Y,\cdot)$ is continuous, strictly increasing and has full range,\footnote{This assumption is only necessary for the existence of side-optimal allocations and otherwise redundant.} and for each $t\in \mathbb{R}$ and $Y\subseteq Y'\subseteq X$, we have $\tilde{u}^f(Y,t)\leq\tilde{u}^f(Y',t)$. Each agent $f$ is endowed with a bundle of objects $X_f\subseteq X$ such that $X_f\cap X_{f'}=\emptyset$ for $f\neq f'$ and $\bigcup_{f\in F}X_f=X$.
 An {\bf exchange economy} is a pair $(\tilde{u},(X_f)_{f\in F})$ of utility functions and endowments for each agent. We define for each $f\in F$ a {\bf demand} correspondence $\tilde{D}^f:\mathbb{R}_{+}^X\times2^X\rightrightarrows2^X$ by $$\tilde{D}^f(p,X_f):=\text{argmax}_{Y\subseteq X}\tilde{u}^f(Y,\sum_{x\in X_f\setminus Y}p_{x}-\sum_{x\in Y\setminus X_f}p_{x}).$$
     \begin{remark}
    In contrast to quasi-linear utility, the demand can differ with the endowment, i.e. in general $\tilde{D}^f(p,X_f)\neq \tilde{D}^f(p,\tilde{X}_f)$ for $X_f\neq \tilde{X}_f$.\qed
  \end{remark}
   We assume that objects are {gross substitutes} for agents. 
\newline\newline
  \noindent
{\bf Gross Substitutability}:
For $p,p'\in\mathbb{R}_{+}^{X}$ with $p\leq p'$, if $p'_x=p_x$ for $x\in X_f$, then for each  $Y'\in\tilde{D}^f(p',X_f)$ there exists a $Y\in\tilde{D}^f(p,X_f)$ such that $\{x\in Y:p'_{x}=p_x\}\subseteq Y'$, and if $p'_x=p_x$ for $x\in X\setminus X_f$, then for each  $Y\in\tilde{D}^f(p,X_f)$ there exists a $Y'\in\tilde{D}^f(p',X_f)$, such that $\{x\in Y:p'_{x}=p_x\}\subseteq Y'$.
\newline\newline
 Moreover, we assume the {law of aggregate demand}:
 \newline\newline
 \noindent
 {\bf Law of Aggregate Demand}:
 For $p,p'\in\mathbb{R}_{+}^{X}$ with $p\leq p'$, if $p'_x=p_x$ for $x\in X_f$, then for each  $Y'\in\tilde{D}^f(p',X_f)$ there exists a $Y\in\tilde{D}^f(p,X_f)$, and if $p'_x=p_x$ for $x\in X\setminus X_f$, then for each  $Y\in\tilde{D}^f(p,X_f)$ there exists a $Y'\in\tilde{D}^f(p',X_f)$, such that $|Y|\geq |Y'|$.
\begin{remark}
    We assume that there is only one copy of each object. Our results also hold true with multiple copies provided each agent can supply at most one unit of each indivisible object and wants to consume at most one unit of each indivisible object. We can deal with multi-unit supply of the same good by a seller and multi-unit demand of the same good by a buyer in our model, by creating identical copies of objects and price units individually. However, in this case, we will generally have non-linear pricing. For the case of multi-unit demand and supply with linear pricing, we would have to modify our model and the gross substitutability condition. For the quasi-linear case with multiple units of the same good, see~\cite{Baldwin2016} and in particular their ordinary substitutability condition.\qed
\end{remark}
      An {\bf allocation of objects} is a partition $Y=(Y_f)_{f\in F}$ with $Y_f\subseteq X$ and $Y_f\cap Y_{f'}=\emptyset$ for $f\neq f'$.
   A {\bf competitive equilibrium} of the exchange economy $(\tilde{u},(X_f)_{f\in F})$ is a pair $[Y,p]$ where $Y$ is an allocation of objects and $p\in \mathbb{R}_{+}^X$ such that for each $f\in F$ we have
$Y_f\in \tilde{D}^f(p,X_f).$

For each exchange economy $(\tilde{u},(X_f)_{f\in F})$, a corresponding trading network can be defined as follows: The set of trades is $$\Omega:=\{(x,f_1,f_2)\in X\times F\times F:x\in X_{f_1},f_2\neq f_1\}$$ where for $\omega=(x,f_1,f_2)\in \Omega$ we have $s(\omega)=f_1\neq f_2=b(\omega)$. We write $x(\omega)$ for the object involved in trade $\omega$. For $\Psi\subseteq\Omega_f$ and $p\in \mathbb{R}^{\Omega_f}$ define $$X_f(\Psi):=\{x(\omega):\omega\in \Psi_{\rightarrow f}\}\cup X_f\setminus\{x(\omega):\omega\in\Psi_{f\rightarrow} \},\quad p_f(\Psi):=\sum_{\omega\in\Psi_{f\rightarrow }}p_{\omega}-\sum_{\omega\in\Psi_{\rightarrow f}}p_{\omega}.$$
Utility functions are induced by utility functions over bundles of objects and transfers; for $\Psi\subseteq \Omega_f$ and $p\in\mathbb{R}_+^{\Omega_f}$ we let
$${ u^f(\Psi,p)=\begin{cases}\tilde{u}^f(X_f(\Psi),p_f(\Psi)),&\text{if }\{x(\omega):\omega\in \Psi_{f\rightarrow}\}\subseteq X_f\text{ and }x(\omega)\neq x(\omega')\\&\text{for }\omega,\omega'\in\Psi\text{ with }\omega\neq\omega',\\-\infty,&\text{else.}
\end{cases}}$$
To apply the results from the previous sections, we also extend the utility functions to negative prices; for $\Psi\subseteq\Omega_f$ and $p\in\mathbb{R}^{\Psi}\setminus\mathbb{R}^{\Psi}_+$ we let
$$u^f(\Psi,p):= u^f(\Psi,(\max\{p_{\omega},0\})_{\omega\in\Psi})+\sum_{\omega\in\Psi_{f\rightarrow }}\min\{p_{\omega},0\}-\sum_{\omega\in\Psi_{\rightarrow f}}\min\{p_{\omega},0\}.$$
\begin{remark}\label{aha}
    Extending utility for negative prices in this way implies (see the proof of Lemma~\ref{equivalencei})  that the induced demand $D^f$ satisfies FS on $\mathbb{R}^{\Omega_f}$ whenever it satisfies FS on $\mathbb{R}^{\Omega_f}_+$. Moreover, is easy to see that for each $\Psi\subseteq\Omega_f$ with $u^f(\Psi,\cdot)>-\infty$, $u^f$ is continuous (as $u^f$ is continuous on $\mathbb{R}_+^{\Psi}$, and $\min$ and $\max$ are continuous) and monotonic on $\mathbb{R}^{\Psi}$. This will allow us to apply the results from previous sections.

    Equilibrium prices in the trading network are non-negative by our assumption that $\tilde{u}^f(Y,t)\leq \tilde{u}^f(Y',t)$ for $t\in \mathbb{R}$ and $Y\subseteq Y'\subseteq X$: Let $p\in\mathbb{R}^{\Omega}$ and define $\Omega^-:=\{\omega\in\Omega:p_{\omega}<0\}$. First note that for $\Psi\in D^f(p)$ we have $\Psi\cap\Omega^-_{f\rightarrow}=\emptyset$: Define $p^+\in\mathbb{R}^{\Omega}$ by $p^+_{\omega}:=\max\{p_{\omega},0\}$ for $\omega\in\Omega_{f\rightarrow}$ and $p^+_{\omega}:=p_{\omega}$ else. 
     Note that  $X^f(\Psi)\subseteq X^f(\Psi\setminus\Omega^-_{f\rightarrow})$. Thus, by monotonicity $$u^f(\Psi,p)\leq u^f(\Psi,p^+)\leq u^f(\Psi\setminus\Omega^-_{f\rightarrow},p^+)= u^f(\Psi\setminus\Omega^-_{f\rightarrow},p).$$ As $\Psi\in D^f(p)$, all inequalities hold with equality, in particular, $u^f(\Psi,p)= u^f(\Psi,p^+)$ and therefore, by monotonicity,  $\Psi\cap\Omega^-_{f\rightarrow}=\emptyset.$      
     By a similar argument, if $\Psi\in D^f(p)$ and $\Omega^-_{\rightarrow f}\neq\emptyset$, then $\Psi\cap\Omega^-_{\rightarrow f}\neq \emptyset$. Thus, for $p\in\mathbb{R}^{\Omega}\setminus\mathbb{R}^{\Omega}_+$ there is excess demand and for each $p\in\mathcal{E}(u)$, we have $p_{\omega}\geq0$ for each $\omega\in\Omega$. \qed
\end{remark}

The gross substitutes condition for $\tilde{u}^f$ corresponds to the full substitutability condition for $u^f$ and the law of aggregate demand for $\tilde{u}^f$ implies the laws of aggregate demand and supply for $u^f$.
\begin{lemma}\label{equivalencei}
    If objects are gross substitutes under $\tilde{u}^f$, then trades are full substitutes under $u^f$. If the law of aggregate demand holds under $\tilde{u}^f$, then the laws of aggregate demand and supply hold under $u^f$.
\end{lemma}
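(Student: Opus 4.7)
The plan is to reduce the trading-network demand $D^f$ to the exchange-economy demand $\tilde{D}^f$ via an effective-price transformation, and then translate GS and LAD of $\tilde{D}^f$ into FS, LAD, and LAS for $D^f$. I first carry out the argument for $p \in \mathbb{R}^{\Omega_f}_+$ and then extend to all of $\mathbb{R}^{\Omega_f}$ using the structure of the negative-price extension of $u^f$.

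For $p \in \mathbb{R}^{\Omega_f}_+$, define the effective object-price vector $\pi^f(p) \in \mathbb{R}^X_+$ by $\pi^f_x(p) := \max_{f' \neq f} p_{(x,f,f')}$ for $x \in X_f$ and $\pi^f_x(p) := p_{(x,s(x),f)}$ for $x \notin X_f$, where $s(x)$ denotes the unique initial owner of $x$. A direct computation from the definition of $u^f$ yields that $\Psi \in D^f(p)$ if and only if (i) every $\omega=(x,f,f')\in\Psi_{f\rightarrow}$ uses a price-maximizing buyer, i.e.~$p_\omega = \pi^f_x(p)$, and (ii) the bundle $Y := X_f(\Psi)$ lies in $\tilde{D}^f(\pi^f(p), X_f)$. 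Conversely, every such $Y$ is realized by some $\Psi \in D^f(p)$. Crucially, $|\Psi_{\rightarrow f}| - |\Psi_{f\rightarrow}| = |Y| - |X_f|$.

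I then verify the FS conditions on $\mathbb{R}^{\Omega_f}_+$ using GS. Consider $p, p' \in \mathbb{R}^{\Omega_f}_+$ with $p = p'$ on $\Omega_{f\rightarrow}$ and $p \leq p'$ on $\Omega_{\rightarrow f}$, so that $\pi^f(p) \leq \pi^f(p')$ with equality on $X_f$. Given $\Psi' \in D^f(p')$ and $Y' := X_f(\Psi')$, the first case of GS delivers $Y \in \tilde{D}^f(\pi^f(p), X_f)$ with $\{x \in Y : \pi^f_x(p) = \pi^f_x(p')\} \subseteq Y'$. Realize $Y$ as $\Psi \in D^f(p)$, choosing for each $x \in X_f \setminus Y$ that also lies in $X_f \setminus Y'$ the same buyer as in $\Psi'$ (valid because downstream prices, hence the argmax buyer sets for $x \in X_f$, are unchanged). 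Then SSS holds: any $\omega=(x,s(x),f) \in \Psi_{\rightarrow f}$ with $p_\omega = p'_\omega$ has $x \in Y \setminus X_f$ and $\pi^f_x$ unchanged, so $x \in Y'$ and $\omega \in \Psi'_{\rightarrow f}$. CSC follows from $Y \cap X_f \subseteq Y' \cap X_f$ (the GS conclusion on $X_f$), equivalently $X_f \setminus Y' \subseteq X_f \setminus Y$, plus the consistent buyer choice. The dual FS statement (downstream prices varying, upstream fixed) is handled analogously using the second case of GS, exploiting the observation that if $\omega=(x,f,b)\in\Psi_{f\rightarrow}$ has $p_\omega = p'_\omega$, then $b$ is an argmax buyer at both $p$ and $p'$, forcing $\pi^f_x$ to be unchanged. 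LAD and LAS on $\mathbb{R}^{\Omega_f}_+$ follow from the identity $|\Psi_{\rightarrow f}| - |\Psi_{f\rightarrow}| = |Y| - |X_f|$ together with the two cases of the exchange LAD.

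For the extension to $\mathbb{R}^{\Omega_f}$, I rely on the characterization anticipated in Remark~\ref{aha}: a one-trade perturbation using monotonicity of $\tilde{u}^f$ shows that for any $p \in \mathbb{R}^{\Omega_f}$ and any $\Psi \in D^f(p)$, every upstream trade $\omega$ with $p_\omega<0$ must lie in $\Psi$, and no downstream trade with $p_\omega<0$ can. With those trades fixed in or out, the extension formula makes $u^f(\cdot, p)$ coincide with $u^f(\cdot, p^+)$ up to an additive constant on the admissible domain, so the desired FS, LAD, and LAS at $p, p'$ reduce to the same properties at $p^+, p'^+$, already established above. The main technical obstacle throughout is the many-to-one map $\Psi \mapsto X_f(\Psi)$ arising from the freedom in choosing downstream buyers: for CSC (and the dual SSS) this is resolved by aligning $\Psi$'s buyer choices with those of $\Psi'$, which is always possible precisely on the side of the market where downstream prices are held fixed.
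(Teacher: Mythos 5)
Your reduction via the effective object-price map (the paper's $q$, your $\pi^f$), the characterization of $D^f(p)$ in terms of $\tilde{D}^f$, the cardinality identity $|\Psi_{\rightarrow f}|-|\Psi_{f\rightarrow}|=|X_f(\Psi)|-|X_f|$, and the derivation of FS and LAD/LAS on $\mathbb{R}^{\Omega_f}_+$ from GS and the exchange-economy LAD all track the paper's proof closely; your treatment of the many-to-one freedom in realizing $Y$ as a $\Psi$ by aligning downstream buyers with those chosen in $\Psi'$ is correct and, if anything, spelled out more explicitly than in the paper.

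The gap is in the extension to $\mathbb{R}^{\Omega_f}$. You correctly establish that every $\Psi\in D^f(p)$ must contain each $\omega\in\Omega_{\rightarrow f}$ with $p_\omega<0$ and no $\omega\in\Omega_{f\rightarrow}$ with $p_\omega<0$, and that on this admissible set $u^f(\cdot,p)$ and $u^f(\cdot,p^+)$ differ by a constant, so $D^f(p)$ is the set of admissible bundles in $D^f(p^+)$. But the claim that ``FS, LAD, and LAS at $p,p'$ reduce to the same properties at $p^+,(p')^+$'' is not a black-box implication: expansion FS is an existence statement, and the bundle $\tilde\Psi\in D^f(p^+)$ it produces from $\Psi'\in D^f((p')^+)$ need not be admissible for $p$ --- it may omit an upstream trade $\omega$ with $p_\omega<0\le p'_\omega$. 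One must explicitly repair $\tilde\Psi$ by adding the forced-in upstream trades and deleting the forced-out downstream trades (this keeps it in $D^f(p^+)$, because those trades are priced zero under $p^+$ and $\tilde u^f$ is increasing in the bundle) and then re-verify that the resulting admissible $\Psi$ still satisfies SSS, CSC and LAD/LAS with respect to $\Psi'$ at $(p,p')$: e.g.\ any newly added $\omega$ with $p_\omega=p'_\omega$ has $p'_\omega<0$ and so is forced into $\Psi'$ as well, and any deleted downstream $\omega$ has $p_\omega=p'_\omega<0$ and hence cannot lie in $\Psi'_{f\rightarrow}$. This is precisely the step the paper's proof carries out, choosing a $\Psi\in D^f(p)$ with $\{\omega\in\tilde\Psi : q_{x(\omega)}=q'_{x(\omega)}\}\subseteq\Psi$ and checking the inclusions case by case; as written, your proposal asserts the reduction without performing it.
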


In general, different trades involving the same object can be priced differently.
In the following, we call $p\in\mathcal{E}(u)$ a {\bf competitive equilibrium} of the trading network {\bf with uniform pricing}, if for $\omega,\omega'\in\Omega$, with $x(\omega)=x(\omega')$ we have $p_{\omega'}=p_{\omega}$. Trades in the same object are perfect substitutes to each other for the seller of the object, and he will sell the object to a buyer who is offering the highest price. Thus, we can always construct an equilibrium with uniform pricing from an equilibrium with non-uniform pricing by setting the price of the non-realized trades to the highest price for the involved object over all trades in the trading network. Similarly, a competitive equilibrium in the exchange economy, induces a competitive equilibrium with uniform pricing in the trading network. The following theorem can be interpreted as a generalization of Theorem~10 of~\cite{Hatfield2013}.
\begin{proposition}\label{equivalence}
    \begin{enumerate}
        \item If $p\in\mathbb{R}_{+}^{\Omega}$ are equilibrium prices in the trading network induced by an exchange economy, then $(\max_{\omega\in\Omega,x=x(\omega)}p_{\omega})_{x\in X}\in\mathbb{R}_{+}^X$ are equilibrium prices in the exchange economy.
        \item If $p\in\mathbb{R}_{+}^{X}$ are equilibrium prices in an exchange economy, then $(p_{x(\omega)})_{\omega\in\Omega}\in\mathbb{R}_{+}^{\Omega}$ are equilibrium prices in the trading network induced by the exchange economy.
    \end{enumerate}
    \end{proposition}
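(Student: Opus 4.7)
The plan is to prove both directions by explicitly constructing a partition (resp.\ a set of trades) in the target economy and verifying agentwise optimality through a dictionary between the two representations.

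For Part 1, I take an equilibrium $[\Psi,p]$ in the trading network and define $q_x:=\max_{\omega:x(\omega)=x}p_\omega$ together with the candidate exchange allocation $Y_f:=X_f(\Psi_f)$. The $(Y_f)_{f\in F}$ form a partition of $X$ because $\Psi_f\in D^f(p)$ with finite utility forces all trades in $\Psi_f$ to involve distinct objects in $X_f$ (by the feasibility clause in the definition of $u^f$) and each object has a unique owner. The crucial step is to show that every realized trade is priced at the uniform rate $q_{x(\omega)}$: if $(x,f,g)\in\Psi$ had $p_{(x,f,g)}<q_x$, then $f$ could replace this trade by $(x,f,g')$ with $p_{(x,f,g')}=q_x$, keeping the object bundle $X_f(\Psi_f)$ unchanged but strictly increasing her transfer, contradicting $\Psi_f\in D^f(p)$. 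Applying this to every realized trade, viewed as a sell trade from its seller's side, gives $p_f(\Psi_f)=\sum_{x\in X_f\setminus Y_f}q_x-\sum_{x\in Y_f\setminus X_f}q_x$.

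To verify $Y_f\in\tilde{D}^f(q,X_f)$, I take an arbitrary $Y'\subseteq X$ and construct $\Psi'_f\subseteq\Omega_f$ containing, for each $x\in X_f\setminus Y'$, a sell trade $(x,f,g)$ with $p_{(x,f,g)}=q_x$ (which exists by definition of $q_x$), and for each $x\in Y'\setminus X_f$, the unique trade $(x,o_x,f)$ from the original owner $o_x$ of $x$. Then $X_f(\Psi'_f)=Y'$ and $p_f(\Psi'_f)\geq\sum_{x\in X_f\setminus Y'}q_x-\sum_{x\in Y'\setminus X_f}q_x$ since $p_{(x,o_x,f)}\leq q_x$. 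Monotonicity of $\tilde{u}^f$ in transfers together with $\Psi_f\in D^f(p)$ then yields $\tilde{u}^f(Y_f,\,\cdot\,)\geq\tilde{u}^f(Y',\,\cdot\,)$ at the respective exchange-economy transfers, as desired.

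For Part 2, I set $p_\omega:=q_{x(\omega)}$, take an exchange equilibrium allocation $Y$ at $q$, and define $\Psi:=\{(x,f_1,f_2)\in\Omega:x\in Y_{f_2}\}$. A direct computation gives $X_f(\Psi_f)=Y_f$ and, because prices are uniform across trades in the same object, $p_f(\Psi_f)=\sum_{x\in X_f\setminus Y_f}q_x-\sum_{x\in Y_f\setminus X_f}q_x$. For any alternative $\Psi'_f$ with finite utility, the feasibility clause puts $\{x(\omega):\omega\in\Psi'_{f\rightarrow}\}$ in bijection with $X_f\setminus X_f(\Psi'_f)$ and $\{x(\omega):\omega\in\Psi'_{\rightarrow f}\}$ in bijection with $X_f(\Psi'_f)\setminus X_f$, so $p_f(\Psi'_f)$ again equals the exchange-economy transfer at $X_f(\Psi'_f)$, and any strict improvement over $\Psi_f$ would contradict $Y_f\in\tilde{D}^f(q,X_f)$. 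The main obstacle I expect is the uniform-pricing step in Part 1---concluding that realized buy prices also agree with $q_x$, which is handled via the seller's optimality (each trade has a unique seller who prefers the highest-paying buyer); the remainder of the argument is straightforward bookkeeping between $\Omega$-trades and object bundles.
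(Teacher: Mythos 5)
Your proof is correct and follows the paper's route in both directions: mapping equilibria between the two models via the allocation $(X_f(\Psi))_{f\in F}$ and uniform prices $q_x=\max_{\omega\in\Omega,\,x=x(\omega)}p_\omega$, then reversing. Your explicit seller-optimality argument for the uniform-pricing step (that $p_\omega=q_{x(\omega)}$ for realized $\omega\in\Psi$) makes precise a claim the paper states merely ``by construction,'' but the underlying construction and verification are the same.
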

\begin{proof}
Let $[\Psi,p]$ be an equilibrium in the induced trading network. Let $q:=(\max_{\omega\in\Omega,x=x(\omega)}p_{\omega})_{x\in X}$ and consider the allocation $[(X_f(\Psi))_{f\in F},q]$ in the exchange economy.   By construction, we have $p_{\omega}\leq q_{x(\omega)}$ for each $\omega\notin \Psi$ and $p_{\omega}=q_{x(\omega)}$ for $\omega\in\Psi$. Thus $$ \Psi_f\in {D}^f(p)\Rightarrow X_f(\Psi)\in \tilde{D}^f(q,X_f)$$
    and $[(X_f(\Psi))_{f\in F},q]$ is an equilibrium of the exchange economy.

For the second part, let $[Y,p]$ be an equilibrium of the exchange economy. Define $ q:=(p_{x(\omega)})_{\omega\in\Omega}$ and consider the set of trades $\Psi\subseteq\Omega$ defined by $$\Psi:=\{\omega\in \Omega:x(\omega)\in Y_{b(\omega)}\cap X_{s(\omega)}\}.$$
    By construction, we have $$Y_f\in \tilde{D}^f(p,X_f)\Rightarrow \Psi_f\in D^f(q).$$
    Therefore $[\Psi,q]$ is an equilibrium of the induced trading network.
\end{proof}
Proposition~\ref{equivalence} and the previous results for trading networks imply the following:
\begin{corollary}\label{general}
    Let $(\tilde{u},(X_f)_{f\in F})$ be an exchange economy such that objects are gross substitutes for agents and the law of aggregate demand holds.
    \begin{enumerate}
        \item {\bf Lattice Theorem}: Let $p,p'\in\mathbb{R}^X_+$ be equilibrium prices. Then the price vectors $\bar{p},\underline{p}\in\mathbb{R}_+^{X}$ defined by $$\bar{p}_{x}:=\max\{p_{x},p'_{x}\},\quad \underline{p}_{x}:=\min\{p_{x},p'_{x}\},$$
        are equilibrium equilibrium prices.
        \item {\bf Rural Hospitals Theorem}: Let $p,p'$ be equilibrium prices. For each equilibrium $[Y,p]$ there exists an assignment $Y'$ such that for each $f\in F$ $|Y_f|=|Y'_f|$, i.e. $f$ consumes the same number of objects in $Y$ and $Y'$.
\item{\bf Existence of Extremal equilibria}: There exist equilibrium price vectors $\bar{p},\underline{p}\in\mathbb{R}_+^X$, such that for each equilibrium price vector $p\in \mathbb{R}^X_+$ and $x\in X$ we have $$\underline{p}_x\leq p_x\leq \bar{p}_x.$$
    \end{enumerate}
\end{corollary}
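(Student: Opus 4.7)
The plan is to reduce each part of Corollary~\ref{general} to the corresponding result for trading networks, exploiting the equivalence between equilibria of the exchange economy and uniformly-priced equilibria of the induced trading network (Proposition~\ref{equivalence}). By Lemma~\ref{equivalencei}, Gross Substitutability and the Law of Aggregate Demand for $\tilde{u}^f$ translate into FS together with LAD and LAS for $u^f$ in the trading network, so the hypotheses of Theorems~\ref{Util} and~\ref{extreme} are in force. Given equilibrium prices $p, p'\in\mathbb{R}_+^X$ of the exchange economy, the lifted vectors $q:=(p_{x(\omega)})_{\omega\in\Omega}$ and $q':=(p'_{x(\omega)})_{\omega\in\Omega}$ are uniformly-priced equilibria of the induced trading network by Proposition~\ref{equivalence}(2).

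For Part 1, Theorem~\ref{Util} gives that the componentwise maximum $\bar q$ and minimum $\underline q$ of $q,q'$ are equilibrium prices in the trading network. Since $\bar q_\omega=\max\{p_{x(\omega)},p'_{x(\omega)}\}$ depends only on $x(\omega)$, $\bar q$ is uniformly priced, and applying Proposition~\ref{equivalence}(1) produces the equilibrium price vector $\bar p_x=\max\{p_x,p'_x\}$ in the exchange economy; the same argument handles $\underline p$.

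For Part 2, fix an equilibrium $[Y,p]$ of the exchange economy and consider the associated trade set $\Psi:=\{\omega\in\Omega:x(\omega)\in Y_{b(\omega)}\cap X_{s(\omega)}\}$ with uniform prices $q$, which is an equilibrium in the trading network. By construction $Y_f=X_f(\Psi)$, hence $|Y_f|=|X_f|+|\Psi_{\rightarrow f}|-|\Psi_{f\rightarrow}|$. Apply Theorem~\ref{Util}(2) to obtain $\Psi'$ supporting $q'$ with $|\Psi'_{\rightarrow f}|-|\Psi'_{f\rightarrow}|=|\Psi_{\rightarrow f}|-|\Psi_{f\rightarrow}|$ for every $f$; then $Y'_f:=X_f(\Psi')$ defines an allocation supporting $p'$ with $|Y'_f|=|Y_f|$, as desired.

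For Part 3, the main obstacle is that every agent in the induced trading network is both a buyer and a seller, so the terminal-agent side-optimal statement of Theorem~\ref{extreme} is vacuous here. The approach is instead a lattice-plus-compactness argument. The full range hypothesis on $\tilde u^f$ yields BWP for the induced $u^f$ (via the remark following Theorem~\ref{extreme}), which, combined with upper hemi-continuity of demand, makes the set of equilibrium prices closed and bounded in $\mathbb{R}_+^X$. Define $\bar p_x:=\sup\{p_x:p\in\mathcal{E}\}$, which is finite by BWP. For each $x\in X$ choose a sequence $p^{n,x}\in\mathcal{E}$ with $p^{n,x}_x\to\bar p_x$, let $p^n$ be the componentwise maximum of $\{p^{n,x}:x\in X\}$ (which lies in $\mathcal{E}$ by iterating Part 1), and observe that $p^n\to\bar p$; closedness of $\mathcal{E}$ then gives $\bar p\in\mathcal{E}$. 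The dual construction produces $\underline p\in\mathcal{E}$ with $\underline p\leq p\leq\bar p$ coordinate-wise for every $p\in\mathcal{E}$.
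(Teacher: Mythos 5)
Parts 1 and 2 of your proof track the paper's argument exactly: lift $p,p'$ to uniformly-priced trading-network prices via Proposition~\ref{equivalence}, invoke Lemma~\ref{equivalencei} so the hypotheses of Theorem~\ref{Util} hold, apply it, and project back (translating trade sets to object allocations for the rural hospitals part). Your observation that Theorem~\ref{extreme} is vacuous here because the induced trading network generically has no terminal agents---so that Part 3 requires a direct compactness-plus-lattice argument rather than an appeal to side-optimality---is also correct, and is what the paper does.

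The gap is in the boundedness step of Part 3. BWP only constrains prices of trades that some agent actually \emph{demands}: $p_\omega<K$ for demanded upstream trades and $p_\omega>-K$ for demanded downstream trades, with no constraint on prices of unrealized trades. In the exchange economy, an object that is kept by its owner in some equilibrium therefore has a price that BWP does not bound, so $\sup\{p_x:p\in\mathcal{E}\}$ could a priori be infinite and your candidate $\bar p$ would not be a finite vector. (This is precisely why the paper's Proposition~\ref{BWP} never claims $\mathcal{E}(u)$ is bounded and instead works only with the truncated set $\mathcal{E}(u)\cap[-K,K]^{\Omega}$.) To close the gap one needs the paper's explicit argument: use full range to bound each firm's net transfer $p_f(\Psi)$ below, use market clearing $\sum_{f\in F}p_f(\Psi)=0$ to bound it above, and then, for each trade $\omega$, consider the seller $s(\omega)$'s option of selling off its entire endowment via a bundle $\Psi'$ containing $\omega$; comparing $\tilde u^{s(\omega)}(\emptyset,\sum_{\omega'\in\Psi'}p_{\omega'})$ to the equilibrium payoff yields a uniform bound $\tilde K$ on \emph{every} $p_\omega$, realized or not. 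A secondary issue is that the remark deriving BWP from full range assumes no frictions, whereas the negative-price extension of the induced $u^f$ is not friction-free; this one is patchable (equilibrium prices are non-negative by Remark~\ref{aha}), but it is another place your appeal to BWP needs tightening.
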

\begin{remark}
    Throughout this section, we have made the assumption that utility depends on the total amount of the divisible good, but not on how transfers of the divisible good are obtained through different trades. For the induced trading network this means that utility satisfies the no frictions assumption. Frictions for individual trades in the trading network can lead to non-uniform pricing. Suppose for example that an agent is endowed with an object and faces different transactions costs depending on whom he is selling the object to. In this case, he might have an incentive to sell the object to a buyer who is offering a lower price, if transaction costs with this buyer are lower than with other buyers who offer a higher price. Thus, Proposition~\ref{equivalence} can fail to hold in the presence of frictions. A slightly more general version of the theorem can be obtained, where it is assumed that utility is symmetric in transfers from different trades with the same objects, but transfers from trades with different objects can enter the utility asymmetrically. In this case, trades in different objects can contain different frictions, however, trades of the same objects are perfect substitutes for each other.\qed
\end{remark}
\section{Conclusion}
In this paper, we have established several structural results for the set of equilibria in trading networks with frictions, making only minimal assumption on utility functions. Quasi-linearity can be replaced by the assumption of monotonicity and continuity to establish the lattice structure for the set of equilibria, the rural hospitals theorems, the existence of buyer-optimal and seller-optimal equilibria, and a group-incentive-compatibility result. Our results are applicable to a wide range of situations where the assumption of quasi-linearity is unrealistic because of imperfectly transferable utility or the presence of frictions.

 To prove the results we used a novel tie-breaking approach where for prices where demand is multi-valued we show that a well-behaved selection from the demand correspondence can be made. After ties are broken standard techniques can be applied to obtain the lattice theorem and the rural hospitals theorem.
This motivates us to pose two open questions for future research: First, can results for the trading networks model without transfers as in~\cite{Fleiner2016}, generalize from strict preferences to preferences with ties?\footnote{For marriage markets, the lattice result extends to the case with ties for strongly stable matchings~\cite[]{Manlove2002}.} A recent result by~\cite{Hatfield:2018} is very much in this spirit. The authors show that chain-stable and stable outcomes are equivalent for trading networks without making assumptions of monotonicity or continuity on utility functions. Hence their result applies to both trading networks with and without transfers.

Second, in the case of quasi-linear utility, corresponding results to ours can be obtained through solving a generalized network
flow problem~\cite[]{Candogan2016}. The optimal solutions to the flow
problem correspond to a competitive equilibrium outcome and its dual yields supporting prices. This generalizes the linear programming approach from one-to-one matching markets with quasi-linear utility
\cite[]{ShapleyShubik1971}. Recently,~\cite{Noldeke2018} have shown that a more general non-linear duality theory can be used to obtain versions of the results of~\cite{Demange.1985} that generalize~\cite{ShapleyShubik1971} beyond quasi-linear utility. It seems unlikely that the approach generalizes to our full model with frictions. However, for the case of no frictions (in which case utility only depends on the sum of transfers and competitive equilibria are efficient), it is a natural question whether the duality approach to equilibria in trading networks generalizes beyond quasi-linear utility to obtain similar results as in the present paper through different techniques.
\bibliographystyle{RM}
\bibliography{matching}
\appendix
\newpage
\label{App}
\counterwithin{proposition}{section}
\counterwithin{lemma}{section}
\counterwithin{corollary}{section}
\section{Different Versions of Full Substitutability}\label{Full}
\subsection*{Contraction Full Substitutability}
In Appendix~A of \cite{Hatfield2018}, the authors introduce the contraction and expansion version of full substitutability that differ in regard to how they are defined at price vectors where the demand is multi-valued. Note that full substitutability relates the demand at two price vectors $p$ and  $p'$. If $p_{\omega}\leq p'_{\omega}$ for $\omega\in\Omega_{f\rightarrow}$ and $p'_{\omega}=p_{\omega}$ for $\omega\in\Omega_{\rightarrow f}$ respectively if $p_{\omega}\geq p'_{\omega}$ for $\omega\in\Omega_{\rightarrow f}$ and $p'_{\omega}=p_{\omega}$ and $\omega\in\Omega_{f\rightarrow}$, then the choice set expands from $p'$ to $p$, and it contracts from $p$ to $p'$.  The contraction version of full substitutability requires that ``for all $\Psi\in D^f(p)$ there is a $\Psi'\in D^f(p')$ such that..." whereas the expansion version inverts the order of quantification and requires that ``for all $\Psi'\in D^f(p')$ there is a $\Psi\in D^f(p)$ such that..."
We further split full substituability into same-side substitutability (SSS) and cross-side complementarity (CSC) and the laws of aggregate demand and supply, all of which are implied by all versions of full substitutability in the quasi-linear case. We use the expansion version of SSS as our main definition. Alternatively, we can consider the contraction version.

\noindent\newline
{\bf Contraction Same-Side Substitutability}:
For $p,p'\in\mathbb{R}^{\Omega_f}$ and each $\Psi\in D^f(p)$ there exists a $\Psi'\in D^f(p')$ such that if $p_{\omega}= p'_{\omega}$ for $\omega\in\Omega_{f\rightarrow}$ and $p_{\omega}\leq p'_{\omega}$ for $\omega\in \Omega_{\rightarrow f} $, then $$\{\omega\in\Psi_{\rightarrow f}:p_{\omega}=p'_{\omega}\}\subseteq \Psi'_{\rightarrow f},$$ and if $p_{\omega}= p'_{\omega}$ for $\omega\in\Omega_{\rightarrow f}$ and $p_{\omega}\geq p'_{\omega}$ for $\omega\in \Omega_{f\rightarrow } $, then
$$\{\omega\in\Psi_{f\rightarrow }:p_{\omega}=p'_{\omega}\}\subseteq \Psi'_{f\rightarrow }.$$
We have previously introduced weak FS that can be further decomposed into weak SSS and weak CSC.

\noindent
\newline
{\bf Weak Same-Side Substitutability}:
For $p,p'\in\mathbb{R}^{\Omega_f}$ such that $ D^f(p)=\{\Psi\}$ and $D^f(p')=\{\Psi'\}$ if $p_{\omega}= p'_{\omega}$ for $\omega\in\Omega_{f\rightarrow}$ and $p_{\omega}\leq p'_{\omega}$ for $\omega\in \Omega_{\rightarrow f} $, then $$\{\omega\in\Psi_{\rightarrow f}:p_{\omega}=p'_{\omega}\}\subseteq \Psi'_{\rightarrow f},$$ and if $p_{\omega}= p'_{\omega}$ for $\omega\in\Omega_{\rightarrow f}$ and $p_{\omega}\geq p'_{\omega}$ for $\omega\in \Omega_{f\rightarrow } $, then
$$\{\omega\in\Psi_{f\rightarrow }:p_{\omega}=p'_{\omega}\}\subseteq \Psi'_{f\rightarrow }.$$
All three notions of SSS, the weak, the contraction version, as well as the expansion version that we use as our main definition are equivalent under quasi-linear utility~\cite[]{Hatfield2018}. For general utility functions, the weak version and the expansion version of SSS remain equivalent (by a result of~\citealp{Ravi} in their Appendix~A\footnote{More precisely, Contraction SSS is the conjunction of the two properties that \cite{Ravi} denote by ``decreasing-price full substitutability for sales (in demand language)" and ``increasing-price full substitutability for purchases in (demand language)".}) but the contraction version of SSS is strictly stronger than the expansion version. See  Example~\ref{Ex2} after the following proposition.
\begin{proposition}[\citealp{Hatfield2018,Ravi}]\label{B.1}
	Let $u^f$ be a monotonic and continuous utility function with induced demand $D^f$.
	\begin{enumerate}
		\item $D^f$ satisfies weak SSS if and only if it satisfies (Expansion) SSS.
		\item If $D^f$ satisfies Contraction SSS then it satisfies weak SSS.
		\item If $u^f$ is quasi-linear and $D^f$ satisfies weak SSS, then $D^f$ satisfies Contraction SSS.
	\end{enumerate}
\end{proposition}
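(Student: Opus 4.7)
Part 2 is immediate: when $D^f(p)=\{\Psi\}$ and $D^f(p')=\{\Psi'\}$ are singletons, the universal quantifier $\forall\Psi\in D^f(p)$ in Contraction SSS selects the unique $\Psi$, the existential $\exists\Psi^*\in D^f(p')$ must then pick the unique $\Psi'$, and the Contraction SSS inclusion reduces to the weak SSS inclusion. Similarly, the direction (Expansion SSS $\Rightarrow$ weak SSS) in Part 1 is just the specialization of Expansion SSS to prices at which demand is single-valued on both sides.

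The nontrivial direction of Part 1 (weak SSS $\Rightarrow$ Expansion SSS) is the main content. Fix $p,p'$ with $p_\omega=p'_\omega$ on $\Omega_{f\rightarrow}$, $p_\omega\leq p'_\omega$ on $\Omega_{\rightarrow f}$, and $\Psi'\in D^f(p')$. My plan has four steps: (a) perturb $p'$ to a nearby $\tilde p'$ at which $\Psi'$ is the unique demand; (b) build $\hat p$ near $p$ with single-valued demand $\{\hat\Psi\}$ while preserving the SSS price relation with $\tilde p'$ (matching $\tilde p'$ on $\Omega_{f\rightarrow}$ and using Lemma~\ref{Fede} to further isolate on $\Omega_{\rightarrow f}$); (c) apply weak SSS to $(\hat p,\tilde p')$ to obtain $\{\omega\in\hat\Psi_{\rightarrow f}:\hat p_\omega=\tilde p'_\omega\}\subseteq\Psi'_{\rightarrow f}$; (d) send $\epsilon\to 0$ and use upper hemi-continuity (Lemma~\ref{Fede}) together with the finiteness of $2^{\Omega_f}$ to extract a constant subsequence $\hat\Psi\equiv\Psi\in D^f(p)$. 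The equality pattern $\{\omega:\hat p_\omega=\tilde p'_\omega\}$ can be arranged to coincide with $\{\omega:p_\omega=p'_\omega\}$ for all small $\epsilon$ (by deliberately perturbing $\hat p$ so that the original equalities $p_\omega=p'_\omega$ are preserved under the shift), so the SSS inclusion passes to the limit.

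Part 3 exploits quasi-linearity: $D^f(p)$ is the argmax of a linear functional over the finite collection of valuation-weighted signed indicators of bundles, so any $\Psi\in D^f(p)$ is isolatable as the unique argmax by a generic small perturbation in a separating direction. Given $\Psi\in D^f(p)$, perturb $p$ to $\hat p$ with $D^f(\hat p)=\{\Psi\}$, isolate some singleton $\{\Psi^*\}$ at $\hat p'$ near $p'$ while preserving the SSS price relation, apply weak SSS, and pass to the limit via u.h.c.\ to produce the $\Psi^*\in D^f(p')$ required by Contraction SSS.

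The main technical obstacle is step (a) of Part 1 outside of quasi-linearity. The natural perturbation (setting $\tilde p'_\omega=p'_\omega-\epsilon$ on $\Psi'_{\rightarrow f}$, $\tilde p'_\omega=p'_\omega+\epsilon$ on $\Omega_{\rightarrow f}\setminus\Psi'_{\rightarrow f}$, and mirror-image changes on $\Omega_{f\rightarrow}$) strictly separates $u^f(\Psi',\cdot)$ from $u^f(\Psi'',\cdot)$ on the non-empty symmetric difference $\Psi'\triangle\Psi''$: on $\omega\in\Psi'\setminus\Psi''$ the perturbation improves $u^f(\Psi',\cdot)$ without affecting $u^f(\Psi'',\cdot)$, and on $\omega\in\Psi''\setminus\Psi'$ it worsens $u^f(\Psi'',\cdot)$ without affecting $u^f(\Psi',\cdot)$. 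However, on the shared coordinates $\Psi'\cap\Psi''$ both utilities move in the same favorable direction by non-comparable amounts in the absence of linearity; the resolution (as in \citealp{Ravi}, Appendix~A) is to refine the perturbation scheme—e.g., perturbing only coordinates in $\bigcup_{\Psi''}(\Psi'\triangle\Psi'')$ with graduated scales, or proceeding by a lexicographic sequence of one-coordinate perturbations—so that the strict separation coming from symmetric-difference coordinates dominates the ambiguous contributions from shared coordinates for all sufficiently small $\epsilon$.
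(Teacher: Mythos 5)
The paper does not prove Proposition~\ref{B.1} itself; it is stated with the attribution \citealp{Hatfield2018,Ravi} and the proof is delegated to Appendix~A of~\cite{Ravi}. So there is no in-paper proof to compare against, only a citation.

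Your Part~2 and the Expansion-SSS-implies-weak-SSS direction of Part~1 are trivially correct. Your Part~3 is also sound in spirit: quasi-linearity makes each demand region a full-dimensional polyhedron, so every $\Psi\in D^f(p)$ can be isolated as the unique demand at a nearby $\hat{p}$ (this is precisely the observation that quasi-linearity implies NIB), after which weak SSS plus upper hemi-continuity gives Contraction SSS.

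The genuine gap is in the hard direction of Part~1, and you correctly identify it as ``the main technical obstacle'' but do not resolve it. Your step~(a) — perturbing $p'$ by lowering prices on $\Psi'_{\rightarrow f}$ and raising them on $\Omega_{\rightarrow f}\setminus\Psi'_{\rightarrow f}$ so that $\Psi'$ becomes the unique demand — \emph{cannot} work in general. The paper's own Example~\ref{ex3} exhibits a bundle that is demanded at $(2,2)$ but at no nearby price, i.e.\ NIB fails, so $\Psi'$ may not be isolable by any perturbation. The obstacle you identify (ambiguity on shared coordinates $\Psi'\cap\Psi''$) is real, and the vague resolutions you offer (``graduated scales'', ``lexicographic one-coordinate perturbations'') would not circumvent it. The correct move is structurally different: do not try to isolate $\Psi'$ itself. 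Perturb $p'$ \emph{only on} $\Omega_{\rightarrow f}\setminus\Psi'_{\rightarrow f}$, raising those prices by $\epsilon$ and leaving $\Psi'_{\rightarrow f}$ and $\Omega_{f\rightarrow}$ untouched. Because $u^f(\Psi',\cdot)$ depends on no perturbed coordinate, $u^f(\Psi',\tilde{p}')=u^f(\Psi',p')$; for any $\Psi''$ with $\Psi''_{\rightarrow f}\nsubseteq\Psi'_{\rightarrow f}$, monotonicity gives $u^f(\Psi'',\tilde{p}')<u^f(\Psi'',p')$. Hence, for small $\epsilon$, $\Psi'\in D^f(\tilde{p}')$ and every $\Psi''\in D^f(\tilde{p}')$ satisfies $\Psi''_{\rightarrow f}\subseteq\Psi'_{\rightarrow f}$ — the shared-coordinate ambiguity vanishes because there \emph{are} no shared perturbed coordinates. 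By density (Lemma~\ref{Fede}), isolate some $\Psi''$ near $\tilde{p}'$ (it need not be $\Psi'$), shift $p$ by the same vector, apply weak SSS, and the inclusion $\Psi''_{\rightarrow f}\subseteq\Psi'_{\rightarrow f}$ transfers the conclusion to $\Psi'$. This one-sided perturbation is exactly what makes SSS (unlike CSC, cf.\ Proposition~\ref{single}) equivalent to its weak form without assuming NIB, and it is the idea your proof is missing.
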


\begin{example}\label{Ex2}
	Consider three trades $\Omega=\{\omega_1,\omega_2,\omega_3\}$ with $f=b(\omega_1)=b(\omega_2)=b(\omega_3)$.
	We let $u^{f}(\emptyset)=0$, $u^{f}(\{\omega_i\},p_{\omega_i})=3-p_{\omega_i}$ for $i=1,2,3$, $u^f(\{\omega_i,\omega_j\},p_{\omega_i},p_{\omega_j})=4-p_{\omega_i}-p_{\omega_j}$ for $i\neq j$ and  $$u^{f}(\{\omega_1,\omega_2,\omega_3\},p)=\begin{cases}
	4-p_{\omega_1}-p_{\omega_2}-p_{\omega_3}\quad&\text{ if }p_{\omega_1}+p_{\omega_2}+p_{\omega_3}\leq0\\
	4-3\sqrt{\frac{p_{\omega_1}+p_{\omega_2}+p_{\omega_3}}{6}}\quad&\text{ if }6\geq p_{\omega_1}+p_{\omega_2}+p_{\omega_3}>0,\\7-p_{\omega_1}-p_{\omega_2}-p_{\omega_3}\quad&\text{ else}
	\end{cases}$$
	Observe that $$D^f(2,2,2)=\{\{\omega_1,\omega_2,\omega_3\},\{\omega_1\},\{\omega_2\},\{\omega_3\}\},$$
	but $$D^f(3,2,2)=\{\{\omega_2\},\{\omega_3\}\}.$$
	As $\{\omega_1,\omega_2,\omega_3\}\in D^f(2,2,2)$, Contraction SSS would require that there is a $\Psi\in D^f(3,2,2)$ with $\{\omega_2,\omega_3\}\subseteq \Psi$.  Hence Contraction SSS is not satisfied.  As the demand at $(2,2,2)$ and $(3,2,2)$ is multi-valued, Weak SSS does not impose any structure here. More generally, note that if we replace $u^f$ by the quasi-linear utility functions $\tilde{u}^f$ such that $\tilde{u}^f(\{\omega_1,\omega_2,\omega_3\},p)=4-p_{\omega_1}-p_{\omega_2}-p_{\omega_3}$ for all $p\in\mathbb{R}^{\Omega_f}$ and $u^f$ remains otherwise unchanged, only the demand at prices $(2,2,2)$ changes. One readily checks that $\tilde{u}^f$ satisfies (Weak) SSS. Hence $u^f$ satisfies Weak SSS.
	\qed
\end{example}
Similarly as for Same-Side Substitutabilty, we can define an alternative version of Cross-Side Complementarity.

\noindent\newline {\bf Contraction Cross-Side Complementarity}: For each $\Psi\in D^f(p)$ there exists a $\Psi'\in D^f(p')$ such if  $p_{\omega}= p'_{\omega}$ for $\omega\in\Omega_{f\rightarrow}$ and $p_{\omega}\leq p'_{\omega}$ for $\omega\in \Omega_{\rightarrow f} $, then $$\Psi'_{f\rightarrow}\subseteq\Psi_{f\rightarrow},$$
and if $p_{\omega}= p'_{\omega}$ for $\omega\in\Omega_{\rightarrow f}$ and $p_{\omega}\geq p'_{\omega}$ for $\omega\in \Omega_{f\rightarrow } $, then $$\Psi'_{\rightarrow f}\subseteq\Psi_{\rightarrow f}.$$
As before, we also define a weak version of CSC that together with Weak SSS defines Weak FS.

\noindent\newline {\bf Weak Cross-Side Complementarity}: For each $p,p'\in\mathbb{R}^{\Omega_f}$ with $ D^f(p)=\{\Psi\}$ and $D^f(p')=\{\Psi'\}$ such if  $p_{\omega}= p'_{\omega}$ for $\omega\in\Omega_{f\rightarrow}$ and $p_{\omega}\leq p'_{\omega}$ for $\omega\in \Omega_{\rightarrow f} $, then $$\Psi'_{f\rightarrow}\subseteq\Psi_{f\rightarrow},$$
and if $p_{\omega}= p'_{\omega}$ for $\omega\in\Omega_{\rightarrow f}$ and $p_{\omega}\geq p'_{\omega}$ for $\omega\in \Omega_{f\rightarrow } $, then $$\Psi'_{\rightarrow f}\subseteq\Psi_{\rightarrow f}.$$
Similarly as for SSS, the weak, the contraction version as well as the expansion version of CSC that we use as our main definition are equivalent under quasi-linear utility~\cite[]{Hatfield2018}.
For general utility functions, the weak version and the contraction version of CSC remain equivalent (by a result of~\citealp{Ravi} in their Appendix~A\footnote{More precisely, Contraction SSS is the conjunction of the two properties that \cite{Ravi} denote ``increasing-price full substitutability for sales (in demand language)" and ``decreasing-price full substitutability for purchases (in demand language)".}) but the contraction version of SSS is strictly stronger than the expansion version (see Example~\ref{ex} in the main text).
\begin{proposition}[\citealp{Hatfield2018,Ravi}]\label{B.2}
	Let $u^f$ be a monotonic and continuous utility function with induced demand $D^f$.
	\begin{enumerate}
		\item $D^f$ satisfies weak CSC if and only if it satisfies Contraction CSC.
		\item If $D^f$ satisfies (Expansion) CSC then it satisfies Weak CSC.
		\item If $u^f$ is quasi-linear and $D^f$ satisfies Weak CSC, then $D^f$ satisfies (Expansion) CSC.
	\end{enumerate}
\end{proposition}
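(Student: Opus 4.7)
My plan is to address the three parts separately. Part~2 and the ``Contraction CSC $\Rightarrow$ weak CSC'' direction of Part~1 are immediate: whenever $D^f(p)=\{\Psi\}$ and $D^f(p')=\{\Psi'\}$, the quantifier structures of either Contraction CSC or Expansion CSC collapse to ``the unique $\Psi$ and the unique $\Psi'$ satisfy the conclusion,'' which is exactly weak CSC.

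The substantive step is weak CSC $\Rightarrow$ Contraction CSC. Fix $p,p'$ satisfying the first CSC premise (equal prices on $\Omega_{f\rightarrow}$, weakly lower at $p$ on $\Omega_{\rightarrow f}$) and an arbitrary $\Psi\in D^f(p)$. The key construction is the perturbation
\[
p^{(n)} := p - \tfrac{1}{n}\,\mathbb{1}_{\Omega_{f\rightarrow}\setminus\Psi_{f\rightarrow}}.
\]
The perturbed coordinates are downstream trades for $f$ that do not lie in $\Psi$, so they do not appear in $\Psi_f$; consequently monotonicity gives $u^f(\Psi,p^{(n)})=u^f(\Psi,p)$. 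On the other hand, strict monotonicity in sell-prices forces $u^f(\hat\Psi,p^{(n)})<u^f(\hat\Psi,p)$ whenever $\hat\Psi$ contains some $\omega\in\Omega_{f\rightarrow}\setminus\Psi_{f\rightarrow}$. Combined with u.h.c.\ (Lemma~\ref{Fede}), for $n$ large every $\hat\Psi\in D^f(p^{(n)})$ also lies in $D^f(p)$, whence $u^f(\hat\Psi,p)=u^f(\Psi,p)=u^f(\Psi,p^{(n)})$, and the strict inequality above forces $\hat\Psi_{f\rightarrow}\subseteq\Psi_{f\rightarrow}$.

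Next I would jointly perturb $p^{(n)}$ and $p'$ by a common small vector $(\mu,\nu)\in\mathbb{R}^{\Omega_{f\rightarrow}}\times\mathbb{R}^{\Omega_{\rightarrow f}}$ added identically to the downstream and upstream coordinates of both, which preserves the CSC premise for the perturbed pair. Since the set of prices with single-valued demand is open and dense (Lemma~\ref{Fede}) and the parameterizations $(\mu,\nu)\mapsto\tilde p^{(n)}$ and $(\mu,\nu)\mapsto\tilde p'^{(n)}$ are affine bijections onto $\mathbb{R}^{\Omega_f}$, the intersection of the two single-valuedness sets is open and dense in $(\mu,\nu)$-space, so I may choose $(\mu,\nu)$ with $\|(\mu,\nu)\|$ arbitrarily small and both demands singletons $\{\tilde\Psi^{(n)}\},\{\tilde\Psi'^{(n)}\}$. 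Weak CSC yields $\tilde\Psi'^{(n)}_{f\rightarrow}\subseteq\tilde\Psi^{(n)}_{f\rightarrow}$; u.h.c.\ at $p^{(n)}$ places $\tilde\Psi^{(n)}\in D^f(p^{(n)})$, so by the previous paragraph $\tilde\Psi'^{(n)}_{f\rightarrow}\subseteq\Psi_{f\rightarrow}$. Since $2^{\Omega_f}$ is finite and $\tilde p'^{(n)}\to p'$, a subsequence renders $\tilde\Psi'^{(n)}$ constant, equal to some $\Psi'\in D^f(p')$ by u.h.c., and this $\Psi'$ is the required bundle. The symmetric construction (raising prices on $\Omega_{\rightarrow f}\setminus\Psi_{\rightarrow f}$) handles the second half of CSC.

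For Part~3, under quasi-linearity the ``isolated bundles'' obstruction of Example~\ref{ex} (cf.\ NIB in Remark~\ref{rmk}) cannot occur: $V^f$ is piecewise affine, and each nonempty region of constancy $R(\Psi):=\{q:u^f(\Psi,q)\geq u^f(\Psi'',q)\,\forall\Psi''\}$ is full-dimensional---one checks this by perturbing any $q\in R(\Psi)$ in the direction that lowers prices on $\Psi_{\rightarrow f}\cup(\Omega_{f\rightarrow}\setminus\Psi_{f\rightarrow})$ and raises prices on its complement in $\Omega_f$ by $\epsilon$, which increases $u^f(\Psi,\cdot)-u^f(\Psi'',\cdot)$ by exactly $\epsilon|\Psi\triangle\Psi''|>0$. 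Hence every $\Psi'\in D^f(p')$ is approachable by prices where $D^f=\{\Psi'\}$, and rerunning the argument above with the roles of $p,p'$ swapped---starting from approachability of $\Psi'$ rather than the key perturbation at $p$---delivers Expansion CSC. The main obstacle throughout is the design of the perturbation $p^{(n)}$: one must find a direction that squeezes $D^f(p^{(n)})$ into bundles with $\hat\Psi_{f\rightarrow}\subseteq\Psi_{f\rightarrow}$ while leaving $u^f(\Psi,\cdot)$ invariant, and strict monotonicity in sell-prices is precisely what makes this possible.
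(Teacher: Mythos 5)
The paper states Proposition~\ref{B.2} without proof, citing \citet{Hatfield2018,Ravi}, so there is no in-paper proof to compare yours against directly. Your strategy for the substantive direction of Part~1 (shrink the demand at $p$ via a monotonicity perturbation, shift to single-valued prices, apply weak CSC, then recover the claim in the limit via upper hemi-continuity and finiteness of $2^{\Omega_f}$) is sound in outline and closely parallels the technique the paper itself develops for Proposition~\ref{single} and Lemma~\ref{DemandInv}. However, there is a concrete gap in the way you set up the joint perturbation.

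You claim that adding a common $(\mu,\nu)$ to $p^{(n)}$ and to $p'$ ``preserves the CSC premise for the perturbed pair,'' but the pair $(p^{(n)},p')$ does not satisfy the premise in the first place: for $\omega\in\Omega_{f\rightarrow}\setminus\Psi_{f\rightarrow}$ you have $p^{(n)}_{\omega}=p_{\omega}-\tfrac1n<p_{\omega}=p'_{\omega}$, while the first branch of the CSC premise requires \emph{equality} of the two price vectors on all of $\Omega_{f\rightarrow}$. Translating both vectors by the same $(\mu,\nu)$ preserves that strict inequality, so the perturbed pair also fails the premise, and weak CSC cannot be invoked. The repair is to perturb $p'$ by the same $-\tfrac1n\mathbb{1}_{\Omega_{f\rightarrow}\setminus\Psi_{f\rightarrow}}$ before the $(\mu,\nu)$-shift, i.e.\ work with $p'^{(n)}:=p'-\tfrac1n\mathbb{1}_{\Omega_{f\rightarrow}\setminus\Psi_{f\rightarrow}}$; then $(p^{(n)},p'^{(n)})$ does satisfy the premise, the joint $(\mu,\nu)$-shift keeps it, and upper hemi-continuity at $p'$ still yields $\tilde\Psi'^{(n)}\in D^{f}(p')$ once $n$ is large and $(\mu,\nu)$ small. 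Your superscript $\tilde p'^{(n)}$ suggests you may have intended exactly this, but the text ``jointly perturb $p^{(n)}$ and $p'$'' names the wrong pair, and the claim that the premise is preserved is false as written. Part~2 and the easy direction of Part~1 are immediate, as you say. For Part~3, your computation that a perturbation of size $\epsilon$ in the indicated direction raises $u^f(\Psi,\cdot)-u^f(\Psi'',\cdot)$ by exactly $\epsilon\lvert\Psi\triangle\Psi''\rvert$ is correct and does give full-dimensionality of each region $R(\Psi)$, hence NIB, under quasi-linearity; combined with the (CSC fragment of the) paper's Proposition~\ref{single} argument this yields Expansion CSC. So the only substantive defect is the premise-preservation claim in Part~1, and it is repairable.
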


\subsubsection*{Monotone Substitutability, No Isolated Bundles, and Single Improvements}
Following the terminology of~\cite{Hatfield:2018}, we denote by monotone substitutability, the property that the FS, LAD, and LAS hold jointly for the same bundles of trades.\footnote{Note that we use the (weaker) demand-language version rather than the choice-language definition of the property used by~\cite{Hatfield:2018}.} \newline\newline
\noindent  {\bf Monotone Substitutability}: For $p,p'\in\mathbb{R}^{\Omega_f}$ and each $\Psi'\in D^f(p')$ there exists a $\Psi\in D^f(p)$ such that if $p_{\omega}= p'_{\omega}$ for $\omega\in\Omega_{f\rightarrow}$ and $p_{\omega}\leq p'_{\omega}$ for $\omega\in \Omega_{\rightarrow f} $, then $$\{\omega\in\Psi_{\rightarrow f}:p_{\omega}=p'_{\omega}\}\subseteq \Psi'_{\rightarrow f},\quad \Psi'_{f\rightarrow}\subseteq\Psi_{f\rightarrow},\quad|\Psi_{\rightarrow f}|- |\Psi_{f\rightarrow}|\geq |\Psi'_{\rightarrow f}|- |\Psi'_{f\rightarrow}|,$$
and if $p_{\omega}= p'_{\omega}$ for $\omega\in\Omega_{\rightarrow f}$ and $p_{\omega}\geq p'_{\omega}$ for $\omega\in \Omega_{f\rightarrow } $, then
$$\{\omega\in\Psi_{f\rightarrow }:p_{\omega}=p'_{\omega}\}\subseteq \Psi'_{f\rightarrow },\quad\Psi'_{\rightarrow f}\subseteq\Psi_{\rightarrow f},\quad|\Psi_{f\rightarrow}|- |\Psi_{\rightarrow f}|\geq |\Psi'_{f\rightarrow}|- |\Psi'_{\rightarrow f}|.$$\newline
If there are no isolated bundles, weak FS and FS are equivalent, and the combination of weak FS, weak LAD and weak LAS is equivalent to Monotonone Substitutability. 
\begin{proposition}\label{single}
	Let $u^f$ be a continuous and monotonic utility function inducing a demand $D^f$.
	\begin{enumerate}
		\item If $D^f$ satisfies NIB and weak FS then it satisfies FS.
		\item If $D^f$ satisfies NIB, weak FS, weak LAD and weak LAS, then it satisfies Monotone Substitutability.
	\end{enumerate}
\end{proposition}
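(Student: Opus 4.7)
My plan is to prove both parts with a single coupling/perturbation argument. The idea is to perturb $(p,p')$ to $(\tilde{q},\tilde{q}')$ on which both prices have single-valued demand \emph{and} the inequality structure of the weak FS precondition is preserved \emph{exactly}, so that the weak FS, weak LAD, and weak LAS conclusions at $(\tilde{q},\tilde{q}')$ transfer directly to the desired FS, LAD, and LAS conclusions at $(p,p')$.

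Fix $p,p'\in\mathbb{R}^{\Omega_f}$ satisfying the upward precondition ($p_\omega=p'_\omega$ on $\Omega_{f\rightarrow}$ and $p_\omega\le p'_\omega$ on $\Omega_{\rightarrow f}$) and a bundle $\Psi'\in D^f(p')$; the other precondition is handled by a symmetric argument. By upper hemi-continuity (Lemma~\ref{Fede}), pick $\epsilon>0$ with $D^f(r)\subseteq D^f(p)$ whenever $\|r-p\|<\epsilon$. By NIB applied at $p'$, choose $q'$ with $\|q'-p'\|<\epsilon/2$ and $D^f(q')=\{\Psi'\}$; upper hemi-continuity together with non-emptiness of $D^f$ makes the single-valued region open, so $D^f\equiv\{\Psi'\}$ on some open ball $B(q',r)$ with $r<\epsilon/2$.

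Define $q:=q'-(p'-p)$ and the translated open ball $V:=B(q,r)$; note $\|q-p\|=\|q'-p'\|<\epsilon/2$, so $V\subseteq B(p,\epsilon)$. By density of single-valued demand (Lemma~\ref{Fede}), pick $\tilde{q}\in V$ with $D^f(\tilde{q})=\{\tilde{\Psi}\}$ and set $\tilde{q}':=\tilde{q}+(p'-p)\in B(q',r)$, so $D^f(\tilde{q}')=\{\Psi'\}$. Since $\tilde{q}'-\tilde{q}=p'-p$, we have $\tilde{q}_\omega=\tilde{q}'_\omega$ on $\Omega_{f\rightarrow}$, $\tilde{q}_\omega\le\tilde{q}'_\omega$ on $\Omega_{\rightarrow f}$, and equality in any coordinate holds if and only if it holds for $(p,p')$. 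Moreover $\tilde{\Psi}\in D^f(p)$ by the choice of $\epsilon$. Applying weak FS (and, for Part~2, weak LAD) to the single-valued pair $(\tilde{q},\tilde{q}')$ yields $\Psi'_{f\rightarrow}\subseteq\tilde{\Psi}_{f\rightarrow}$, $\{\omega\in\tilde{\Psi}_{\rightarrow f}:\tilde{q}_\omega=\tilde{q}'_\omega\}\subseteq\Psi'_{\rightarrow f}$, and $|\tilde{\Psi}_{\rightarrow f}|-|\tilde{\Psi}_{f\rightarrow}|\ge|\Psi'_{\rightarrow f}|-|\Psi'_{f\rightarrow}|$. Using $\tilde{q}_\omega=\tilde{q}'_\omega\iff p_\omega=p'_\omega$, these are precisely the FS and LAD conclusions for $(p,p')$, witnessed by the single bundle $\Psi:=\tilde{\Psi}$. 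The symmetric precondition delivers the CSC/SSS/LAS conclusions for the downward case analogously, completing both parts.

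The main obstacle is orchestrating the coupled perturbation: perturbing $p$ and $p'$ independently toward single-valued demand would typically break the equalities $p_\omega=p'_\omega$ on $\Omega_{f\rightarrow}$ and thus defeat weak FS, which has nothing to say when the precondition fails. The translate construction locks the two perturbations together so that the entire difference vector $p'-p$ is preserved coordinate by coordinate; openness of the single-valued region (from upper hemi-continuity plus non-emptiness of $D^f$) ensures the translated ball $V$ has positive radius, and density of single-valued demand inside $V$ then supplies the required $\tilde{q}$. A secondary point worth checking in the write-up is that the \emph{same} bundle $\tilde{\Psi}$ simultaneously witnesses all three conclusions of Monotone Substitutability, which is automatic here because weak FS, weak LAD, and weak LAS are applied to the same single-valued $(\tilde{q},\tilde{q}')$ and thereby refer to the same $\tilde{\Psi}=D^f(\tilde{q})$.
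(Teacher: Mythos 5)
Your proof is correct and follows essentially the same route as the paper's: both use NIB at $p'$ to obtain a nearby $q'$ with single-valued demand, translate by $p-p'$ to get $q$, invoke density of single-valued demand to perturb $q$ to a single-valued $\tilde{q}$ within the upper-hemicontinuity radius of $p$, and translate back by $p'-p$ to get $\tilde{q}'$ still in the single-valued ball around $q'$, then apply weak FS/LAD/LAS to the coupled pair $(\tilde{q},\tilde{q}')$. The observation that the translation preserves not only the inequality precondition but also the exact equality pattern coordinate by coordinate is the same key point the paper relies on.
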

\begin{proof}
	Let $p,p'\in\mathbb{R}^{\Omega_f}$ such that $p_{\omega}= p'_{\omega}$ for $\omega\in\Omega_{f\rightarrow}$ and $p_{\omega}\leq p'_{\omega}$ for $\omega\in \Omega_{\rightarrow f} $. Let $\Psi'\in D^f(p')$.
	By upper hemi-continuity there exists an $\epsilon>0$ such that for $\|p-q\|<\epsilon$ we have $D^f(q)\subseteq D^f(p)$. By NIB, there is a $q'$ with $\|q'-p'\|<\epsilon/2$ and $D^f(q')=\{\Psi'\}$. Let $q:=p+q'-p'$. By construction $\|q-p\|=\|q'-p'\|<\epsilon/2<\epsilon$ and therefore $D^f(q)\subseteq D^f(p).$ By upper hemi-continuity there exists an $\epsilon'>0$ such that for $r'$ with $\|r'-q'\|<\epsilon'$ we have $D^f(r')=\{\Psi'\}=D^f(q')$.
	 We may choose $\epsilon'<\epsilon/2$. By the second part of Lemma~\ref{Fede}, there exists a $\tilde{p}\in\mathbb{R}^{\Omega_f}$ with $\|\tilde{p}-q\|<\epsilon'$ such that demand is single-valued,  $D^f(\tilde{p})=\{\Psi\}$ for a $\Psi\subseteq \Omega_f$. As, $\|\tilde{p}-p\|\leq\|\tilde{p}-q\|+\|p-q\| <\epsilon'+\epsilon/2<\epsilon$, we have $\Psi\in D^f(p)$. Let $\tilde{p}':=q'+\tilde{p}-q$. As $\|\tilde{p}'-q'\|=\|\tilde{p}-q\|<\epsilon'$, we have $D^f(\tilde{p}')=\{\Psi'\}$. By construction, we have $\tilde{p}=p+(q'-q)+(\tilde{p}-q)$ and $\tilde{p}'=p'+(q'-q)+(\tilde{p}-q)$. Since  $p_{\omega}= p'_{\omega}$ for $\omega\in\Omega_{f\rightarrow}$ and $p_{\omega}\leq p'_{\omega}$ for $\omega\in \Omega_{\rightarrow f} $, this implies  $\tilde{p}_{\omega}= \tilde{p}'_{\omega}$ for $\omega\in\Omega_{f\rightarrow}$ and $\tilde{p}_{\omega}\leq \tilde{p}'_{\omega}$ for $\omega\in \Omega_{\rightarrow f} $. By weak FS applied to the vectors $\tilde{p}$ and $\tilde{p}'$ and the fact that demand at both price vectors is single-valued with $D^f(\tilde{p})=\{\Psi\}$ and $D^f(\tilde{p}')=\{\Psi'\}$ we obtain
	$$\{\omega\in\Psi_{\rightarrow f}:p_{\omega}=p'_{\omega}\}\subseteq \Psi'_{\rightarrow f},\quad \Psi'_{f\rightarrow}\subseteq\Psi_{f\rightarrow}.$$
	If, moreover, weak LAD holds, then $$|\Psi_{\rightarrow f}|- |\Psi_{f\rightarrow}|\geq |\Psi'_{\rightarrow f}|- |\Psi'_{f\rightarrow}|.$$  A completely analogous argument shows the second part of the FS, resp.~of the Monotone Substitutability property.
\end{proof}
The converse of the first part of the proposition is not true, as the following example shows.
\begin{example}\label{ex3}
	Consider two trades $\Omega=\{\omega_1,\omega_2\}$ with $f=b(\omega_1)=b(\omega_2)$.
We let $u^{f}(\emptyset)=0$, $u^{f}(\{\omega_i\},p_{\omega_i})=3-p_{\omega_i}$ for $i=1,2$, and $$u^{f}(\{\omega_1,\omega_2\},p)=\begin{cases}
4-p_{\omega_1}-p_{\omega_2}\quad&\text{ if }p_{\omega_1}+p_{\omega_2}\leq2,\\
2-\frac{(p_{\omega_1}+p_{\omega_2})^2-4}{12}\quad&\text{ if }4\geq p_{\omega_1}+p_{\omega_2}>2,\\5-p_{\omega_1}-p_{\omega_2}\quad&\text{ else.}
\end{cases}$$
	See Figure~\ref{ex1} for a geometric representation of the demand in the example.
  \begin{figure}
	\centering
	\includegraphics[width=0.5\textwidth]{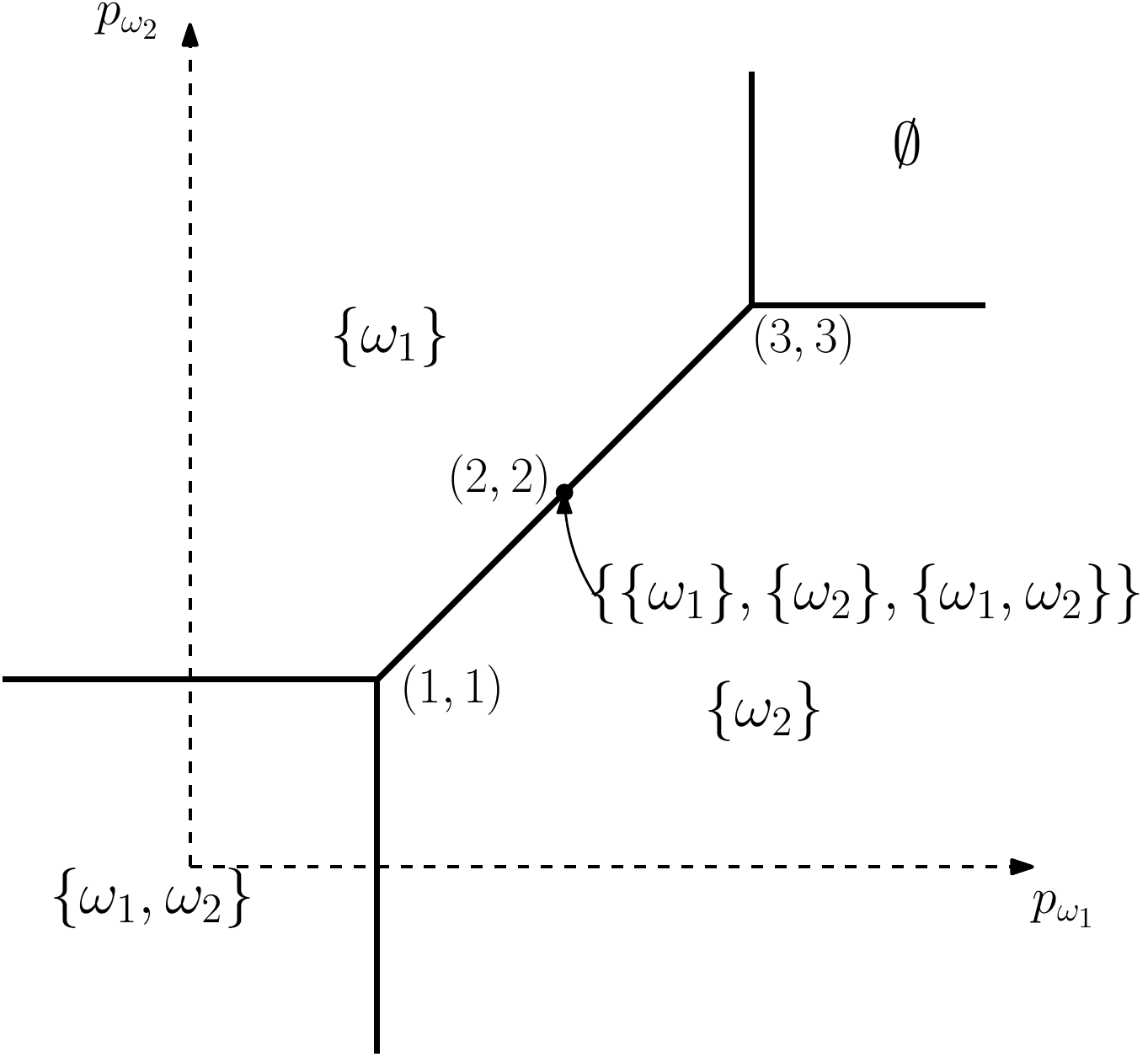} 
	\caption{The demand of $f$ in price space  is represented by black lines. At prices $(2,2)$ the bundle $\{\omega_1,\omega_2\}$ is isolated.}\label{ex1}
\end{figure}
The induced demand violates NIB, since $\{\omega_1,\omega_2\}\in D^f(2,2)$ but $\{\omega_1,\omega_2\}\notin D^f(p)$ for $p\neq(2,2)$ with $p_{\omega_1}+p_{\omega_2}>1.$ Moreover, it violates LAD at $p'=(2,2)$ and, e.g.,~$p=(1,2)$ since $\Psi'=\{\omega_1,\omega_2\}\in D^f(p')$ but $D^f(p)=\{\{\omega_1\}\}$. 
One readily checks that $\tilde{u}^f$ satisfies FS and weak LAD.
\qed
\end{example}
The converse of the second part of Proposition~\ref{single} is true, and more generally the following holds: 
\begin{proposition}\label{NIB}
	Let $u^f$ be a continuous and monotonic utility function inducing a demand 
	$D^f$ satisfying CSC, LAD and LAS. Then $D^f$ satisfies NIB. 
\end{proposition}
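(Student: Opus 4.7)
Given $\Psi\in D^f(p)$ and $\epsilon>0$, I would produce $q$ with $\|q-p\|<\epsilon$ and $D^f(q)=\{\Psi\}$ in two stages. The easy stage eliminates bundles $\Psi'\in D^f(p)$ with $\Psi'\not\subseteq\Psi$ by monotonicity alone; the hard stage eliminates proper subsets $\Psi'\subsetneq\Psi$ by invoking CSC, LAD, LAS together with Lemma~\ref{Fede}.

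\textbf{Easy stage.} Let $p^{(1)}$ be obtained from $p$ by raising $p_\omega$ by a small $\delta>0$ for each $\omega\in\Omega_{\rightarrow f}\setminus\Psi_{\rightarrow f}$ and lowering $p_\omega$ by $\delta$ for each $\omega\in\Omega_{f\rightarrow}\setminus\Psi_{f\rightarrow}$. Only prices of trades outside $\Psi$ are touched, so $u^f(\Psi,p^{(1)})=u^f(\Psi,p)$; any $\Psi'\in D^f(p)$ with $\Psi'\not\subseteq\Psi$ contains a trade whose price moved strictly against $\Psi'$, giving $u^f(\Psi',p^{(1)})<u^f(\Psi',p)=u^f(\Psi,p^{(1)})$ by monotonicity. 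Upper hemi-continuity (part~1 of Lemma~\ref{Fede}) then yields $D^f(p^{(1)})\subseteq\{\Psi'\in D^f(p):\Psi'\subseteq\Psi\}$ for small $\delta$, while $\Psi\in D^f(p^{(1)})$.

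\textbf{Hard stage.} Starting from $p^{(1)}$, lower $p_\omega$ by a small $\tau$ for each $\omega\in\Psi_{\rightarrow f}$, leaving downstream coordinates fixed; then, by Lemma~\ref{Fede}, nudge within the half-affine region $\{r:r\leq p^{(1)}\text{ on upstream},\,r=p^{(1)}\text{ on downstream}\}$ to a point $p^{(2)}$ with $D^f(p^{(2)})=\{\tilde\Psi\}$ single-valued. Upper hemi-continuity forces $\tilde\Psi\in D^f(p^{(1)})$, hence $\tilde\Psi\subseteq\Psi$ by the easy stage. Applying CSC to the pair $(p^{(2)},p^{(1)})$ (lower upstream, equal downstream) starting from $\Psi\in D^f(p^{(1)})$ yields some bundle in $D^f(p^{(2)})$ whose downstream component contains $\Psi_{f\rightarrow}$, while LAD for the same pair yields some bundle in $D^f(p^{(2)})$ with $|\cdot_{\rightarrow f}|-|\cdot_{f\rightarrow}|\geq|\Psi_{\rightarrow f}|-|\Psi_{f\rightarrow}|$. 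Because $D^f(p^{(2)})=\{\tilde\Psi\}$, both existentials collapse to $\tilde\Psi$, giving $\tilde\Psi_{f\rightarrow}\supseteq\Psi_{f\rightarrow}$ and $|\tilde\Psi_{\rightarrow f}|-|\tilde\Psi_{f\rightarrow}|\geq|\Psi_{\rightarrow f}|-|\Psi_{f\rightarrow}|$. Combined with $\tilde\Psi\subseteq\Psi$ these inclusions force $\tilde\Psi_{f\rightarrow}=\Psi_{f\rightarrow}$ and then $\tilde\Psi_{\rightarrow f}=\Psi_{\rightarrow f}$, so $\tilde\Psi=\Psi$. If $\Psi_{\rightarrow f}=\emptyset$, the symmetric downstream-only perturbation together with the dual halves of CSC and LAS gives the analogous conclusion.

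\textbf{Main obstacle.} The delicate step is the Lemma~\ref{Fede} nudge in the hard stage: it must produce single-valued demand at $p^{(2)}$ while preserving the price-inequality structure that CSC and LAD require of the comparison $(p^{(2)},p^{(1)})$. Lemma~\ref{Fede} supplies density of single-valued demand in $\mathbb{R}^{\Omega_f}$, but the argument needs density within the half-affine set where downstream prices are frozen at $p^{(1)}$ and upstream prices are bounded above by $p^{(1)}$. Establishing this restricted density --- essentially Lemma~\ref{Fede} applied to the utility function with downstream prices held fixed --- is the main technical hurdle that must be cleared before the CSC/LAD/LAS argument can be invoked cleanly; without it one cannot pin the two existential bundles to the same $\tilde\Psi$.
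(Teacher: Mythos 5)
Your "easy stage" matches the paper's opening move exactly, and your diagnosis of the "main obstacle" is correct and important: the argument genuinely does \emph{not} go through as you have sketched it, because the restricted density you would need is false. To see why it is not a removable technicality, consider a quasi-linear firm with one upstream trade $\alpha$ and one downstream trade $\beta$, and valuations chosen so that $v^f(\{\alpha,\beta\})-v^f(\{\alpha\})=-p^{(1)}_{\beta}$. With $p^{(1)}_{\beta}$ frozen, the difference $u^f(\{\alpha,\beta\},p)-u^f(\{\alpha\},p)$ is identically zero on the entire affine slice $\{r: r_{\beta}=p^{(1)}_{\beta}\}$, so the tie between $\{\alpha,\beta\}$ and $\{\alpha\}$ persists for every choice of upstream price and no nudge inside your half-affine region can make the demand single-valued. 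In general, whenever two bundles in $D^f(p^{(1)})$ have identical upstream components, monotonicity gives you no leverage inside a downstream-frozen slice, so Lemma~\ref{Fede}-type density simply does not descend to that lower-dimensional set.

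The paper sidesteps this by never leaving the full-dimensional price space. After your easy stage (their $p+\tilde\epsilon$), they invoke Lemma~\ref{Fede} to find a single-valued $q$ in the \emph{open orthant} where upstream prices are (weakly) lowered \emph{and} downstream prices are (weakly) raised relative to $p+\tilde\epsilon$; since that orthant has nonempty interior, ordinary density of single-valued points applies. Then, instead of asking the single-valued point to lie in a downstream-frozen slice, they manufacture an intermediate vector $r$ that takes $q$'s upstream coordinates and $p+\tilde\epsilon$'s downstream coordinates. The chain $p+\tilde\epsilon\to r\to q$ is then processed in two legs: CSC (first part) from $p+\tilde\epsilon$ to $r$ produces $\Phi\in D^f(r)$ with $\Psi_{f\rightarrow}\subseteq\Phi_{f\rightarrow}$, upper hemi-continuity forces $\Phi\subseteq\Psi$ and hence $\Phi_{f\rightarrow}=\Psi_{f\rightarrow}$; then LAS and CSC (second part) from $r$ to $q$ — where the demand at $q$ \emph{is} a singleton $\{\Xi\}$ — pin down $\Xi_{f\rightarrow}=\Psi_{f\rightarrow}$. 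Crucially, $D^f(r)$ is allowed to be multi-valued: the existentials in CSC/LAS need only collapse at $q$, not at $r$. A symmetric argument with LAD handles the upstream side. So the paper's proof is not your proof with the restricted-density gap filled; it is a different route that deliberately avoids needing that false density statement.
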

\begin{proof}
Let $p\in\mathbb{R}^{\Omega_f}$ and $\Psi\in D^f(p)$. We show that for each $\epsilon>0$ there is a $q\in\mathbb{R}^{\Omega_f}$ with $\|p-q\|<\epsilon$ such that $D^f(q)=\{\Psi\}$. Let $\epsilon>0.$
First, consider a vector $\tilde{\epsilon}\in\mathbb{R}^{\Omega_f}$	with $\|\tilde{\epsilon}\|<\epsilon$ such that $\tilde{\epsilon}_{\omega}>0$ for $\omega\in\Omega_{\rightarrow f}\setminus\Psi$, $\tilde{\epsilon}_{\omega}<0$ for $\omega\in\Omega_{f\rightarrow}\setminus\Psi$, and $\tilde{\epsilon}_{\omega}=0$ for $\omega\in\Psi.$ By monotonicity of $u^f$, for each $\Xi\subseteq \Omega_f$ with $\Xi\nsubseteq \Psi$ we have $u^f(\Xi,p+\tilde{\epsilon})<u^f(\Xi,p)$, and we have $u^f(\Psi,p+\tilde{\epsilon})=u^f(\Psi,p)$.
Thus,  $D^f(p+\tilde{\epsilon})\subseteq 2^{\Psi}$ and $\Psi\in D^f(p+\tilde{\epsilon})$.
By upper hemi-continuity, there is a $\epsilon'>0$ such that for $q\in\mathbb{R}^{\Omega_f}$ with $\|q-(p+\tilde{\epsilon})\|<\epsilon'$ we have $D^f(q)\subseteq D^f(p+\tilde{\epsilon}).$ We may choose $\epsilon'<\epsilon-\|\tilde{\epsilon}\|.$ By the second part of Lemma~\ref{Fede}, there is a $q\in\mathbb{R}^{\Omega_f}$ with $\|q-(p+\epsilon)\|<\epsilon'$ such that demand is single-valued, $|D^f(q)|=1$, and we may choose it such that $q_{\omega}\leq p_{\omega}+\tilde{\epsilon}_{\omega}$ for $\omega\in\Omega_{\rightarrow f}$ and $q_{\omega}\geq p_{\omega}+\tilde{\epsilon}_{\omega}$ for $\omega\in\Omega_{f\rightarrow}.$ 
We show that for the unique $\Xi\subseteq \Omega_f$ with $D^f(q)=\{\Xi\}$ we have $\Xi_{f\rightarrow}=\Psi_{f\rightarrow}$. An analogous argument shows that $\Xi_{\rightarrow f}=\Psi_{\rightarrow f}$.

Let $r\in\mathbb{R}^{\Omega_f}$  with $r_{\omega}=q_{\omega}$ for $\omega\in\Omega_{\rightarrow f}$ and $r_{\omega}= p_{\omega}+\tilde{\epsilon}_{\omega}$ for $\omega\in\Omega_{f\rightarrow}$.  By the first part of the CSC condition applied to vectors $p+\tilde{\epsilon}$ (in the role of $p'$) and $r$ (in the role of $p$), there is a $\Phi\in D^f(r)$ such that $\Psi_{f\rightarrow}\subseteq\Phi_{f\rightarrow}$. Since $\|r-(p+\tilde{\epsilon})\|\leq \|q-(p+\tilde{\epsilon})\|<\epsilon'$, we have $D^f(r)\subseteq D^f(p+\tilde{\epsilon})\subseteq 2^\Psi$. Thus, $\Phi\subseteq\Psi$ and, by the previous observation that $\Psi_{f\rightarrow}\subseteq\Phi_{f\rightarrow}$, we have $\Phi_{f\rightarrow}=\Psi_{f\rightarrow}.$	By LAS applied to prices $r$ (in the role of $p'$) and $q$ (in the role of $p$) we have \begin{align}|\Xi_{f\rightarrow}|-|\Xi_{\rightarrow f}|\geq |\Phi_{f\rightarrow}|-|\Phi_{\rightarrow f}|.\label{ineq}\end{align}
	 Since $\|q-(p+\tilde{\epsilon})\|<\epsilon'$, we have $D^f(q)\subseteq D^f(p+\tilde{\epsilon})\subseteq 2^\Psi$. Thus, $\Xi\subseteq\Psi$ and, by the previous observation that $\Psi_{f\rightarrow}=\Phi_{f\rightarrow}$, we have $|\Xi_{f\rightarrow}|\leq|\Psi_{f\rightarrow}|=|\Phi_{f\rightarrow}|.$ Together with Inequality~(\ref{ineq}) this implies $|\Xi_{\rightarrow f}|\leq |\Phi_{\rightarrow f}|$. By the second part of the CSC condition applied to prices $r$ (in the role of $p'$) and $q$ (in the role of $p$) we have $\Phi_{\rightarrow f}\subseteq \Xi_{\rightarrow f}.$ Together with the previous inequality this implies $\Phi_{\rightarrow f}= \Xi_{\rightarrow f}.$  Furthermore, together with Inequality~(\ref{ineq}), this implies $|\Xi_{f\rightarrow}|\geq |\Phi_{f\rightarrow}|,$ and, as $\Phi_{f\rightarrow}=\Psi_{f\rightarrow}$, we have $|\Xi_{f\rightarrow}|\geq |\Psi_{f\rightarrow}|.$ As observed previously, $\Xi\subseteq \Psi$. Together with the previous inequality this implies $\Xi_{f\rightarrow}=\Psi_{f\rightarrow}.$

\end{proof}
\cite{Hatfield:2018} do not assume continuity of utility functions for their main result, and in that case, monotone substitutability is generally a stronger property than the combination of FS, LAD and LAS. With continuity, however, it follows as a corollary from the previous propositions that the conditions are equivalent.
\begin{corollary}\label{MON}
	Let $u^f$ be a continuous and monotonic utility function inducing a demand 
	$D^f$ satisfying FS, LAD and LAS. Then $D^f$ satisfies FS, LAD, and LAS if and only if $D^f$ satisfies Monotone Substitutability.
\end{corollary}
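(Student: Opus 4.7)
The direction ``Monotone Substitutability implies FS, LAD, LAS'' is immediate: given $p,p'$ and $\Psi'\in D^f(p')$, the single $\Psi\in D^f(p)$ produced by Monotone Substitutability simultaneously witnesses SSS, CSC, and the relevant law of aggregate demand/supply, so each condition holds individually.

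For the converse direction, the plan is to chain together the two preceding propositions in Appendix~B. Assume $D^f$ satisfies FS, LAD, and LAS. In particular, $D^f$ satisfies CSC (as part of FS), LAD, and LAS, so Proposition~\ref{NIB} applies and we conclude that $D^f$ satisfies the No Isolated Bundles property NIB. Moreover, since FS, LAD, and LAS are (strictly) stronger than their weak counterparts — at price vectors where the demand is single-valued, the universal quantification over $\Psi'\in D^f(p')$ in the strong versions coincides with the single-bundle quantification in the weak versions — $D^f$ also satisfies weak FS, weak LAD, and weak LAS.

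Now I invoke the second part of Proposition~\ref{single}: NIB combined with weak FS, weak LAD, and weak LAS implies Monotone Substitutability. This yields the desired conclusion. The nontrivial content of this chain lies in Proposition~\ref{NIB} (already proved earlier) — obtaining NIB requires combining CSC, LAD, and LAS in a non-obvious way. Once NIB is in hand, the proof of Proposition~\ref{single} produces, by a perturbation argument, single-valued price vectors $\tilde{p},\tilde{p}'$ near $p,p'$ with $D^f(\tilde{p})=\{\Psi\}$ and $D^f(\tilde{p}')=\{\Psi'\}$, such that the weak conditions transfer the SSS, CSC, and aggregate-demand/supply conclusions simultaneously to the same pair $(\Psi,\Psi')$. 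Since upper hemi-continuity guarantees $\Psi\in D^f(p)$, this produces a single $\Psi\in D^f(p)$ satisfying all the conclusions of Monotone Substitutability at once.

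The only mild obstacle is bookkeeping: one should check that the weak versions of LAD and LAS genuinely follow from the strong versions under our hypotheses (which is immediate from the definitions, since the weak versions only impose the conclusion when the demand is single-valued at both price vectors), and that the single perturbation argument used in the proof of Proposition~\ref{single} handles the aggregate-demand/supply inequality in addition to the set-inclusion conditions — but this is built into that proof's statement. No new techniques beyond Propositions~\ref{NIB} and~\ref{single} are required.
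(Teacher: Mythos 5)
Your proof is correct and follows exactly the paper's argument: establish NIB via Proposition~\ref{NIB}, then apply the second part of Proposition~\ref{single}, with the easy direction being immediate from the definitions. The extra step you spell out — that FS, LAD, LAS imply their weak counterparts when demand is single-valued — is a routine verification that the paper leaves implicit but is indeed what makes the application of Proposition~\ref{single} legitimate.
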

\begin{proof}
	Let $D^f$ satisfy FS, LAD and LAS. By Proposition~\ref{NIB}, $D^f$ satisfies NIB and therefore, by the second part of Proposition~\ref{single}, it satisfies Monotone Substitutability. The other direction follows from the definitions.
\end{proof}
In the context of object allocation with quasi-linear preferences,~\cite{GulStacchetti1999} show that gross substitutability is equivalent to a "single improvement property": If the price of an object increases, and previously a bundle $A$ of objects has been demanded, while now a bundle $B$ of objects is demanded, then $|B\setminus A|\leq 1$ and $|A\setminus B|\leq 1$, i.e.~at most one new object is added to the demand and at most one object is no longer demanded. We can prove a (partial) generalization of the result to general utility functions. The proof uses the following lemma which will also be useful subsequently.
\begin{lemma}\label{DemandInv}
	Let $u^f$ be a continuous and monotonic utility function inducing a demand correspondence
	$D^f$.
	For each $p,p'\in\mathbb{R}^{\Omega_f}$
	and $\Psi'\subseteq\Omega_f$ with $p_{\omega}\leq p'_{\omega}$ for
	$\omega\in\Omega_{f\rightarrow}\setminus\Psi'_{f\rightarrow}$,
	$p_{\omega}\geq p'_{\omega}$ for $\omega\in \Psi'_{f\rightarrow}$,
	$p_{\omega}\geq p'_{\omega}$ for $\omega\in\Omega_{\rightarrow
		f}\setminus\Psi'_{\rightarrow f}$ and $p_{\omega}\leq p'_{\omega}$
	for $\omega\in \Psi'_{\rightarrow f}$. 
	\begin{enumerate}
		\item If $D^f$ satisfies weak FS, weak LAD and weak LAS, then $D^f(p')=\{\Psi'\}$ implies $\Psi'\in D^f(p)$.
		\item If $D^f$ satisfies FS, LAD and LAS, then $\Psi'\in D^f(p')$ implies $\Psi'\in D^f(p)$.
	\end{enumerate}
\end{lemma}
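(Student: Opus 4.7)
The plan is to deform $p'$ into $p$ in four successive steps, each touching only one coordinate block of prices, and to verify that $\Psi'$ remains demanded after each step. Concretely, for Part 2, I will define intermediate price vectors $p^{(0)} = p', p^{(1)}, p^{(2)}, p^{(3)}, p^{(4)} = p$, where in step $k$ exactly one of the following four blocks is switched from $p'$-values to $p$-values: (i) upstream prices in $\Psi'_{\rightarrow f}$ (weakly decreasing); (ii) downstream prices in $\Psi'_{f\rightarrow}$ (weakly increasing); (iii) upstream prices outside $\Psi'_{\rightarrow f}$ (weakly increasing); (iv) downstream prices outside $\Psi'_{f\rightarrow}$ (weakly decreasing).

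Steps (iii) and (iv) worsen only trades outside $\Psi'$ and leave prices on every trade contained in $\Psi'$ unchanged. Strict monotonicity of $u^f$ then gives the conclusion with bare hands: any competitor $\Psi''$ that uses at least one of the worsened trades strictly loses utility, while $u^f(\Psi',\cdot)$ does not change, so $\Psi'$ remains a maximizer.

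The substantive steps are (i) and (ii). For step (i), I apply Monotone Substitutability, which by Corollary~\ref{MON} is equivalent to the conjunction of FS, LAD, and LAS under continuity and monotonicity. Applied to the pair $(p^{(1)}, p^{(0)})$ with $\Psi' \in D^f(p^{(0)})$, it produces a single $\Psi^{(1)} \in D^f(p^{(1)})$ satisfying SSS, CSC, and the LAD cardinality inequality simultaneously. Because $p^{(1)}$ and $p^{(0)}$ agree on $\Omega_{\rightarrow f}\setminus\Psi'_{\rightarrow f}$, SSS forces $\Psi^{(1)}_{\rightarrow f} \subseteq \Psi'_{\rightarrow f}$; CSC forces $\Psi'_{f\rightarrow} \subseteq \Psi^{(1)}_{f\rightarrow}$; and combining these two inclusions with the LAD inequality $|\Psi^{(1)}_{\rightarrow f}| - |\Psi^{(1)}_{f\rightarrow}| \geq |\Psi'_{\rightarrow f}| - |\Psi'_{f\rightarrow}|$ forces both containments to be equalities, giving $\Psi^{(1)} = \Psi'$. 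Step (ii) is dual, using the other half of Monotone Substitutability (with LAS in place of LAD).

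For Part 1, I reduce to Part 2 via Proposition~\ref{weak}: it yields a utility function $\tilde{u}^f$ satisfying FS, LAD, and LAS, with the same indirect utility as $u^f$ and inducing a demand $\tilde{D}^f$ that selects from $D^f$. Since $D^f(p') = \{\Psi'\}$ and $\tilde{D}^f(p')$ is non-empty, we must have $\tilde{D}^f(p') = \{\Psi'\}$. Applying Part 2 to $\tilde{u}^f$ then gives $\Psi' \in \tilde{D}^f(p) \subseteq D^f(p)$. The main obstacle throughout is the pin-down argument in steps (i) and (ii): SSS and CSC alone yield only set containments that could a priori be strict, and it is the extra cardinality inequality from LAD (respectively LAS), combined with the Monotone Substitutability guarantee that one and the same $\Psi^{(1)}$ witnesses all three conditions, that closes the gap and forces $\Psi^{(1)} = \Psi'$.
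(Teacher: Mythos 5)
Your Part~2 argument is correct and takes a genuinely different route from the paper. The paper derives Part~2 from Part~1 (use Proposition~\ref{NIB} to find $q'$ near $p'$ with $D^f(q')=\{\Psi'\}$, then translate and apply Part~1, then invoke upper hemi-continuity); you instead invoke Corollary~\ref{MON} and run a clean four-step deformation of $p'$ into $p$, using bare monotonicity for the trades outside $\Psi'$ and the packaged ``same witness satisfies SSS, CSC, and LAD/LAS simultaneously'' guarantee of Monotone Substitutability for the trades inside $\Psi'$, closing the pin-down by squeezing the cardinality inequality between the two inclusions. This is a nice, more modular argument; the underlying machinery (NIB, density of single-valued prices, upper hemi-continuity) is the same, since Corollary~\ref{MON} itself rests on Propositions~\ref{single} and~\ref{NIB}, but your route separates the perturbation content into a cited lemma and replaces the translation step with an elementary monotonicity step. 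There is no circularity here: Corollary~\ref{MON} does not depend on Lemma~\ref{DemandInv}.

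Your Part~1 argument, however, is circular. You reduce Part~1 to Part~2 via Proposition~\ref{weak}, but the paper's proof of Proposition~\ref{weak} explicitly invokes the first part of Lemma~\ref{DemandInv} (twice: once to conclude $\Psi\in D^f(r)$ from $D^f(r')=\{\Psi\}$, and again to propagate this to a neighborhood of $r$ before appealing to density to land in $P_\Psi$). So Proposition~\ref{weak} is downstream of the statement you are trying to prove; using it for Part~1 begs the question. The paper sidesteps this by proving Part~1 directly from weak FS/LAD/LAS with a two-stage perturbation: it first passes from $p'$ to an intermediate $\tilde p$ that agrees with $p'$ on downstream trades and with $p$ on upstream trades, perturbs to a single-valued $r$ near $\tilde p$, applies weak FS and weak LAD there to pin down the demanded bundle as $\Psi'$, and then repeats the argument from $\tilde p$ to $p$ using weak FS and weak LAS. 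Your stepwise argument cannot be run directly under the weak versions because those conditions say nothing at prices where demand is multi-valued, and the intermediate price vectors in your deformation need not have single-valued demand; this is precisely the difficulty the paper's perturbation argument is designed to overcome, and it has to be addressed head-on rather than outsourced to Proposition~\ref{weak}.
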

\begin{proof}
	We first prove the first part. Let $D^f$ satisfy weak FS, weak LAD and weak LAS and $D^f(p')=\{\Psi'\}$. By monotonicity of $u^f$ it is without loss of generality to assume that $p'_{\omega}=p_{\omega}$ for $\omega\in\Omega_f\setminus\Psi'$. 
	By upper hemi-continuity of $D^f$, it suffices to show that for each $\epsilon>0$ there is a $q\in\mathbb{R}^{\Omega_f}$ with $\|p-q\|<\epsilon$ such that $\Psi'\in D^f(q).$ 
	By upper hemi-continuity of $D^f$ there is a $\epsilon'>0$ such that for $q'\in\mathbb{R}^{\Omega_f}$ with  $\|p'-q'\|<\epsilon'$ we have $D^f(q')=\{\Psi'\}=D^f(p')$.
	Define $\tilde{p}\in\mathbb{R}^{\Omega_f}$ such that $$\tilde{p}_{\omega}:=\begin{cases}
	p'_{\omega},\quad\text{ if }\omega\in\Omega_{f\rightarrow}\\p_{\omega},\quad\text{ if }\omega\in\Omega_{\rightarrow f}.
	\end{cases}$$
	By the second part of Lemma~\ref{Fede}, there is a $r\in\mathbb{R}^{\Omega_f}$ with $\|r-\tilde{p}\|<\min\{\tfrac{\epsilon}{2},\tfrac{\epsilon'}{2}\}$	 
	and a $\tilde{\Psi}\subseteq \Omega_f$ such that $D^f(r)=\{\tilde{\Psi}\}$. By upper hemi-continuity of $D^f$ there is a $\tilde{\epsilon}>0$ such that for $\|\tilde{q}-r\|<\tilde{\epsilon}$ we have $D^f(\tilde{q})=\{\tilde{\Psi}\}=D^f(r).$ We may choose $\tilde{\epsilon}<\min\{\tfrac{\epsilon}{2},\tfrac{\epsilon'}{2}\}$.	  
	By the second part of Lemma~\ref{Fede}, there is a $q\in\mathbb{R}^{\Omega_f}$ with $\|(p+(r-\tilde{p}))-q\|<\tilde{\epsilon}$ and a $\Psi\subseteq \Omega_f$ such that $D^f(q)=\{\Psi\}$.	Let $\tilde{q}:=\tilde{p}+q-p$ and $q':=p'+q-p$.  By construction we have $\|q'-p'\|=\|q-p\|\leq\|(p+(r-\tilde{p}))-q\|+\|r-\tilde{p}\|<\tilde{\epsilon}+\min\{\tfrac{\epsilon}{2},\tfrac{\epsilon'}{2}\}<\min\{\epsilon,\epsilon'\}$. Thus $D^f(q')=\{\Psi'\}$.
	By construction, we have 
	$\|r-\tilde{q}\|=\|r-(\tilde{p}+q-p)\|<\tilde{\epsilon}$ and therefore $D^f(\tilde{q})=\{\tilde{\Psi}\}=D^f(r).$ Applying the weak FS, condition to vector $\tilde{q}$ and $q'$ (note that by construction we have $q'_{\omega}=\tilde{q}_{\omega}$ for $\omega\in\Omega_{\rightarrow f}$) we have $\tilde{\Psi}_{f \rightarrow}\subseteq \Psi'_{\rightarrow f}$ and $\Psi'_{f\rightarrow}\subseteq \tilde{\Psi}_{f\rightarrow }$. Applying the weak LAD this implies $\tilde{\Psi}=\Psi'$.
	Applying the weak FS, condition to vector $q$ and $\tilde{q}$ (note that by construction we have $q_{\omega}=\tilde{q}_{\omega}$ for $\omega\in\Omega_{\rightarrow f}$) we have ${\Psi}_{f \rightarrow}\subseteq \tilde{\Psi}_{f\rightarrow }$ and $\tilde{\Psi}_{\rightarrow f}\subseteq {\Psi}_{\rightarrow f}$. Applying the weak LAS this implies $\tilde{\Psi}=\Psi'$. Thus $\Psi'=\tilde{\Psi}=\Psi\in D^f(q)$ as desired.

	The second part of the lemma follows from the first as follows:
	By Proposition~\ref{NIB}, there is for each $\epsilon>0$ a  $q'\in\mathbb{R}^{\Omega_f}$ with $\|p'-q'\|<\epsilon$ such that $D^f(q')=\{\Psi'\}$. By the first part of the lemma applied to vectors $q'$ and $q:=p+q'-p'$ we have $\Psi'\in D^f(q)$. Thus for each $\epsilon>0$ there is a $q\in\mathbb{R}^{\Omega_f}$ with $\|p-q\|<\epsilon$ and $\Psi'\in D^f(q)$. Thus, by upper hemi-continuity, $\Psi'\in D^f(p).$

\end{proof}
\begin{proposition}[Single Improvement Property]\label{SingleImprovement}
Let $u^f$ be a continuous and monotonic utility function inducing a demand
$D^f$ satisfying FS, LAD and LAS. Let  $p,p'\in\mathbb{R}^{\Omega_f}$ such that there is a  $\omega\in\Omega_{\rightarrow f}$ with $p'_{-\omega}=p_{-\omega}$ and  $p_{\omega}<p_{\omega}'$, or a  $\omega\in\Omega_{f\rightarrow}$ with $p'_{-\omega}=p_{-\omega}$ and $p_{\omega}>p_{\omega}'$. Then for each $\Psi'\in D^f(p')$ there exists a $\Psi\in D^f(p)$ such that 
	$$|\Psi_{\rightarrow f}\setminus\Psi'_{\rightarrow f}|+|\Psi'_{f \rightarrow}\setminus\Psi_{f\rightarrow}|\leq 1,\,\text{ and }\,\, |\Psi'_{\rightarrow f}\setminus\Psi_{\rightarrow f}|+|\Psi_{f \rightarrow}\setminus\Psi'_{f\rightarrow}|\leq 1.$$
\end{proposition}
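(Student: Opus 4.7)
The plan is to reduce the statement to a direct application of Monotone Substitutability, which by Corollary \ref{MON} is equivalent (under continuity and monotonicity) to the conjunction of FS, LAD and LAS that is assumed here. Without Monotone Substitutability we would face the awkward problem that SSS, CSC and LAD might be witnessed by different elements of $D^f(p)$, and we would have no way to combine the three inclusions/inequalities on a common set $\Psi$; the main work has already been done in establishing that Corollary.

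By symmetry I will treat only the case $\omega \in \Omega_{\rightarrow f}$ with $p_\omega < p'_\omega$ and $p_{-\omega} = p'_{-\omega}$; the downstream case is analogous and uses the dual half of Monotone Substitutability together with LAS in place of LAD. Given $\Psi' \in D^f(p')$, I would apply the first clause of Monotone Substitutability to obtain a single $\Psi \in D^f(p)$ such that simultaneously
\[
\{\omega' \in \Psi_{\rightarrow f} : p_{\omega'} = p'_{\omega'}\} \subseteq \Psi'_{\rightarrow f}, \qquad \Psi'_{f\rightarrow} \subseteq \Psi_{f\rightarrow}, \qquad |\Psi_{\rightarrow f}| - |\Psi_{f\rightarrow}| \geq |\Psi'_{\rightarrow f}| - |\Psi'_{f\rightarrow}|.
\]
Since $p$ and $p'$ agree everywhere except at $\omega$, the first inclusion reads $\Psi_{\rightarrow f} \setminus \{\omega\} \subseteq \Psi'_{\rightarrow f}$, so $|\Psi_{\rightarrow f} \setminus \Psi'_{\rightarrow f}| \leq 1$, while the second gives $|\Psi'_{f\rightarrow} \setminus \Psi_{f\rightarrow}| = 0$. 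Adding these two facts yields the first desired inequality $|\Psi_{\rightarrow f} \setminus \Psi'_{\rightarrow f}| + |\Psi'_{f \rightarrow} \setminus \Psi_{f\rightarrow}| \leq 1$.

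For the second inequality I would rewrite the LAD inequality using the identities $|A| - |B| = |A \setminus B| - |B \setminus A|$ applied to both the upstream and the downstream symmetric differences. Combined with $|\Psi'_{f\rightarrow} \setminus \Psi_{f\rightarrow}| = 0$, this rearrangement gives
\[
|\Psi_{\rightarrow f} \setminus \Psi'_{\rightarrow f}| - |\Psi'_{\rightarrow f} \setminus \Psi_{\rightarrow f}| \;\geq\; |\Psi_{f\rightarrow} \setminus \Psi'_{f\rightarrow}|,
\]
and hence $|\Psi'_{\rightarrow f} \setminus \Psi_{\rightarrow f}| + |\Psi_{f\rightarrow} \setminus \Psi'_{f\rightarrow}| \leq |\Psi_{\rightarrow f} \setminus \Psi'_{\rightarrow f}| \leq 1$, which is what is needed.

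I anticipate no serious technical obstacle once Corollary \ref{MON} is in hand; the conceptual work is simply recognizing that the hypotheses already package together into Monotone Substitutability, so that a single witness $\Psi$ simultaneously carries the SSS containment (bounding the added trade to at most one), the CSC containment (forcing no downstream trade to be dropped), and the LAD counting inequality (controlling the other direction). The only thing to be careful about is handling the two cases (one upstream coordinate increases vs.\ one downstream coordinate decreases) symmetrically, invoking the two clauses of Monotone Substitutability respectively, and making sure the roles of LAD and LAS are swapped appropriately in the second case.
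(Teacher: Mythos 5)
Your proof is correct and uses the same key tool as the paper, namely Corollary~\ref{MON} to invoke Monotone Substitutability so that a single witness $\Psi\in D^f(p)$ simultaneously carries the SSS containment, the CSC containment, and the LAD inequality. Your arithmetic finish is in fact a bit tidier than the paper's: the paper first disposes of the case $\omega\in\Psi'$ separately via Lemma~\ref{DemandInv} and then runs a four-way case analysis on the values of $|\Psi_{\rightarrow f}|-|\Psi'_{\rightarrow f}|$ and $|\Psi_{f\rightarrow}|-|\Psi'_{f\rightarrow}|$, whereas your rearrangement of the LAD inequality via $|A|-|B|=|A\setminus B|-|B\setminus A|$, together with $|\Psi'_{f\rightarrow}\setminus\Psi_{f\rightarrow}|=0$, delivers both bounds at once and handles $\omega\in\Psi'$ and $\omega\notin\Psi'$ uniformly.
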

\begin{proof}
	We show the result for $p_{\omega}<p'_{\omega}$ and $\omega\in\Omega_{\rightarrow f}$. A dual argument establishes the result for $p_{\omega}>p'_{\omega}$ and $\omega\in\Omega_{f\rightarrow}$. Let $\Psi'\in D^f(p')$. If $\omega\in\Psi'$, then by Lemma~\ref{DemandInv} we may choose $\Psi'=\Psi$. If $\omega\notin \Psi'$, then by monotone substitutability (which holds by Corollary~\ref{MON}), there is a $\Psi\in D^f(p)$ such that $\Psi_{\rightarrow f}\setminus\{\omega\}\subseteq \Psi'_{\rightarrow f}$, $\Psi'_{f\rightarrow}\subseteq \Psi_{f\rightarrow}$,
	and $|\Psi_{\rightarrow f}|- |\Psi_{f\rightarrow}|\geq |\Psi'_{\rightarrow f}|- |\Psi'_{f\rightarrow}|$. The inequality can be rearranged to $|\Psi_{\rightarrow f}|- |\Psi'_{\rightarrow f}|\geq |\Psi_{f\rightarrow }|-|\Psi'_{f\rightarrow}|. $ The condition that $\Psi_{\rightarrow f}\setminus\{\omega\}\subseteq \Psi'_{\rightarrow f}$ implies $|\Psi_{\rightarrow f}|- |\Psi'_{\rightarrow f}|\leq 1$, whereas the condition  $\Psi'_{f\rightarrow}\subseteq \Psi_{f\rightarrow}$ implies $|\Psi_{f\rightarrow }|-|\Psi'_{f\rightarrow}|\geq 0$. Thus
	$$1\geq|\Psi_{\rightarrow f}|- |\Psi_{\rightarrow f}'|\geq |\Psi_{f\rightarrow }|- |\Psi'_{f\rightarrow}|\geq0.$$
	Since $\Psi_{\rightarrow f}\setminus\{\omega\}\subseteq \Psi'_{\rightarrow f}$, if $|\Psi_{\rightarrow f}|- |\Psi_{\rightarrow f}'|=1$, then $\Psi_{\rightarrow f}=\Psi'_{\rightarrow f}\cup\{\omega\}$, and if $|\Psi_{\rightarrow f}|- |\Psi_{\rightarrow f}'|=0$, then $\Psi_{\rightarrow f}=\Psi_{\rightarrow f}'$ or $\Psi_{\rightarrow f}=\Psi_{\rightarrow f}'\setminus\{\omega\}\cup\{\omega'\}$ for $\omega'\in \Psi'_{\rightarrow f}$.
	Since $\Psi'_{f\rightarrow}\subseteq \Psi_{f\rightarrow}$, if $|\Psi_{f\rightarrow}|- |\Psi_{f\rightarrow }'|=1$, then $\Psi_{f\rightarrow}=\Psi'_{f\rightarrow}\cup\{\omega'\}$ for an $\omega'\in\Omega_{f\rightarrow}\setminus\Psi'$, and if $|\Psi_{\rightarrow f}|- |\Psi_{\rightarrow f}'|=0$, then $\Psi_{f\rightarrow }=\Psi_{f\rightarrow}'$. In conclusion, if
	$|\Psi_{\rightarrow f}|- |\Psi_{\rightarrow f}'|= |\Psi_{f\rightarrow }|- |\Psi'_{f\rightarrow}|=1$, then $\Psi=\Psi'\cup\{\omega,\omega'\}$ for an $\omega'\in\Omega_{f\rightarrow }\setminus\Psi$, if  $1=|\Psi_{\rightarrow f}|- |\Psi_{\rightarrow f}'|> |\Psi_{f\rightarrow }|- |\Psi'_{f\rightarrow}|=0$, then $\Psi=\Psi'\cup\{\omega\}$, and if $|\Psi_{\rightarrow f}|- |\Psi_{\rightarrow f}'|= |\Psi_{f\rightarrow }|- |\Psi'_{f\rightarrow}|=0$, then $\Psi=\Psi'$ or $\Psi=\Psi'\cup\{\omega\}\setminus\{\omega'\}$ for $\omega'\in\Psi'_{\rightarrow f}$. In either case, we have
	$$|\Psi_{\rightarrow f}\setminus\Psi'_{\rightarrow f}|+|\Psi'_{f \rightarrow}\setminus\Psi_{f\rightarrow}|\leq 1,\,\text{ and }\,\, |\Psi'_{\rightarrow f}\setminus\Psi_{\rightarrow f}|+|\Psi_{f \rightarrow}\setminus\Psi'_{f\rightarrow}|\leq 1.$$
\end{proof}

\section{Proof of Proposition~\ref{weak}}

%
%

\begin{proof}
	We first define the demand $\tilde{D}^f$ and show that it is a selection from $D^f$. Then we rationalize it by a continuous and monotonic utility function that induces the same indirect utility. Afterwards we show that it satisfies FS, LAD and LAS.
	
		For each $\Psi\subseteq \Omega_f$, consider the (possibly empty) set of price vectors $p$ such that $\Psi$ is the unique demanded bundle at $p$:
	$$P_{\Psi}:=\{p\in\mathbb{R}^{\Omega_f}: D^f(p)=\{\Psi\}\}.$$
	
	Let $\bar{P}_{\Psi}$ be the (topological) closure of $P_{\Psi}$. We let $$\tilde{D}^f(p):=\{\Psi\subseteq\Omega_f:p\in\bar{P}_{\Psi}\}.$$	
		By upper hemi-continuity of $D^f$, for each $\Psi\subseteq \Omega_f$ and $p\in\bar{P}_{\Psi}$ we have $\Psi\in D^f(p)$. Thus $\tilde{D}^f(p)\subseteq D^f(p)$ for each $p\in \mathbb{R}^{\Omega_f}$.

	\begin{claim}
	For each $\Psi\subseteq \Omega_f$,
	let ${P}^{pr}_{\Psi}:=\{(p_{\omega})_{\omega\in\Psi}: {p}\in P_{\Psi}\}$ be the projection of ${P}_{\Psi}$ to $\mathbb{R}^{\Psi}$ and $\overline{P_{\Psi}^{pr}}$ its (topological) closure. Then there is a continuous and monotonic utility function $\tilde{u}^f(\Psi,\cdot)$ such that 
	\begin{align}
	&\tilde{u}^f(\Psi,p)=u^f(\Psi,p),\quad&\text{if } p\in \overline{P_{\Psi}^{pr}},\\&\tilde{u}^f(\Psi,p)<u^f(\Psi,p),\quad\quad&\text{if }p\notin \overline{P_{\Psi}^{pr}}.
	\end{align}
\end{claim}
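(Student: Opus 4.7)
Let $C := \overline{P_\Psi^{pr}} \subseteq \mathbb{R}^\Psi$ and $O := \mathbb{R}^\Psi \setminus C$. My plan is to define $\tilde u^f(\Psi, p) := u^f(\Psi, p) - \rho(p)$ for a continuous nonnegative penalty $\rho$ whose zero set equals $C$; then the equality and strict-inequality conditions of the claim, together with continuity of $\tilde u^f(\Psi, \cdot)$, are immediate, and the only nontrivial task is to arrange $\rho$ so that $\tilde u^f(\Psi, \cdot)$ remains strictly monotonic in the sense of Section~2. Throughout, call a pair $p, p' \in \mathbb{R}^\Psi$ a \emph{monotonic pair} (with $p'$ preferred) if $p \neq p'$, $p_\omega \leq p'_\omega$ for $\omega\in\Psi_{f\rightarrow}$, and $p_\omega \geq p'_\omega$ for $\omega\in\Psi_{\rightarrow f}$, so that strict monotonicity of $u^f(\Psi, \cdot)$ gives $u^f(\Psi, p') > u^f(\Psi, p)$.

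In the degenerate case $C = \emptyset$, take $\rho \equiv 1$: the function $\tilde u^f(\Psi, \cdot) := u^f(\Psi, \cdot) - 1$ inherits continuity and strict monotonicity from $u^f(\Psi, \cdot)$ and lies strictly below it everywhere. For nonempty $C$, cover the open set $O$ by a locally finite family of open balls $\{B_n\}_{n\in\mathbb{N}}$ with $\overline{B_n}\subseteq O$ of small diameter, take subordinate bump functions $\varphi_n\colon \mathbb{R}^\Psi\to[0,1]$ that are strictly positive on $B_n$ and vanish off $\overline{B_n}$, and set $\rho:=\sum_n \alpha_n \varphi_n$ for positive coefficients $\alpha_n$ to be chosen. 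For any $\alpha_n>0$ this $\rho$ is continuous, vanishes exactly on $C$, and is strictly positive on $O$. The key quantitative input for picking the $\alpha_n$ is that on each compact $\overline{B_n}$ the continuous, strictly monotonic function $u^f(\Psi,\cdot)$ admits a positive \emph{uniform modulus of strict monotonicity}: for every $r>0$ the infimum $\gamma_n(r)$ of $u^f(\Psi,p')-u^f(\Psi,p)$ over monotonic pairs $p,p'\in \overline{B_n}$ with $\|p'-p\|_\infty\geq r$ is strictly positive, by a compactness argument applied to the continuous, strictly positive function $(p,p')\mapsto u^f(\Psi,p')-u^f(\Psi,p)$ on the closed set of qualifying pairs.

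Choosing each $\alpha_n$ much smaller than the relevant values of $\gamma_n$ and making $(\alpha_n)$ summable (so as to bound the total variation of $\rho$) ensures that the oscillation of $\rho$ between any monotonic pair is strictly dominated by the corresponding $u^f(\Psi, \cdot)$-increment, and hence $\tilde u^f(\Psi, \cdot) = u^f(\Psi, \cdot) - \rho$ is strictly monotonic. The main obstacle is enforcing this domination simultaneously for short monotonic pairs contained in a single $B_n$, where the $u^f$-gap is small and $\alpha_n$ must be correspondingly small, and for long pairs crossing several balls, where the total variation of $\rho$ must remain bounded by the (much larger) $u^f$-gap. A geometrically decaying schedule $\alpha_n\to 0$ calibrated to the moduli $\gamma_n$ reconciles both regimes; once this calibration is made, verifying the three conditions of the claim is direct.
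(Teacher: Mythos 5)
There is a genuine gap, and it is located exactly where you flag the ``main obstacle'' but then wave at it. The hard content of the claim is the monotonicity of $\tilde u^f(\Psi,\cdot)=u^f(\Psi,\cdot)-\rho$, which you try to secure by making $\rho$ quantitatively small relative to the increments of $u^f(\Psi,\cdot)$. But strict monotonicity of $u^f(\Psi,\cdot)$ gives no lower bound on its local rate of increase: for a single downstream trade $\omega$, $u^f(\Psi,p_\omega)=p_\omega^3$ is continuous and strictly increasing yet has vanishing derivative at $p_\omega=0$. Your modulus $\gamma_n(r)$ is positive for each fixed $r>0$, but $\gamma_n(r)/r\to 0$ as $r\to 0$ whenever the rate of increase vanishes somewhere in $\overline{B_n}$. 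Meanwhile each bump $\varphi_n$ has a fixed Lipschitz constant $L_n$, and any ball-supported bump necessarily increases in the preferred direction at some points of its support, so for a short monotone pair of displacement $\delta$ near such a point the $\rho$-increment is of order $\alpha_n L_n\delta$ — linear in $\delta$ — while the $u^f$-increment can be $o(\delta)$. No positive choice of $\alpha_n$ makes a linear oscillation subordinate to a sub-linear one, and ``a geometrically decaying schedule $\alpha_n\to 0$'' does not touch this: the obstruction is a local slope of $\rho$, not its overall size. To repair it one would have to force $\rho$'s directional derivative in the preferred direction to vanish wherever $u^f(\Psi,\cdot)$'s does, which your construction neither attempts nor could achieve from topology alone, and which is not well-posed anyway since $u^f(\Psi,\cdot)$ need not be differentiable.

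The reason the claim is nevertheless true is structural, and your proposal discards the structure entirely. Under weak FS, weak LAD and weak LAS, Lemma~\ref{DemandInv}(1) shows that $P_\Psi$ is comprehensive in the preferred direction: if $D^f(p')=\{\Psi\}$ and one lowers buy prices and raises sell prices within $\Psi$ holding the other coordinates fixed, $\Psi$ remains demanded. The paper exploits exactly this by taking $\rho(p)=d(p,P_\Psi^{pr})$ and proving that $\rho$ is \emph{weakly decreasing} along monotone pairs: a near-minimizer $q'\in P_\Psi^{pr}$ for $p'$ translates to $q=p-(p'-q')$, which lies in $\overline{P_\Psi^{pr}}$ by comprehensiveness and has the same distance to $p$ as $q'$ to $p'$, so $d(p,P_\Psi^{pr})\leq d(p',P_\Psi^{pr})$. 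Then $u^f(\Psi,\cdot)-\rho$ is strictly monotone because $u^f(\Psi,\cdot)$ strictly increases while $\rho$ weakly decreases — no comparison of rates of change is ever needed. A penalty built only from the topology of $\overline{P_\Psi^{pr}}$ cannot work for arbitrary closed sets and arbitrary strictly monotone utilities; the substitutability hypotheses are what make the distance function antitone, and that is the ingredient your argument must add.
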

	
	\begin{proof}[Proof of Claim~1]
Denote for each $p\in\mathbb{R}^{\Psi}$ by $d(p,{P}^{pr}_{\Psi}):=\inf_{q\in P^{pr}_{\Psi}}\|p-q\|$ the distance from $p$ to $P_{\Psi}^{pr}$. We define
	$$\tilde{u}^f(\Psi,p)=\begin{cases}
	u^f(\Psi,p)-d(p,{P}^{pr}_{\Psi}),\quad&\text{if }{P}_{\Psi}\neq \emptyset,\\
	-\infty,\quad &\text{if }{P}_{\Psi}= \emptyset.
	\end{cases}$$
	Suppose $P_{\Psi}\neq \emptyset$. 
	The function $\tilde{u}^f(\Psi,\cdot)$ is continuous since $u^f(\Psi,\cdot)$ is continuous and the distance to a set in $\mathbb{R}^{\Psi}$ is  continuous. Moreover, $d(p,P_{\Psi}^{pr})\geq 0$ with equality if and only if $p\in \overline{P_{\Psi}^{pr}}.$ Thus $(2)$ and $(3)$ hold.
	It remain to show that $\tilde{u}^f(\Psi,\cdot)$ is 
	monotonic.	Let $p,p'\in \mathbb{R}^{\Psi}$ with $p\neq p'$ such that $p'_{\omega}=p_{\omega}$ for $\omega\in{\Psi_{f\rightarrow}}$ and $p_{\omega}\leq p'_{\omega}$ for $\omega\in\Psi_{\rightarrow f}$ (an analogous argument works for downstream trades).
	For each $\epsilon>0$ there is $q'\in{P_{\Psi}^{pr}}$ such that $|\|p'-q'\|-d(p',P_{\Psi}^{pr})|<\epsilon$. Let $r'\in P_{\psi}$ such that $r'|_{\Psi}=q'$. Let $q:=p-(p'-q')$ and define $r\in\mathbb{R}^{\Omega_f}$ by $r_{\omega}=q_{\omega}$ for $\omega\in \Psi$ and $r_{\omega}=r'_{\omega}$ for $\omega\notin \Psi$. Since $r'\in P_{\Psi}$ we have $D^f(r')=\{\Psi\}$  and thus, by the first part of Lemma~\ref{DemandInv}, we have $\Psi\in D^f(r)$. More generally, by upper hemi-continuity of $D^f$ there is a $\epsilon'>0$ such that for each $s'\in\mathbb{R}^{\Omega_f}$ with  $\|s'-r'\|<\epsilon'$ we have $D^f(s')=\{\Psi\}$. Thus, by the first part of Lemma~\ref{DemandInv}, for each $s\in\mathbb{R}^{\Omega_f}$ with  $\|s-r\|<\epsilon'$ we have $\Psi\in D^f(s)$. By the second part of Lemma~\ref{Fede}, this implies that for each $\tilde{\epsilon}>0$ there is a $s\in P_{\Psi}$ with $\|s-r\|<\tilde{\epsilon}$. Therefore  $q=r|_{\Psi}\in (\bar{P}_{\Psi})^{pr}\subseteq \overline{P_{\Psi}^{pr}}$. Thus, $d(p,P_{\Psi}^{pr})=d(p,\overline{P_{\Psi}^{pr}})\leq \|p-q\|=\|p'-q'\|<d(p',P_{\Psi}^{pr})+\epsilon$. Since this holds for any $\epsilon>0$ we have $d(p,P_{\Psi}^{pr})\leq d(p',P_{\Psi}^{pr})$. Thus, $\tilde{u}^f(\Psi,p)>\tilde{u}^f(\Psi,p').$

\end{proof}
Claim~2 implies that $\tilde{D}^f$ can be rationalized by a continuous and monotonic utility function that induces the same indirect utility: By the second part of Lemma~\ref{Fede}, for each ${p}\in\mathbb{R}^{\Omega_f}$ there is a $\Psi\in D^f({p})$ with ${p}\in\bar{P}_{\Psi}$. Since $p\in \bar{P}_{\Psi}$ we have $p|_{\Psi}\in (\bar{P}_{\Psi})^{pr}\subseteq\overline{P_{\Psi}^{pr}}$ and therefore $\tilde{v}^f(p)=\tilde{u}^f(\Psi,p)=u^f(\Psi,p)=v^f({p})$. Next we show that $\tilde{u}^f$ rationalizes $\tilde{D}^f$ by showing that $\tilde{u}^f(\Psi,p)=\tilde{v}^f(p)$ for $\Psi\in \tilde{D}^f(p)$ and $\tilde{u}^f(\Psi,p)<\tilde{v}^f(p)$ for $\Psi\notin \tilde{D}^f(p)$. Let $\Psi\subseteq\Omega_f$. If $\Psi\in \tilde{D}^f({p})$, then ${p}\in\bar{P}_{\Psi}$  and  $\Psi\in D^f({p})$. Since $p\in\bar{P}_{\Psi}$ we have $p|_{\Psi}\in(\bar{P}_{\Psi})^{pr}\subseteq \overline{P_{\Psi}^{pr}}$ and thus $\tilde{u}^f(\Psi,p)=u^f(\Psi,p)=v^f(p)=\tilde{v}^f(p)$. 
If $\Psi\notin \tilde{D}^f(p)$ and $\Psi\notin D^f(p)$, then $\tilde{u}^f(\Psi,p)\leq u^f(\Psi,p)<v^f(p)=\tilde{v}^f(p)$. If $\Psi\in D^f(p)\setminus\tilde{D}^f(p)$, then we show $p|_{\Psi}\notin \overline{P^{pr}_{\Psi}}$ and thus $\tilde{u}^f(\Psi,p)<u^f(\Psi,p)=v^f(p)=\tilde{v}^f(p)$:
By definition of $\tilde{D}^f$ we have $p\notin \bar{P}_{\Psi}$. Thus, there is a $\epsilon>0$ such that for each $q\in\mathbb{R}^{\Omega_f}$ with $\|p-q\|<\epsilon$ we have $q\notin \bar{P}_{\Psi}$. Let $q\in\mathbb{R}^{\Omega_f}$ such that $\|p-q\|<\epsilon$ and $p_{\omega}=q_{\omega}$ for $\omega\in \Psi$, $p_{\omega}>q_{\omega}$ for $\omega\in \Omega_{f\rightarrow}\setminus\Psi$ and $p_{\omega}<q_{\omega}$ for $\omega\in \Omega_{\rightarrow f}\setminus\Psi$. Since $\Psi\in D^f(p)$ and $u^f$ is monotonic, we have $D^f(q)\subseteq 2^{\Psi}$. By Lemma~\ref{Fede}, we can find a $\epsilon-\|p-q\|>\tilde{\epsilon}>0$ such that for $\|r-q\|<\tilde{\epsilon}$ we have $D^f(r)\subseteq D^f(q)\subseteq 2^{\Psi}$. Now suppose for the sake of contradiction that $p|_{\Psi}=q|_{\Psi}\in\overline{P^{pr}_{\Psi}}$. Then, there is a $r\in P_{\Psi}$ such that $\|p|_{\Psi}-r|_{\Psi}\|<\tilde{\epsilon}$. Since $r\in P_{\Psi}$, we have $D^f(r)=\{\Psi\}$ and thus, in particular,
$u^f(\Psi,r)>u^f(\tilde{\Psi},r)$ for each $\tilde{\Psi}\subsetneqq \Psi$. Now define $\tilde{r}\in\mathbb{R}^{\Omega_f}$ by $\tilde{r}_{\omega}=r_{\omega}$ for $\omega\in\Psi$ and $\tilde{r}_{\omega}=q_{\omega}$ for $\omega\notin \Psi.$ By construction, we have $\|\tilde{r}-q\|=\|r|_{\Psi}-p|_{\Psi}\|<\tilde{\epsilon}$. Thus $D^f(\tilde{r})\subseteq 2^{\Psi}$. Moreover, $u^f(\Psi,\tilde{r})=u^f(\Psi,r)>u^f(\tilde{\Psi},r)=u^f(\tilde{\Psi},\tilde{r})$ for each $\tilde{\Psi}\subsetneqq \Psi$. Thus $D^f(\tilde{r})=\{\Psi\}$ and $\tilde{r}\in P_{\Psi}$. However, $\|p-\tilde{r}\|\leq \|p-q\|+\|q-\tilde{r}\|< \|p-q\|+\tilde{\epsilon}<\epsilon$ and therefore $\tilde{r}\notin \bar{P}_{\Psi}$, a contradiction.


%

	Next we show that that $\tilde{D}^f$ satisfies FS, LAD and LAS. By  Proposition~\ref{single} it suffices to show that $\tilde{D}^f$ satisfies NIB, weak FS, weak LAD and weak LAS. Let $\Psi\subseteq \Omega_f$. Since $\bar{P}_{\Psi}$ is the closure of ${P}_{\Psi}$ there is for each $p\in\bar{P}_{\Psi}$ and $\epsilon>0$ a $q\in {P}_{\Psi}$ with $\|p-q\|<\epsilon$. By definition of $P_{\Psi}$ and $\tilde{D}_f$ we have $\tilde{D}^f(q)=D^f(q)=\{\Psi\}$. Thus $\tilde{D}^f$ satisfies NIB. For the other properties, recall that $D^f$ satisfies weak FS, weak LAD and weak LAS. Thus, it suffices to show that for each $p\in \mathbb{R}^{\Omega_f}$ we have $|\tilde{D}^f(p)|=1$ if and only if $|D^f(p)|=1.$  Let $\Psi\subseteq\Omega_f$ with $p\in\bar{P}_{\Psi}$. If $p\in P_{\Psi}$, then $\tilde{D}^f(p)=D^f(p)=\{\Psi\}$. If $p\in\bar{P}_{\Psi}\setminus P_{\Psi}$, i.e.~if $p$ is on the boundary of $P_{\Psi}$, then for each $\epsilon>0$ there is a $q\in\mathbb{R}^{\Omega_f}\setminus\bar{P}_{\Psi}$ with $\|p-q\|<\epsilon$. By the second part of Lemma~\ref{Fede}, we may choose $q$ such that $|D^f(q)|=1$. Since $\Omega$ is finite, this implies that there is a $\tilde{\Psi}\neq \Psi$ such that for each $\epsilon>0$ there is a $q\in\mathbb{R}^{\Omega_f}\setminus \bar{P}_{\Psi}$ with $\|p-q\|<\epsilon$ and $D^f(q)=\{\tilde{\Psi}\}$. Thus $p\in\bar{P}_{\tilde{\Psi}}$ for $\tilde{\Psi}\neq\Psi$. Hence $|\tilde{D}^f(p)|>1$. Thus, $|\tilde{D}^f(p)|=1$ if and only if $|D^f(p)|=1$ as desired.

\end{proof}

\section{Proofs for Section~\ref{lattice}}\label{anotherApp}

\subsection*{Proof of Lemma~\ref{12}}
\begin{proof}
By Lemma~\ref{Fede}, there exists an ${\epsilon}_0>0$ such that for each $p\in P$ and every $q$ with $\|q-p\|<{\epsilon}_0$ we have $D^f(q)\subseteq D^f(p)$. Let $P=\{p^1,\ldots,p^n\}$. By Lemma~\ref{Fede}, there is a $\epsilon^1\in\mathbb{R}^{\Omega_f}$ with $\|\epsilon^1\|<\epsilon_0$ such that $|D^f(p^1+\epsilon^1)|=1$ and $\Psi\in D^f(p^1)$ for the unique $\Psi\in D^f(p^1+\epsilon^1)$. Consider $P^1:=\{p^1+\epsilon^1,\ldots,p^n+\epsilon^1\}$. For each $i=1,\ldots,n$ we have $D^f(p^i+\epsilon^1)\subseteq D^f(p^i)$. By Lemma~\ref{Fede}, there exists an ${\epsilon}_1>0$ such that for each $p\in P^1$ and every $q$ with $\|q-p\|<{\epsilon}_1$ we have $D^f(q)\subseteq D^f(p)$.  By Lemma~\ref{Fede}, there is a $\epsilon^2\in\mathbb{R}^{\Omega_f}$ with $\|\epsilon^2\|<\epsilon_1$ such that $|D^f(p^2+\epsilon^1+\epsilon^2)|=1$ and $\Psi\in D^f(p^2+\epsilon^2)$ for the unique $\Psi\in D^f(p^2+\epsilon^1)$. Next consider $P^2:=\{p^1+\epsilon^1+\epsilon^2,\ldots,p^n+\epsilon^1+\epsilon^2\}$. For each $i=1,\ldots,n$ we have $D^f(p^i+\epsilon^1+\epsilon^2)\subseteq D^f(p^i+\epsilon^1)\subseteq D^f(p^i)$ and so on. Iterating in this way, we obtain $\epsilon^1,\ldots,\epsilon^n$ such that for each $i=1,\ldots,n$, we have $|D^f(p^i+\sum_{j=1}^n\epsilon^j)|=1$ and $\Psi^i\in D^f(p^i)$ for the unique $\Psi^i\in D^f(p^i+\sum_{j=1}^n\epsilon^j)\subseteq D^f(p^i)$. We define $\tilde{D}^f(p^i)=\Psi^i$. By construction $\tilde{D}^f(p^i)\in D^f(p^i)$. Moreover, as all price vectors are translated by the same vector $\sum_{j=1}^n\epsilon^j$, FS, LAD and LAS follow from weak FS, weak LAD and weak LAS for $D^f$.
\end{proof}

\subsection*{Proof of Lemma~\ref{lemma}}
\begin{proof}

By Lemma~\ref{Fede}, there exists an ${\epsilon}>0$ such that
for each $q\in\{p,p',\bar{p},\underline{p}\}$ and every
$\tilde{q}$ with $\|\tilde{q}-q\|<{\epsilon}$ we have
$D^f(\tilde{q})\subseteq D^f(q)$. Let $\epsilon_0>0$ such that $\epsilon_0<\min_{\omega\in\Omega_f:p'_{\omega}\neq p_{\omega}}|p'_{\omega}-p_{\omega}|$ and $\epsilon_0\sqrt{|\Omega_f|}<\epsilon.$ Define ${\epsilon}'\in\mathbb{R}^{\Omega_f}$ by
$${\epsilon}'_{\omega}=\begin{cases}
{{\epsilon_0}},&\text{ if }\omega\in \Psi'_{f\rightarrow}\text{ and }p'_{\omega}\neq p_{\omega},\\
-{{\epsilon}_0},&\text{ if }\omega\in \Omega_{f\rightarrow}\setminus \Psi'\text{ and }p'_{\omega}\neq p_{\omega},\\
-{{\epsilon_0}},&\text{ if }\omega\in \Psi'_{\rightarrow f}\text{ and }p'_{\omega}\neq p_{\omega},\\
{{\epsilon}_0},&\text{ if }\omega\in \Omega_{\rightarrow f}\setminus \Psi'\text{ and }p'_{\omega}\neq p_{\omega},\\
 0,&\text{ if }p'_{\omega}=p_{\omega}.\end{cases}$$
Note that by construction we have $\|\epsilon'\|=\sqrt{\epsilon_0^2|\{\omega\in \Omega_f:p_{\omega}\neq p_{\omega}'\}|}\leq\epsilon_0\sqrt{|\Omega_f|} <\epsilon$ and thus $D^f(p'+\epsilon')\subseteq D^f(p')$. First we prove the following claim.
\begin{claim}\label{clcl}
For each $\Xi\in D^f(p'+\epsilon')$ we have $\{\omega\in\Psi':p'_{\omega}\neq p_{\omega}\}\subseteq \Xi$ and  $\{\omega\notin \Psi':p'_{\omega}\neq p_{\omega}\}\cap\Xi=\emptyset$.
\end{claim}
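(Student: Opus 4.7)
The plan is to argue by contradiction, exploiting the fact that $\epsilon'$ was engineered to push $p'$ in the direction favorable to $\Psi'$ in the sense of Lemma~\ref{DemandInv}. Fix $\Xi\in D^f(p'+\epsilon')$. By the choice of $\epsilon$ made at the start of the proof of Lemma~\ref{lemma}, $D^f(p'+\epsilon')\subseteq D^f(p')$, so $\Xi\in D^f(p')$ and $u^f(\Xi,p')=u^f(\Psi',p')$. A direct check of the sign pattern of $\epsilon'$ against the hypothesis of Lemma~\ref{DemandInv}, writing $S:=\{\omega:p'_\omega\neq p_\omega\}$, shows that $\epsilon'_\omega\geq 0$ for $\omega\in\Psi'_{f\rightarrow}$ and $\omega\in(\Omega_{\rightarrow f}\setminus\Psi')$, while $\epsilon'_\omega\leq 0$ for $\omega\in\Psi'_{\rightarrow f}$ and $\omega\in(\Omega_{f\rightarrow}\setminus\Psi')$. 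Hence Lemma~\ref{DemandInv} applies with $p'+\epsilon'$ in the role of $p$, and $\Psi'\in D^f(p'+\epsilon')$. In particular, $u^f(\Xi,p'+\epsilon')=u^f(\Psi',p'+\epsilon')$.

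Now suppose, toward a contradiction, that there exists $\omega^*\in S$ with $\omega^*\in \Psi'\triangle\Xi$. Define $\tilde\epsilon\in\mathbb{R}^{\Omega_f}$ by $\tilde\epsilon_{\omega^*}=0$ and $\tilde\epsilon_\omega=\epsilon'_\omega$ for $\omega\neq\omega^*$. Setting one coordinate of $\epsilon'$ to zero only weakens the inequalities appearing in the hypothesis of Lemma~\ref{DemandInv}, so a second application gives $\Psi'\in D^f(p'+\tilde\epsilon)$.

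The remainder of the argument compares utilities at $p'+\epsilon'$ and $p'+\tilde\epsilon$, which agree on every coordinate other than $\omega^*$. If $\omega^*\in\Psi'\setminus\Xi$, then $u^f(\Xi,\cdot)$ does not depend on $p_{\omega^*}$, so $u^f(\Xi,p'+\tilde\epsilon)=u^f(\Xi,p'+\epsilon')$, while the sign of $\epsilon'_{\omega^*}$ is, by construction, strictly favorable to $\Psi'$ (a higher sell price if $\omega^*\in\Psi'_{f\rightarrow}$, a lower buy price if $\omega^*\in\Psi'_{\rightarrow f}$), so strict monotonicity gives $u^f(\Psi',p'+\tilde\epsilon)<u^f(\Psi',p'+\epsilon')$. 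Combining with the equality $u^f(\Xi,p'+\epsilon')=u^f(\Psi',p'+\epsilon')$ already established yields $u^f(\Xi,p'+\tilde\epsilon)>u^f(\Psi',p'+\tilde\epsilon)$. Dually, if $\omega^*\in\Xi\setminus\Psi'$, then $u^f(\Psi',p'+\tilde\epsilon)=u^f(\Psi',p'+\epsilon')$, and the sign of $\epsilon'_{\omega^*}$ is strictly unfavorable to $\Xi$ (a lower sell price if $\omega^*\in\Omega_{f\rightarrow}\setminus\Psi'$, a higher buy price if $\omega^*\in\Omega_{\rightarrow f}\setminus\Psi'$), so $u^f(\Xi,p'+\tilde\epsilon)>u^f(\Xi,p'+\epsilon')\geq u^f(\Psi',p'+\epsilon')=u^f(\Psi',p'+\tilde\epsilon)$. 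In either case $u^f(\Xi,p'+\tilde\epsilon)>u^f(\Psi',p'+\tilde\epsilon)$ contradicts $\Psi'\in D^f(p'+\tilde\epsilon)$.

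The main obstacle is purely bookkeeping: four sign sub-cases for $\omega^*$ (downstream versus upstream crossed with $\Psi'\setminus\Xi$ versus $\Xi\setminus\Psi'$) must be checked, and in each one the sign of $\epsilon'_{\omega^*}$ built into the definition has to match, via strict monotonicity of $u^f$, the direction in which the relevant utility needs to move. Once the signs are tabulated, the case analysis collapses to the two displayed chains of (in)equalities above, so the real content of the proof is the initial deployment of Lemma~\ref{DemandInv} — both to obtain $\Psi'\in D^f(p'+\epsilon')$ in the first place and to rule out $\Psi'\in D^f(p'+\tilde\epsilon)$ being compatible with the assumed violation.
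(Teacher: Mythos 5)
Your proof is correct and uses essentially the same idea as the paper's: set the perturbation to zero at the offending trade $\omega^*$, invoke Lemma~\ref{DemandInv} at the resulting price vector, and conclude via strict monotonicity. The only cosmetic differences are that the paper extracts the contradiction at $p'+\epsilon'$ (showing $u^f(\Xi,p'+\epsilon')<u^f(\Psi',p'+\epsilon')$, violating $\Xi\in D^f(p'+\epsilon')$) and thus never needs $\Psi'\in D^f(p'+\epsilon')$, while you extract it at $p'+\tilde\epsilon$; also, your opening remark that $D^f(p'+\epsilon')\subseteq D^f(p')$ plays no role in the subsequent argument and could be dropped.
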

\begin{proof}
 First we show that for each $\Xi\in D^f(p'+\epsilon')$ we have $\{\omega\in\Psi':p'_{\omega}\neq p_{\omega}\}\subseteq \Xi$. Suppose not, and there is a $\Xi\in D^f(p'+\epsilon')$ and a $\tilde{\omega}\in\{\omega\in\Psi':p'_{\omega}\neq p_{\omega}\}\setminus\Xi$. Let $\tilde{p}\in\mathbb{R}^{\Omega_f}$ with $\tilde{p}_{\tilde{\omega}}=p'_{\tilde{\omega}}$ and $\tilde{p}_{\omega}=p'_{\omega}+\epsilon'_{\omega}$ for $\omega\neq\tilde{\omega}$. By the second part of Lemma~\ref{DemandInv}, we have $\Psi'\in D^f(\tilde{p})$. Thus, by monotonicity,
we have $u^f(\Xi,p'+\epsilon')=u^f(\Xi,\tilde{p})\leq u^f(\Psi',\tilde{p})<u^f(\Psi',p'+\epsilon')$ contradicting the assumption that $\Xi\in D^f(p'+\epsilon')$.

Next we show that for each $\Xi\in D^f(p'+\epsilon')$ we have $\{\omega\notin \Psi':p'_{\omega}\neq p_{\omega}\}\cap\Xi=\emptyset$. Suppose not, and there is a $\Xi\in D^f(p'+\epsilon')$ and a $\tilde{\omega}\in\{\omega\notin \Psi':p'_{\omega}\neq p_{\omega}\}\cap\Xi$. Let $\tilde{p}\in\mathbb{R}^{\Omega_f}$ with $\tilde{p}_{\tilde{\omega}}=p'_{\tilde{\omega}}$ and $\tilde{p}_{\omega}=p'_{\omega}+\epsilon'_{\omega}$ for $\omega\neq\tilde{\omega}$. By the second part of Lemma~\ref{DemandInv}, we have $\Psi'\in D^f(\tilde{p})$. Thus, by monotonicity,
we have $u^f(\Xi,p'+\epsilon')<u^f(\Xi,\tilde{p})\leq u^f(\Psi',\tilde{p})=u^f(\Psi',p'+\epsilon')$ contradicting the assumption that $\Xi\in D^f(p'+\epsilon')$.\end{proof}

By Lemma~\ref{Fede}, there exists ${{\epsilon}}_1>0$ such that
for every
$q'$ with $\|{q}'-(p'+\epsilon')\|<{{\epsilon}_1}$ we have
$D^f(q')\subseteq D^f(p'+{\epsilon}')$. We may choose $\epsilon_1<\epsilon-\|\epsilon'\|$. 
By Proposition~\ref{NIB}, there is a $q\in\mathbb{R}^{\Omega_f}$ with $\|p-q\|<\epsilon_1$ such that $D^f(q)=\{\Psi\}.$ Define $q':=p'+\epsilon'+(q-p).$ Define $\bar{q}$ as the pairwise maximum of $q$ and $q'$, i.e.~$\bar{q}_{\omega}=\max\{q_{\omega},q'_{\omega}\}$, and
	$\underline{q}$ as the pairwise minimum of $q$ and $q'$, i.e.~$\underline{q}_{\omega}=\min\{q_{\omega},q'_{\omega}\}$.

By construction, we have $D^f(q')\subseteq D^f(p'+\epsilon')\subseteq D^f(p')$, we have $\|\bar{q}-\bar{p}\|\leq \|\epsilon'\|+\epsilon_1<\epsilon$ and thus $D^f(\bar{q})\subseteq D^f(\bar{p})$, and we have $\|\underline{q}-\underline{p}\|\leq \|\epsilon'\|+\epsilon_1<\epsilon$ and thus $D^f(\underline{q})\subseteq D^f(\underline{p})$. Moreover, by construction, $q_{\omega}<q'_{\omega}$ if and only if $p_{\omega}<p'_{\omega}$, $q_{\omega}>q'_{\omega}$ if and only if $p_{\omega}>p'_{\omega}$, and $q_{\omega}=q'_{\omega}$ if and only if $p_{\omega}=p'_{\omega}$. 

 Let $P:=\{\tilde{q}\in\mathbb{R}^{\Omega_f}:\tilde{q}_{\omega}\in\{q_{\omega},q'_{\omega}\}\text{ for all }\omega\in\Omega_f\}$. By Lemma~\ref{12}, there is a single-valued selection $\tilde{D}^f:P\rightarrow2^{\Omega_f}$ from $D^f$ satisfying FS, LAD and LAS. Let $\bar{\Psi}:=\tilde{D}^f(\bar{q})$ and $\underline{\Psi}:=\tilde{D}^f(\underline{q})$.
As $D^f(q)=\{\Psi\}$, we have $\tilde{D}^f(q)=\Psi$. Moreover, by Claim~\ref{clcl} and as $\Psi'':=\tilde{D}^f(q')\in D^f(p'+\epsilon')$, we have $\{\omega\in \Psi':p_{\omega}'\neq p_{\omega}\}\subseteq \Psi''$ and $\{\omega\notin\Psi':p'_{\omega}\neq p_{\omega}\}\cap \Psi''=\emptyset$. By FS of $\tilde{D}^f$ and since  $\{\omega\in \Psi':p_{\omega}'\neq p_{\omega}\}\subseteq \Psi''$, we have
\begin{align*}
&\{\omega\in \Psi_{\rightarrow f}: p_{\omega}\geq p'_{\omega}\}\cup\{\omega\in  \Psi'_{\rightarrow f}:p'_{\omega}>p_{\omega}\}\\
\subseteq&\{\omega\in \Psi_{\rightarrow f}: q_{\omega}\geq q'_{\omega}\}\cup\{\omega\in \Psi''_{\rightarrow f}:q'_{\omega}>q_{\omega}\}\subseteq \bar{\Psi}_{\rightarrow f}.
\end{align*}
 Next we show that
$$\bar{\Psi}_{f\rightarrow}\subseteq\{\omega\in \Psi_{f\rightarrow}: p'_{\omega}\geq p_{\omega}\}\cup\{\omega\in  \Psi'_{f\rightarrow }:p'_{\omega}>p_{\omega}\}.$$
Let $\bar{\omega}\in \bar{\Psi}_{f\rightarrow}$. We consider two cases. Either $\bar{p}_{\bar{\omega}}=p_{\bar{\omega}}$ or $\bar{p}_{\bar{\omega}}=p'_{\bar{\omega}}>p_{\bar{\omega}}$. In the first case, consider $\tilde{p}\in P$ with $\tilde{p}_{\omega}=\bar{q}_{\omega}$ for $\omega\in \Omega_{\rightarrow f}$ and $\tilde{p}_{\omega}=q_{\omega}$ for $\omega\in \Omega_{f\rightarrow }$. Let $\tilde{\Psi}:=\tilde{D}^f(\tilde{p}).$ By SSS of $\tilde{D}^f$, we have $\bar{\omega}\in\tilde{\Psi}_{f\rightarrow}$. By CSC, we have $\tilde{\Psi}_{f\rightarrow}\subseteq{\Psi}_{f\rightarrow}$ and hence $\bar{\omega}\in\Psi_{f\rightarrow}$.

Similarly, if $\bar{p}_{\bar{\omega}}=p'_{\bar{\omega}}>p_{\bar{\omega}}$, consider $\tilde{p}\in P$ with $\tilde{p}_{\omega}=\bar{q}_{\omega}$ for $\omega\in \Omega_{\rightarrow f}$ and $\tilde{p}_{\omega}=q'_{\omega}$ for $\omega\in \Omega_{f\rightarrow }$. Let $\tilde{\Psi}:=\tilde{D}^f(\tilde{p}).$ By SSS of $\tilde{D}^f$  we have $\bar{\omega}\in\tilde{\Psi}_{f\rightarrow}$. By CSC, we have $\tilde{\Psi}_{f\rightarrow}\subseteq\Psi''_{f\rightarrow}$ and hence $\bar{\omega}\in\Psi''_{f\rightarrow}$. Since $\{\omega\notin\Psi':p'_{\omega}\neq p_{\omega}\}\cap \Psi''=\emptyset$ and $p'_{\bar{\omega}}\neq p_{\bar{\omega}}$ this implies $\bar{\omega}\in\Psi'_{f\rightarrow}$.
A completely analogous proof shows that $\underline{\Psi}$ has the desired properties. Finally, by LAD and LAS for $\tilde{D}^f$, we have
$$|\underline{\Psi}_{\rightarrow f}|-|\underline{\Psi}_{f\rightarrow }|\geq|{\Psi}_{\rightarrow f}|-|{\Psi}_{f\rightarrow }|\geq|\bar{\Psi}_{\rightarrow f}|-|\bar{\Psi}_{f\rightarrow }|.$$

\end{proof}
\subsection*{Proof of Theorem~\ref{Util}}
\begin{proof}
Let $\Xi\in \mathcal{E}(u,p)$ and $\Xi'\in\mathcal{E}(u,p')$. Define
\begin{align*}
\overline{\Xi}&:=\{\omega\in\Xi: p_{\omega}\geq p'_{\omega}\}\cup\{\omega\in\Xi': p'_{\omega}> p_{\omega}\},\\
\underline{\Xi}&:=\{\omega\in\Xi: p'_{\omega}\geq p_{\omega}\}\cup\{\omega\in\Xi': p_{\omega}> p'_{\omega}\}.
\end{align*}
We show that $\overline{\Xi}\in\mathcal{E}(u,\bar{p})$ and $\underline{\Xi}\in\mathcal{E}(u,\underline{p})$.
Let $f\in F$. By Lemma~\ref{lemma},
with $\Psi=\Xi_f$ and $\Psi'=\Xi'_f$ there is a $\overline{\Psi}_f\in D^f (\bar{p})$ and a $\underline{\Psi}_f\in D^f(\underline{p})$ such that
$\overline{\Xi}_{\rightarrow f}\subseteq\overline{\Psi}_{\rightarrow f}$, $\overline{\Psi}_{f\rightarrow}\subseteq\overline{\Xi}_{f\rightarrow}$, $\underline{\Psi}_{\rightarrow f}\subseteq\underline{\Xi}_{\rightarrow f}$ and $\underline{\Xi}_{f\rightarrow}\subseteq\underline{\Psi}_{f\rightarrow}$ and
$$|\underline{\Psi}_{\rightarrow f}|-|\underline{\Psi}_{f\rightarrow }|\geq|{\Xi}_{\rightarrow f}|-|{\Xi}_{f\rightarrow }|\geq|\overline{\Psi}_{\rightarrow f}|-|\overline{\Psi}_{f\rightarrow }|.$$
Note that this implies
$$|\underline{\Xi}_{\rightarrow f}|-|\underline{\Xi}_{f\rightarrow }|\geq|\underline{\Psi}_{\rightarrow f}|-|\underline{\Psi}_{f\rightarrow }|\geq|{\Xi}_{\rightarrow f}|-|{\Xi}_{f\rightarrow }|\geq|\overline{\Psi}_{\rightarrow f}|-|\overline{\Psi}_{f\rightarrow }|\geq|\overline{\Xi}_{\rightarrow f}|-|\overline{\Xi}_{f\rightarrow }|.$$
Summing the inequalities over all firms, we obtain
$$0\geq\sum_{f\in F}(|\underline{\Psi}_{\rightarrow f}|-|\underline{\Psi}_{f\rightarrow }|)\geq0\geq\sum_{f\in F}(|\overline{\Psi}_{\rightarrow f}|-|\overline{\Psi}_{f\rightarrow }|)\geq0.$$
Thus
$$|\underline{\Xi}|=\sum_{f\in F}|\underline{\Xi}_{f\rightarrow }|\leq\sum_{f\in F}|\underline{\Psi}_{f \rightarrow }|=\sum_{f\in F}|\underline{\Psi}_{\rightarrow f}|\leq\sum_{f\in F}|\underline{\Xi}_{\rightarrow f}|=|\underline{\Xi}|.$$

Therefore $\underline{\Xi}_f=\underline{\Psi}_{f}$ for each $f\in F$.
Moreover,
$$|\overline{\Xi}|=\sum_{f\in F}|\overline{\Xi}_{\rightarrow f}|\leq\sum_{f\in F}|\overline{\Psi}_{\rightarrow f}|=\sum_{f\in F}|\overline{\Psi}_{f\rightarrow }|\leq\sum_{f\in F}|\overline{\Xi}_{\rightarrow f}|=|\overline{\Xi}|.$$
Therefore $\overline{\Xi}_f=\overline{\Psi}_{f}$ for each $f\in F$.

Next we show that the above construction implies the rural hospital theorem: As $\overline{\Psi}=\overline{\Xi}$ and $\underline{\Psi}=\underline{\Xi}$, for each $f\in F$ we have
$$|\underline{\Xi}_{\rightarrow f}|-|\underline{\Xi}_{f\rightarrow }|\geq|{\Xi}_{\rightarrow f}|-|{\Xi}_{f\rightarrow }|\geq|\overline{\Xi}_{\rightarrow f}|-|\overline{\Xi}_{f\rightarrow }|.$$
Summing the inequalities over all $f$, we obtain
$$0=\sum_{f\in F}|\underline{\Xi}_{\rightarrow f}|-\sum_{f\in F}|\underline{\Xi}_{f\rightarrow }|\geq\sum_{f\in F}|{\Xi}_{\rightarrow f}|-\sum_{f\in F}|{\Xi}_{f\rightarrow }|\geq\sum_{f\in F}|\overline{\Xi}_{\rightarrow f}|-\sum_{f\in F}|\overline{\Xi}_{f\rightarrow }|=0.$$
Thus, for each $f\in F$ we have
$$|\underline{\Xi}_{\rightarrow f}|-|\underline{\Xi}_{f\rightarrow }|=|{\Xi}_{\rightarrow f}|-|{\Xi}_{f\rightarrow }|=|\overline{\Xi}_{\rightarrow f}|-|\overline{\Xi}_{f\rightarrow }|.$$
Next define the set of trades
\begin{align*}
\Xi''&:=\{\omega\in\overline{\Xi}: \bar{p}_{\omega}= p_{\omega}\}\cup\{\omega\in\Xi': \bar{p}_{\omega}> p'_{\omega}\}.
\end{align*}
Since $\bar{\Xi}\in\mathcal{E}(u,\bar{p})$ and $\Xi'\in \mathcal{E}(u,p')$, the same argument as before with $\overline{\Xi}$ in the role of $\Xi$, and $\bar{p}$ in the role of $p$ establishes (note that the pairwise minimum of $\bar{p}$ and $p'$ is again $p'$) that $[\Xi'',p']$ is an equilibrium. Moreover, with the same argument as before for each $f\in F$ we have
$$|{\Xi}''_{\rightarrow f}|-|{\Xi}''_{f\rightarrow }|=|\overline{\Xi}_{\rightarrow f}|-|\overline{\Xi}_{f\rightarrow }|.$$
Since $$|\overline{\Xi}_{\rightarrow f}|-|\overline{\Xi}_{f\rightarrow }|=|{\Xi}_{\rightarrow f}|-|{\Xi}_{f\rightarrow }|,$$
this concludes the proof.
 \end{proof}
\subsection*{Example for the Failure of the Rural Hospitals Theorem without LAD}
\begin{continuance}{ex3}
Consider the set of trades $\Omega=\{\omega_1,\omega_2\}$ and firm $f$ with the utility function $u^f$ as defined in Example~\ref{ex3}. As observed before, $D^f$ satisfies FS, weak LAD, (and, trivially, LAS), but not LAD. Consider a second firm $f'$ with $f'=s(\omega_1)=s(\omega_2)$ with utility function $u^{f'}$ defined by
\begin{align*}
&u^{f'}(\{\omega_i\},p_{\omega_i})=p_{\omega_i},\quad&\text{ for }i=1,2,\\
&u^{f'}(\{\omega_1,\omega_2\},p)=p_{\omega_1}+p_{\omega_2}-1.5,&\\
&u^{f'}(\emptyset)=0.
\end{align*}
The induced demand satisfies FS and LAS (and, trivially, LAD).
  The set of equilibrium vectors is $\mathcal{E}(u)=\{p:1\leq p_{\omega_1}=p_{\omega_2}\leq1.5\}\cup\{(2,2)\}$. Each $p\in\mathcal{E}(u)\setminus\{(2,2)\}$ is supported by $\{\omega_1\}$ and by $\{\omega_2\}$. The equilibrium prices $(2,2)$ are supported by $\{\omega_1,\omega_2\}$.
  An analogous example can be constructed to show that LAS and not just weak LAS is necessary for the Rural Hospitals Theorem.
  \qed
\end{continuance}
\subsection*{Proof of Theorem~\ref{extreme}}
We first show that the result holds for utility functions satisfying BWP.
\begin{proposition}\label{BWP}
Under the assumption of BWP, FS, LAD, LAS, there exists a seller-optimal equilibrium, i.e.~a $\bar{p}\in\mathcal{E}(u)$ such that for each terminal seller $f\in F$:
$$v^f(\bar{p})\geq v^f(p)\text{ for each }p\in\mathcal{E}(u),$$
and a buyer-optimal equilibrium, i.e.~a~$\underline{p}\in\mathcal{E}(u)$ such that for each terminal buyer $f\in F$:
$$v^f(\underline{p})\geq v^f(p)\text{ for each }p\in\mathcal{E}(u).$$
\end{proposition}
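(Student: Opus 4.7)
The plan is to leverage the Lattice Theorem (Theorem~\ref{Util}) together with the compactness provided by BWP. The strategy proceeds in three stages: first, establish compactness of the set of ``realized'' equilibrium price vectors; second, use this to attain the maximum of each terminal seller's indirect utility $v^f$ individually; third, combine these individually optimal equilibria into one via iterated coordinate-wise maxima. The buyer-optimal case will be symmetric.

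Existence of some $p^0\in\mathcal{E}(u)$ would follow from a suitable adaptation of~\cite{Ravi}'s existence result: BWP supplies an a priori price bound on demanded trades that lets the standard approximation/truncation argument go through even without BCV. For compactness, observe that for any $[\Psi,p]\in\mathcal{E}(u)$ and any realized $\omega\in\Psi$, BWP applied to $b(\omega)$ gives $p_\omega<K_{b(\omega)}$ while BWP applied to $s(\omega)$ gives $p_\omega>-K_{s(\omega)}$. Hence for each $\Psi\subseteq\Omega$ the set $\mathcal{E}_\Psi(u):=\{p|_\Psi:[\Psi,p]\in\mathcal{E}(u)\}$ is bounded in $\mathbb{R}^\Psi$, and upper hemi-continuity of every demand correspondence (Lemma~\ref{Fede}) together with continuity of utilities makes $\mathcal{E}_\Psi(u)$ closed, hence compact. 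Because $2^\Omega$ is finite, the collection of ``equilibrium realized-price vectors'' is compact in the disjoint-union sense.

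Now fix a terminal seller $f$. All of $f$'s trades are downstream, so at any equilibrium $[\Psi,p]$ the indirect utility $v^f(p)=u^f(\Psi_f,p)$ depends continuously only on coordinates that are bounded by the previous step; as a finite maximum of continuous functions $v^f$ is itself continuous, so $s_f:=\sup_{p\in\mathcal{E}(u)}v^f(p)$ is finite and attained by some $p^f\in\mathcal{E}(u)$. Since the set of terminal sellers is finite, iterated application of the Lattice Theorem yields $\bar p:=\bigvee_{f}p^f\in\mathcal{E}(u)$, the join ranging over terminal sellers. For every terminal seller $f'$ we have $\bar p\geq p^{f'}$ coordinate-wise, and since all of $f'$'s trades are downstream, monotonicity gives $v^{f'}(\bar p)\geq v^{f'}(p^{f'})=s_{f'}$, forcing $v^{f'}(\bar p)=s_{f'}$ and showing that $\bar p$ is seller-optimal. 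The buyer-optimal equilibrium $\underline p$ is built dually from maximizers of each terminal buyer's indirect utility, combined via iterated coordinate-wise minima.

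The main obstacle is the compactness step. BWP only controls the prices of \emph{realized} trades, so $\mathcal{E}(u)\subseteq\mathbb{R}^\Omega$ itself need not be bounded: prices of unrealized trades are only restricted by the requirement that those trades not be demanded, which can leave them ranging over a potentially unbounded set. Consequently, one cannot simply work with $\mathcal{E}(u)$ in the product topology on $\mathbb{R}^\Omega$; instead, the argument must be made on the realized-price projection, partitioning by realized trade set $\Psi$ and arguing compactness on each of the finitely many pieces before using continuity of $v^f$ to extract the maximizer, and then verifying that the coordinate-wise join produced at the end is still supported by some equilibrium allocation through Theorem~\ref{Util}.
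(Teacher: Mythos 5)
Your overall strategy is sound and shares the paper's two pillars (BWP-induced compactness plus the Lattice Theorem), but the compactness step as you present it has a genuine gap. You claim that for fixed $\Psi$ the projection $\mathcal{E}_\Psi(u) := \{p|_\Psi : [\Psi,p] \in \mathcal{E}(u)\}$ is closed ``by upper hemi-continuity together with continuity of utilities.'' This does not follow: upper hemi-continuity controls $D^f(p^n)$ when the full vectors $p^n \in \mathbb{R}^{\Omega}$ converge, whereas you only know that the realized coordinates $p^n|_\Psi$ converge. The unrealized coordinates $p^n|_{\Omega\setminus\Psi}$ can escape to infinity (you yourself note this), and the projection of a closed but unbounded set need not be closed, so nothing in Lemma~\ref{Fede} delivers closedness of $\mathcal{E}_\Psi(u)$ by itself.

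The missing idea is the truncation trick that the paper uses directly: by BWP there is $K>0$ so that no firm ever demands an upstream trade priced at or above $K$ nor a downstream trade priced at or below $-K$; therefore, given any equilibrium $[\Psi,p]$, replacing $p$ by its coordinatewise truncation $p'$ to $[-K,K]^{\Omega}$ leaves every firm's demand set (and indirect utility) unchanged, so $[\Psi,p']$ is again an equilibrium. With this observation the paper works with $\mathcal{E}'(u):=\mathcal{E}(u)\cap[-K,K]^\Omega$, which is compact because $Z(p):=\min_\Psi\max_f\bigl(v^f(p)-u^f(\Psi,p)\bigr)$ is continuous, so $\mathcal{E}(u)=Z^{-1}(\{0\})$ is closed. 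Since $\mathcal{E}'(u)$ is also a sublattice (Theorem~\ref{Util} combined with the fact that $[-K,K]^\Omega$ is a sublattice of $\mathbb{R}^\Omega$), it has a maximal and a minimal element, and monotonicity for terminal sellers and buyers gives side-optimality directly. Your alternative route of maximizing each terminal seller's $v^f$ separately and then taking the coordinatewise join is correct once the compactness is repaired, and it does establish side-optimality via the same monotonicity argument; but it replaces the paper's single step (take the maximal element of a compact sublattice) with finitely many optimizations plus an iterated join, without any compensating gain. The truncation step is needed under either route, and once it is in place the paper's version is the shorter one.
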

\begin{proof}
        Following an idea of~\cite{KelsoCrawford1982}, we can characterize competitive equilibria by a zero-surplus condition. Define a surplus function $Z:\mathbb{R}^{\Omega}\rightarrow\mathbb{R}$ by
    $$Z(p):=\min_{\Psi\subseteq\Omega}\max_{f\in F}\max_{\Psi'\subseteq\Omega_f}u^f(\Psi',p)-u^f(\Psi,p).$$
    By definition, for each $f\in F$, we have $D^f(p)=\text{argmax}_{\Psi'\subseteq\Omega_f}u^f(\Psi',p).$ Thus for each arrangement $[\Psi,p]$, we have $\max_{f\in F}\max_{\Psi'\subseteq\Omega_f}u^f(\Psi',p)-u^f(\Psi,p)\geq0$ with equality if and only if $\Psi\in\mathcal{E}(u,p)$. Thus $p\in\mathcal{E}(u)$ if and only if $Z(p)=0.$  The surplus function is continuous, as $u^f(\Psi',p)-u^f(\Psi,p)$ is continuous in $p$ and the maximum resp.~minimum of finitely many continuous functions is continuous. Thus $\mathcal{E}(u)$ is a closed set, as it is the pre-image of the closed set $\{0\}$ under the continuous function $Z$.

    By BWP, there is a $K>0$ such that for all $f\in F$, $p\in\mathbb{R}^{\Omega_f}$ and $\Psi\in D^f(p)$
    if $\omega\in\Psi_{\rightarrow f}$ then $p_{\omega}<K$ and
    if $\omega\in \Psi_{f\rightarrow }$ then $p_{\omega}>-K$. Let $\mathcal{E}'(u):=\mathcal{E}(u)\cap[-K,K]^{\Omega}$. By BWP, for each  $p\in\mathcal{E}(u)$, the vector $p'\in\mathbb{R}^{\Omega}$ defined by $p'_{\omega}=p_{\omega}$ for $-K<p_{\omega}<K$, $p'_{\omega}=K$ for $p_{\omega}>K$, and  $p'_{\omega}=-K$ for $p_{\omega}<-K$ is an equilibrium price vector $p'\in\mathcal{E}'(u)$ with $v^f(p')=v^f(p)$ for each $f\in F$. By Corollary~2 in \cite{Ravi}, $\mathcal{E}(u)$ is non-empty and hence $\mathcal{E}'(u)$ is non-empty. As $\mathcal{E}(u)$ is closed, $\mathcal{E}'(u)$ is compact. From Theorem~\ref{Util}, and observing that the pairwise maximum (minimum) of two vectors in $[-K,K]^{\Omega}$ is an element of $[-K,K]^{\Omega}$, we conclude that $\mathcal{E}'(u)$ is a non-empty, compact sublattice of $\mathbb{R}^{\Omega}$. This implies that $\mathcal{E}'(u)$ has a maximal element $\bar{p}$ and a minimal element $\underline{p}.$ By monotonicity and the previous observation that for each $p\in\mathcal{E}(u)$ there is a $p'\in\mathcal{E}'(u)$ with $v^f(p')=v^f(p)$ for each $f\in F$, for each terminal seller $f$ and $p\in\mathcal{E}(u)$ we have $v^f(\bar{p})\geq v^f(p)$. Thus $\bar{p}$ is a terminal seller optimal equilibrium. Similarly, $\underline{p}$ is a terminal buyer optimal equilibrium $\underline{p}$ under $u$.
     \end{proof}
    To extend the result to profiles satisfying BCV, a generalization of a result of
 \cite{Hatfield2018} is useful. They show (Theorem~2) that for quasi-linear preferences, the class of FS valuations is invariant under "trade endowments". The generalization of this observation to non-quasi-linear preferences is the following lemma:
 \begin{lemma}\label{Alex}
    Let $u^f$ satisfy FS, LAD and LAS. Then for each allocation $(\bar{\Psi},\bar{p})\in\mathcal{A}_f$, the utility function ${u}^f_{(\bar{\Psi},\bar{p})}$ defined by $${u}^f_{(\bar{\Psi},\bar{p})}(\Psi,p):=\max_{\Xi\subseteq \bar{\Psi}\setminus\Psi}u^f(\Psi\cup\Xi,(p|_{\Psi},\bar{p}|_{\Xi}))$$ satisfied FS, LAD and LAS.
 \end{lemma}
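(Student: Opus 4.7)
My plan is to reduce the endowed demand $\bar D^f$ induced by $u^f_{(\bar\Psi,\bar p)}$ to the original demand $D^f$ evaluated at a suitably modified price vector. For $p\in\mathbb{R}^{\Omega_f}$, define the \emph{effective price} $\tilde p\in\mathbb{R}^{\Omega_f}$ by $\tilde p_\omega=p_\omega$ for $\omega\notin\bar\Psi$, $\tilde p_\omega=\min\{p_\omega,\bar p_\omega\}$ for $\omega\in\bar\Psi\cap\Omega_{\rightarrow f}$, and $\tilde p_\omega=\max\{p_\omega,\bar p_\omega\}$ for $\omega\in\bar\Psi\cap\Omega_{f\rightarrow}$ (reflecting that the firm can always fall back on the endowment price). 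The first step is to show that the indirect utility satisfies $\bar v^f(p)=v^f(\tilde p)$, and that $\Psi\in\bar D^f(p)$ iff there exists $\Xi\subseteq\bar\Psi\setminus\Psi$ with $\Phi:=\Psi\cup\Xi\in D^f(\tilde p)$ whose split is \emph{price-matching}: for upstream $\omega$, $\omega\in\Psi\cap\bar\Psi$ forces $p_\omega\leq\bar p_\omega$ and $\omega\in\Xi$ forces $p_\omega\geq\bar p_\omega$ (symmetrically downstream). Both implications use monotonicity of $u^f$: if a split were not price-matching, swapping the offending $\omega$ between $\Psi$ and $\Xi$ would strictly raise the inner maximum, contradicting $\Psi\in\bar D^f(p)$.

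Second, I would check that if $p,p'$ satisfy the FS hypothesis on the nominal prices then $\tilde p,\tilde p'$ satisfy it on the effective prices, which is immediate since $\min$ and $\max$ with $\bar p_\omega$ are order-preserving. Given $\Psi'\in\bar D^f(p')$, fix an optimal $\Xi'$ and set $\Phi'=\Psi'\cup\Xi'\in D^f(\tilde p')$. By monotone substitutability of $D^f$ (Corollary~\ref{MON}), pick $\Phi\in D^f(\tilde p)$ satisfying the SSS, CSC and LAD (or LAS) conditions jointly with respect to $\Phi'$. Now define $\Psi$ by the canonical split of $\Phi$: put $\omega\in\Psi$ for all $\omega\in\Phi\setminus\bar\Psi$; for upstream $\omega\in\Phi\cap\bar\Psi$ put $\omega\in\Psi$ if $p_\omega<\bar p_\omega$, $\omega\in\Xi$ if $p_\omega>\bar p_\omega$, and in the tie case $p_\omega=\bar p_\omega$ mimic $\Psi'$ (put $\omega\in\Psi$ iff $\omega\in\Psi'$), and symmetrically downstream. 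The split is price-matching by construction, so by the characterization above $\Psi\in\bar D^f(p)$.

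Third, I would verify the SSS, CSC and LAD conditions for $(\Psi,\Psi')$ by case analysis. For SSS, if $\omega\in\Psi_{\rightarrow f}$ with $p_\omega=p'_\omega$ then $\tilde p_\omega=\tilde p'_\omega$, so $\omega\in\Phi'_{\rightarrow f}$ by SSS of $D^f$; the price-matching condition for $\Psi'$ (itself a swap-argument consequence of $\Psi'\in\bar D^f(p')$) together with the tie rule then forces $\omega\in\Psi'_{\rightarrow f}$. CSC is analogous. For LAD, the identity $|\Psi_{\rightarrow f}|-|\Psi_{f\rightarrow}|=(|\Phi_{\rightarrow f}|-|\Phi_{f\rightarrow}|)-(|\Xi_{\rightarrow f}|-|\Xi_{f\rightarrow}|)$ and the analogous one for the primed sets combine with LAD of $D^f$ to reduce the claim to the counting inequality $|\Xi_{\rightarrow f}|-|\Xi_{f\rightarrow}|\leq|\Xi'_{\rightarrow f}|-|\Xi'_{f\rightarrow}|$. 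This follows by verifying $\Xi_{\rightarrow f}\subseteq\Xi'_{\rightarrow f}$ (every upstream $\omega\in\Xi$ has $\tilde p_\omega=\bar p_\omega=\tilde p'_\omega$, hence lies in $\Phi'_{\rightarrow f}$ by SSS of $D^f$, and then in $\Xi'$ by price-matching for $\Psi'$) and $\Xi'_{f\rightarrow}\subseteq\Xi_{f\rightarrow}$ (downstream $\omega\in\Xi'$ lies in $\Phi_{f\rightarrow}$ by CSC of $D^f$, and in $\Xi$ by price-matching for $\Psi$). LAS follows by the symmetric argument, with the roles of $\min$ and $\max$ interchanged. The main obstacle is organizing the tie-breaking coherently so that SSS, CSC, and the $\Xi$-counting needed for LAD all transfer through the same construction; the mimic-$\Psi'$ tie rule together with the $\min$/$\max$ structure of $\tilde p$ is what makes these three requirements compatible.
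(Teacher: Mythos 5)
Your argument is correct and reaches the lemma by a genuinely different route from the paper, so let me compare the two. Both proofs introduce the same effective price (your $\tilde p$, the paper's $q$) and reduce the endowed demand $\bar D^f$ to the original $D^f$ at $\tilde p$, but they differ in how the bundle $\Phi\in D^f(\tilde p)$ is selected and split. The paper first restricts to a single-trade price change (legitimate because SSS, CSC and LAD compose across chains of such changes), then invokes the Single Improvement Property (Proposition~\ref{SingleImprovement}) so that $\Phi$ differs from $\Phi'=\Psi'\cup\Xi'$ by at most one added and one removed trade, and finally performs an explicit four-way case analysis defining $\Psi$ and $\Xi$ case by case. You instead apply Monotone Substitutability (Corollary~\ref{MON}) directly to pick a single $\Phi$ satisfying SSS, CSC and LAD jointly with $\Phi'$, and then use one uniform canonical split whose tie rule at $p_\omega=\bar p_\omega$ mimics $\Psi'$; your price-matching characterization of $\bar D^f$ then yields $\Psi\in\bar D^f(p)$, and the containments $\Xi_{\rightarrow f}\subseteq\Xi'_{\rightarrow f}$ and $\Xi'_{f\rightarrow}\subseteq\Xi_{f\rightarrow}$ carry LAD of $D^f$ over to LAD of $\bar D^f$. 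What you gain is uniformity — one construction in place of a single-trade reduction plus four cases, and an explicit two-sided characterization of the endowed demand that the paper leaves implicit — at the cost of having to verify carefully that the mimic-$\Psi'$ tie rule cooperates simultaneously with SSS, CSC and the $\Xi$-counting, the coordination you flag as the main obstacle; your sketch does handle this correctly. One phrasing to tighten: the swap argument you use to establish price-matching does not raise the \emph{inner} maximum for the same $\Psi$ (the offending $\omega$ is not available in both $\Psi$ and $\bar\Psi\setminus\Psi$ simultaneously); rather it shows that the modified bundle $\Psi\triangle\{\omega\}$ attains strictly higher endowed utility than $\Psi$, which contradicts $\Psi\in\bar D^f(p)$ — same conclusion, but the wording should be adjusted.
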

\begin{proof}
    Let $\tilde{D}^f$ be the demand induced by $u^f_{(\bar{\Psi},\bar{p})}$.
Let $p,p'\in\mathbb{R}^{\Omega_f}$. Define  $q,q'\in\mathbb{R}^{\Omega_f}$ as follows:
$$q_{\omega}:=\begin{cases}
\bar{p}_{\omega},\quad&\text{if }\omega\in\bar{\Psi}_{\rightarrow f}\text{ and }\bar{p}_{\omega}<p_{\omega},\text{ or }\omega\in\bar{\Psi}_{f\rightarrow }\text{ and }\bar{p}_{\omega}>p_{\omega},\\p_{\omega},\quad&\text{else,}
\end{cases}$$ $$ q'_{\omega}:=\begin{cases}
\bar{p}_{\omega},\quad&\text{if }\omega\in\bar{\Psi}_{\rightarrow f}\text{ and }\bar{p}_{\omega}<p'_{\omega},\text{ or }\omega\in\bar{\Psi}_{f\rightarrow }\text{ and }\bar{p}_{\omega}>p'_{\omega},\\p'_{\omega},\quad&\text{else.}
\end{cases}$$
By the definition of $u^f_{(\bar{\Psi},\bar{p})}$, for each $\Psi\in \tilde{D}^f(p)$ there is a $\Xi\subseteq\bar{\Psi}\setminus\Psi$ with $\Psi\cup\Xi\in D^f(q)$, and for each $\Psi'\in \tilde{D}^f(p')$ there is a $\Xi'\subseteq\bar{\Psi}\setminus\Psi'$ with $\Psi'\cup\Xi'\in D^f(q')$.

 We show that the first part of the SSS condition, the first part of the CSC condition and the LAD holds for $\tilde{D}^f$. A dual argument shows that the second part of the SSS, the second part of the CSC condition and the LAS holds. Let $p\leq p'$ with $p'_{\omega}=p_{\omega}$ for $\omega\in\Omega_{f\rightarrow}$. It suffices to consider the case where $p,p'$ differ only on one trade $\omega'\in\Omega_{\rightarrow f}$, i.e.~$p'_{\omega'}>p_{\omega'}$ and $p'_{-\omega'}=p_{-\omega'}$. There exists by Proposition~\ref{SingleImprovement} applied to $q$ and $q'$  a $\tilde{\Psi}\in \tilde{D}^f(q)$ such that one of the following four cases is true:

 \begin{enumerate}
    \item $\tilde{\Psi}=\Psi'\cup\Xi'$,
    \item  $\tilde{\Psi}=\Psi'\cup\Xi'\cup\{\omega'\}$,
    \item
     $\tilde{\Psi}=\Psi'\cup\Xi'\cup\{\omega,\omega'\}$ for a $\omega\in\Omega_{f\rightarrow}\setminus(\Psi'\cup\Xi')$,
    \item $\tilde{\Psi}=\Psi'\cup\Xi'\cup\{\omega'\}\setminus\{\omega\}$ for a $\omega\in(\Psi'\cup\Xi')_{\rightarrow f}.$

 \end{enumerate}
    In the first case, we let
    $$\Psi=\begin{cases}
    \Psi'\cup\{\omega'\},\quad&\text{ if }\omega'\in \Xi',\\\Psi',\quad&\text{ else.}
    \end{cases}\quad
    \Xi=\begin{cases}
    \Xi'\setminus\{\omega'\},\quad&\text{ if }\omega'\in \Xi',\\\Xi'\quad&\text{ else.}
    \end{cases}$$
        In the second case, we let
    $$\Psi=
    \Psi'\cup\{\omega'\},\quad
    \Xi=
    \Xi'.$$
        In the third case, we let
    $$\Psi=\begin{cases}
\Psi'\cup\{\omega,\omega'\},\quad&\text{ if }p_{\omega}\leq \bar{p}_{\omega},\\\Psi'\cup\{\omega'\},\quad&\text{ if }p_{\omega}> \bar{p}_{\omega}.
\end{cases}\quad
\Xi=\begin{cases}
\Xi',\quad&\text{ if }p_{\omega}\leq \bar{p}_{\omega},\\\Xi'\cup\{\omega\},\quad&\text{ if }p_{\omega}> \bar{p}_{\omega}.\end{cases}$$
    In the fourth case, we let
$$\Psi=\begin{cases}
\Psi'\cup\{\omega'\},\quad&\text{ if }{\omega}\in\Xi',\\
\Psi'\cup\{\omega'\}\setminus\{\omega\},\quad&\text{ if }\omega\in\Xi'.
\end{cases}\quad \Xi=\begin{cases}
\Xi'\setminus\{\omega\},\quad&\text{ if }{\omega}\in\Xi',\\
\Xi',\quad&\text{ if }\omega\in\Xi'.
\end{cases}$$
By construction we have $\tilde{\Psi}=\Psi\cup\Xi$, $\Psi\subseteq \{\omega:p_{\omega}\leq\bar{p}_{\omega}\}$ and $\Xi\subseteq \{\omega:p_{\omega}\geq\bar{p}_{\omega}\}$. As $\tilde{\Psi}\in D^f(q)$ we thus have $\Psi\in \tilde{D}^f(p)$. Moreover, by construction, we have $$\Psi_{\rightarrow f}\setminus\{\omega'\}\subseteq \Psi'_{\rightarrow f},\quad \Psi'_{f\rightarrow}\subseteq \Psi_{f\rightarrow},$$
and $$|\Psi_{\rightarrow f}|-|\Psi_{f\rightarrow }|\geq|\Psi'_{\rightarrow f}|-|\Psi'_{f\rightarrow }|.$$
\end{proof}
With the lemma we can prove the result.
\begin{proof}
    Following an idea of~\cite{Ravi}, we can construct a corresponding profile $\tilde{u}$ satisfying BWP such that equilibria under $\tilde{u}$ are equilibria under $u$: By BCV, there exists for each firm $f\in F$ a $K^f\geq0$ such that  $$\inf_{\{[\Psi,p]\in2^{\Omega}\times\mathbb{R}^{\Omega}:u^f(\Psi,p)>u^f(\emptyset)\}}\left(\sum_{\omega\in\Omega_{f\rightarrow}}p_{\omega}-\sum_{\omega\in\Omega_{\rightarrow f}}p_{\omega}\right)>-K^f.$$
    Let $K:=\sum_{f\in F}K^f$.
     For terminal sellers or buyers it is easy to see that BCV implies BWP, and for a firm $f\in F$ that is a terminal buyer or seller, we choose $\tilde{u}^f=u^f$. For a firm $f\in F$ that is neither terminal buyer nor terminal seller we let $$\tilde{u}^f(\Psi,p):=\max_{\Xi\subseteq\Omega_f\setminus\Psi}u^f(\Psi\cup\Xi,p|_{\Psi},(K)_{\omega\in \Xi_{\rightarrow f}},(-K)_{\omega\in \Xi_{f\rightarrow }}).$$

    By Lemma~\ref{Alex} (applied to the allocation $(\Omega_f,(K)_{\omega\in \Omega_{\rightarrow f}},(-K)_{\omega\in \Omega_{f\rightarrow}})$), the utility function $\tilde{u}^f$ satisfies FS, LAD and LAS. Moreover, by construction, $\tilde{u}^f$ satisfies BWP  where $K$ can be chosen as above. With a completely analogous argument to the proof of Lemma~B.2 in~\cite{Ravi}, for each competitive equilibrium $[\Psi,p]$ under $\tilde{u}$ we have $u^f(\Psi,p)=\tilde{u}^f(\Psi,p)$ for each $f\in F$, and $[\Psi,p]$ is also a competitive equilibrium under $u$.

     By Proposition~\ref{BWP}, there exists a terminal seller optimal equilibrium $\tilde{p}\in\mathcal{E}(\tilde{u})\subseteq\mathcal{E}(u)$ for $\tilde{u}$. We show that $\tilde{p}$ is also terminal seller optimal for $u$. Suppose not. Then there is a terminal seller $f$ and an equilibrium $\bar{p}\in\mathcal{E}(u)\setminus\mathcal{E}(\hat{u})$ with $v^f(\bar{p})>v^f(\tilde{p}).$
    Let $\bar{\Psi}\in\mathcal{E}(u,\bar{p})$. Consider the utility function $u^f_{(\bar{\Psi},\bar{p})}$ as in Lemma~\ref{Alex}. First observe that $[\bar{\Psi},\bar{p}]$ is also an equilibrium under $(u^f_{(\bar{\Psi},\bar{p})},u^{-f})$ as $$\max_{\Psi\subseteq \Omega_f}\max_{\Xi\subseteq \bar{\Psi}\setminus\Psi}u^f(\Psi\cup\Xi,\bar{p}|_{\Psi},\bar{p}|_{\Xi})=\max_{\Psi\subseteq \Omega_f} u^f(\Psi,\bar{p})=u^f(\bar{\Psi},\bar{p})=u^f_{(\bar{\Psi},\bar{p})}(\bar{\Psi},\bar{p}).$$

    Next observe that $u^f_{(\bar{\Psi},\bar{p})}$ satisfies BWP, and analogously to the argument for $u$ and $\tilde{u}$, each equilibrium $[\Psi,p]$ under $(u^f_{(\bar{\Psi},\bar{p})},\tilde{u}^{-f})$, is an equilibrium under $(u^f_{(\bar{\Psi},\bar{p})},{u}^{-f})$ that yields the same utility for each firm under both profiles. Let $[\hat{\Psi},\hat{p}]$ be an equilibrium under $(u^f_{(\bar{\Psi},\bar{p})},\tilde{u}^{-f}).$
 As $[\bar{\Psi},\bar{p}]$ and $[\hat{\Psi},\hat{p}]$ are equilibria under $(u^f_{(\bar{\Psi},\bar{p})},{u}^{-f})$, the second part of Theorem~\ref{Util} implies that $[\bar{\Psi},\hat{p}]$ is an equilibrium under $(u^f_{(\bar{\Psi},\bar{p})},u^{-f})$.
By construction,
 $$u^f_{(\bar{\Psi},\bar{p})}(\Psi,p)\geq u^f(\Psi,p),\text{ for each } (\Psi,p)\in\mathcal{A}_f,$$
 with equality if $\Psi=\bar{\Psi}_f$. Thus, $\bar{\Psi}_f\in D^f_{(\bar{\Psi},\bar{p})}(\hat{p})$ implies $\bar{\Psi}_f\in D^f(\hat{p})$.
By construction, for $f'\neq f$,
$$\tilde{u}^{f'}(\Psi,p)\geq u^{f'}(\Psi,p),\text{ for each }(\Psi,p)\in\mathcal{A}_{f'},$$ where, as observed above, the inequality holds with equality for equilibrium allocations, so in particular if $\Psi=\hat{\Psi}_{f'}$ and $p=\hat{p}$.
Thus, $$\tilde{u}^{f'}(\bar{\Psi},\hat{p})\geq{u}^{f'}(\bar{\Psi},\hat{p})= u^{f'}(\hat{\Psi},\hat{p})= \tilde{u}^{f'}(\hat{\Psi},\hat{p})=\max_{\Psi\subseteq\Omega_{f'}}\tilde{u}^{f'}(\Psi,\hat{p}),$$ where the first equality follows as $\bar{\Psi}_{f'},\hat{\Psi}_{f'}\in D^{f'}(\hat{p})$. Therefore $\bar{\Psi}_{f'}\in \tilde{D}^{f'}(\hat{p})$. Hence, $[\bar{\Psi},\hat{p}]$ is an equilibrium under $\tilde{u}.$ But, as $[\bar{\Psi},\hat{p}]$ is an equilibrium under $(u^f_{(\bar{\Psi},\bar{p})},{u}^{-f})$ and by construction of $u^f_{(\bar{\Psi},\bar{p})}$ we have
$$v^f(\hat{p})=u^{f}(\bar{\Psi},\hat{p})=u^f_{(\bar{\Psi},\bar{p})}(\bar{\Psi},\hat{p})\geq u^f_{(\bar{\Psi},\bar{p})}(\emptyset)=u^f(\bar{\Psi},\bar{p})=v^f(\bar{p})>v^f(\tilde{p}).$$ But this contradicts the terminal seller optimality of $\tilde{p}$ among equilibria under $\tilde{u}$.


    %
\end{proof}
\subsection*{Proof of Theorem~\ref{GSP}}
\begin{proof}
Let $F'\subseteq F$ be the set of terminal buyers. Let $\mathcal{U}=\bigtimes_{f\in F}\mathcal{U}_f$ where for $f\in F'$ the set $\mathcal{U}_f$ is the set of unit demand and BCV utility functions and for each $f\in F\setminus F'$ the set $\mathcal{U}_f$ is the set of BCV, FS, LAD and LAD utility functions. In the following for $\tilde{u}^f,\hat{u}^f\in\mathcal{U}_f$ etc.~we denote the induced demand by $\tilde{D}^f,\hat{D}^f$ etc.

 Let $\mathcal{M}:\mathcal{U}\rightarrow\mathcal{A}$ be a buyer-optimal mechanism. First we establish that $\mathcal{M}$ is immune to truncation strategies.
\begin{claim}\label{trunc}
Let  $f\in F'$.
Let $u,\tilde{u}\in\mathcal{U}$ with $\tilde{u}^{-f}=u^{-f}$ and let $[\Psi,p]$ be a buyer-optimal equilibrium under $u$. If $\Psi_{f}\neq\emptyset$, $\tilde{u}^f(\omega,\cdot)=u^f(\omega,\cdot)$ for each $\omega\in\Omega_{\rightarrow f}$ and $\tilde{u}^f(\emptyset)>\tilde{u}^f(\Psi,p)$, then for each equilibrium $[\tilde{\Psi},\tilde{p}]$ under $\tilde{u}$, we have $\tilde{\Psi}_f=\emptyset$.
\end{claim}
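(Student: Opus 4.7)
The plan is to argue by contradiction. Suppose there is an equilibrium $[\tilde{\Psi},\tilde{p}]$ under $\tilde{u}$ with $\tilde{\Psi}_f\neq\emptyset$. Because $f$ is a terminal buyer (so $\Omega_{f\rightarrow}=\emptyset$) and $\tilde{u}^f$ has unit demand, the condition $\tilde{\Psi}_f\in\tilde{D}^f(\tilde{p})$ forces $\tilde{\Psi}_f=\{\omega^*\}$ for a single trade $\omega^*\in\Omega_{\rightarrow f}$. The strategy is to promote $[\tilde{\Psi},\tilde{p}]$ to a competitive equilibrium under the original profile $u$, and to observe that it delivers strictly higher indirect utility to the terminal buyer $f$ than $[\Psi,p]$, thereby contradicting the buyer-optimality of $[\Psi,p]$.

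The critical step is to verify $\{\omega^*\}\in D^f(\tilde{p})$ under $u^f$. Note first that $\Psi_f$ is itself a singleton, since $\Psi_f\in D^f(p)$, $\Psi_f\neq\emptyset$, and $u^f$ has unit demand; consequently the hypothesis $\tilde{u}^f(\omega,\cdot)=u^f(\omega,\cdot)$ on singletons yields $\tilde{u}^f(\Psi,p)=u^f(\Psi,p)$. Chaining the equilibrium inequalities at $[\tilde{\Psi},\tilde{p}]$ with the truncation hypotheses gives
\[
u^f(\{\omega^*\},\tilde{p}_{\omega^*})=\tilde{u}^f(\{\omega^*\},\tilde{p}_{\omega^*})\geq\tilde{u}^f(\emptyset)>\tilde{u}^f(\Psi,p)=u^f(\Psi,p)\geq u^f(\emptyset),
\]
where the last step uses $\Psi_f\in D^f(p)$. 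Analogously, $u^f(\{\omega^*\},\tilde{p}_{\omega^*})\geq u^f(\{\omega\},\tilde{p}_{\omega})$ for every $\omega\in\Omega_{\rightarrow f}$. Unit demand of $u^f$ implies that $D^f(\tilde{p})$ contains only $\emptyset$ and singletons, so $v^f(\tilde{p})$ is attained on $\{\emptyset\}\cup\{\{\omega\}:\omega\in\Omega_{\rightarrow f}\}$; the chain above then shows $\{\omega^*\}$ attains this maximum, i.e.~$\{\omega^*\}\in D^f(\tilde{p})$ under $u^f$. Since $u^{-f}=\tilde{u}^{-f}$, the remaining equilibrium conditions are inherited from $[\tilde{\Psi},\tilde{p}]$ under $\tilde{u}$, so $\tilde{p}\in\mathcal{E}(u)$ with supporting trades $\tilde{\Psi}$.

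The same chain of inequalities finally gives $v^f(\tilde{p})=u^f(\{\omega^*\},\tilde{p}_{\omega^*})>u^f(\Psi,p)=v^f(p)$, contradicting the buyer-optimality of $[\Psi,p]$. The subtle point in the argument is converting the $\tilde{u}^f$-equilibrium condition into a $u^f$-equilibrium condition using only the truncation hypotheses (which pin down $\tilde{u}^f$ at $\emptyset$ and on singletons, but say nothing on multi-trade bundles): this conversion works precisely because unit demand of $u^f$, combined with the singleton form of $\Psi_f$, makes the singleton values of $\tilde{u}^f$ and the comparison with $\tilde{u}^f(\emptyset)$ sufficient to determine all of $D^f(\tilde{p})$.
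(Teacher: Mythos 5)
Your proof is correct and follows essentially the same route as the paper's: assume $\tilde{\Psi}_f=\{\omega^*\}\neq\emptyset$, promote $[\tilde{\Psi},\tilde{p}]$ to an equilibrium under $u$, and derive a contradiction with buyer-optimality via the chain $u^f(\omega^*,\tilde{p}_{\omega^*})=\tilde{u}^f(\omega^*,\tilde{p}_{\omega^*})\geq\tilde{u}^f(\emptyset)>\tilde{u}^f(\Psi,p)=u^f(\Psi,p)$. The one place you add genuine value over the paper's terse "Note that also $\{\tilde\omega\}\in D^f(\tilde p)$" is that you actually verify this step: you use unit demand of $u^f$ to restrict $D^f(\tilde p)$ to $\emptyset$ and singletons, use agreement of $u^f$ and $\tilde u^f$ on singletons to transfer the comparison with other singletons, and use the chain (via $u^f(\Psi,p)\geq u^f(\emptyset)$) to handle the comparison with $\emptyset$ — which is necessary because nothing in the hypotheses relates $\tilde u^f(\emptyset)$ to $u^f(\emptyset)$ directly.
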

\begin{proof}
Suppose not. Then $\tilde{\Psi}_{f}\neq\emptyset$. Let $\tilde{\Psi}_f=\{\tilde{\omega}\}$. Note that also $\{\tilde{\omega}\}\in D^f(\tilde{p})$. Thus $[\tilde{\Psi},\tilde{p}]$ is an equilibrium under $u$. But since $$u^f(\tilde{\omega},\tilde{p}_{\tilde{\omega}})=\tilde{u}^f(\tilde{\omega},\tilde{p}_{\tilde{\omega}})\geq \tilde{u}^f(\emptyset)>\tilde{u}^f(\Psi,p)={u}^f(\Psi,p)$$ this contradicts the buyer optimality of $[\Psi,p].$
\end{proof}
Second we establish that $\mathcal{M}$ is immune to certain strategies where a single terminal buyer changes the utility function for one trade.
\begin{claim}\label{cl}
Let $f\in F'$. Let $u,\hat{u}\in\mathcal{U}$ with $\hat{u}^{-f}=u^{-f}$ such that there is a $\hat{\omega}\in \Omega_{\rightarrow f}$ with $\hat{u}^f(\omega,\cdot)=u^f(\omega,\cdot)$ for $\omega\neq\hat{\omega}$ and $\hat{u}^f(\emptyset)=u^f(\emptyset)$. Let $[\bar{\Psi},\bar{p}]$ be a buyer-optimal equilibrium under $u$. If for all $p_{\hat{\omega}}\in\mathbb{R}$,
we have
\begin{align*}
u^f(\hat{\omega},p_{\hat{\omega}})\leq u^{f}(\bar{\Psi},\bar{p})\Rightarrow \hat{u}^f(\hat{\omega},p_{\hat{\omega}})= u^f(\hat{\omega},p_{\hat{\omega}}),\\
u^f(\hat{\omega},p_{\hat{\omega}})\geq u^{f}(\bar{\Psi},\bar{p})\Rightarrow \hat{u}^f(\hat{\omega},p_{\hat{\omega}})\geq u^f(\hat{\omega},p_{\hat{\omega}}),
\end{align*}
 then  $[\bar{\Psi},\bar{p}]$ is a buyer-optimal equilibrium under $\hat{u}$.

\end{claim}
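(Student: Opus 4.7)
The plan has two halves: (i) $[\bar{\Psi},\bar{p}]$ is a competitive equilibrium under $\hat{u}$, and (ii) it remains buyer-optimal in $\mathcal{E}(\hat{u})$. For (i), only $\bar{\Psi}_f\in\hat{D}^f(\bar{p})$ needs checking since $\hat{u}^{-f}=u^{-f}$; the manipulation affects only the option $\{\hat{\omega}\}$, and since $u^f(\hat{\omega},\bar{p}_{\hat{\omega}})\leq u^f(\bar{\Psi},\bar{p})$ by the $u$-optimality of $\bar{\Psi}_f$ at $\bar{p}$, the first hypothesis gives $\hat{u}^f(\hat{\omega},\bar{p}_{\hat{\omega}})=u^f(\hat{\omega},\bar{p}_{\hat{\omega}})$, so $\bar{\Psi}_f$ remains $\hat{u}$-optimal at $\bar{p}$.

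For (ii), fix $p\in\mathcal{E}(\hat{u})$; it suffices to show $\hat{v}^{f'}(p)\leq\hat{v}^{f'}(\bar{p})$ for every terminal buyer $f'$. Since $\hat{u}$ still satisfies the hypotheses of Theorem~\ref{Util}, the meet $\underline{q}:=\min(\bar{p},p)$ lies in $\mathcal{E}(\hat{u})$, and because every terminal buyer faces only upstream prices, monotonicity gives $\hat{v}^{f'}(\underline{q})\geq\hat{v}^{f'}(p)$. The route is to establish the key claim $\underline{q}\in\mathcal{E}(u)$; once this is in hand, buyer-optimality of $\bar{p}$ under $u$ yields $v^{f'}(\underline{q})\leq v^{f'}(\bar{p})$, which translates back to $\hat{v}^{f'}$ directly for $f'\neq f$ (since $\hat{v}^{f'}=v^{f'}$), and for $f$ itself via the following dichotomy: if $\hat{v}^f(\underline{q})>v^f(\underline{q})$, then $\hat{u}^f(\hat{\omega},\underline{q}_{\hat{\omega}})>u^f(\hat{\omega},\underline{q}_{\hat{\omega}})$, and the contrapositive of the first hypothesis forces $u^f(\hat{\omega},\underline{q}_{\hat{\omega}})>v^f(\bar{p})$, which combined with $v^f(\underline{q})\geq u^f(\hat{\omega},\underline{q}_{\hat{\omega}})$ contradicts buyer-optimality; hence $\hat{v}^f(\underline{q})=v^f(\underline{q})\leq v^f(\bar{p})\leq\hat{v}^f(\bar{p})$.

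The key claim is proved by a support analysis: let $\tilde{\Psi}$ support $\underline{q}$ in $\mathcal{E}(\hat{u})$; for $f'\neq f$, $\tilde{\Psi}_{f'}\in D^{f'}(\underline{q})$ is immediate. For $f$, unit demand gives $\tilde{\Psi}_f\in\{\emptyset\}\cup\{\{\omega\}:\omega\in\Omega_{\rightarrow f}\}$, and in every case except $\tilde{\Psi}_f=\{\hat{\omega}\}$ with $\hat{u}^f(\hat{\omega},\underline{q}_{\hat{\omega}})>u^f(\hat{\omega},\underline{q}_{\hat{\omega}})$, a direct comparison using $\hat{u}^f(\hat{\omega},\cdot)\geq u^f(\hat{\omega},\cdot)$ shows that $\tilde{\Psi}_f$ is already in $D^f(\underline{q})$. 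The remaining subcase forces $\underline{q}_{\hat{\omega}}<\bar{p}_{\hat{\omega}}$ by monotonicity and the contrapositive of the first hypothesis, and it is the main obstacle: ruling it out requires combining the Monotone Substitutability (Corollary~\ref{MON}) of the unaffected seller $s(\hat{\omega})$---who at $\underline{q}$ is demanding $\hat{\omega}$ at a price strictly below $\bar{p}_{\hat{\omega}}$---with the minimality of $\bar{p}$ among equilibria of $u$ (as furnished by Theorem~\ref{extreme} together with the lattice theorem). The idea is to show that any such $\underline{q}$ can be adjusted, by raising $\hat{\omega}$'s price back toward $\bar{p}_{\hat{\omega}}$ while tracking the seller's demand through the monotone substitutability conditions, into an equilibrium of $u$ in which $f$ still strictly exceeds $v^f(\bar{p})$; this contradicts the buyer-optimality of $\bar{p}$ under $u$ and so excludes the subcase, completing the proof.
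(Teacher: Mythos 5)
Your part (i) is correct, and the overall architecture of part (ii) — pass to the meet $\underline{q}:=\min\{\bar{p},p\}\in\mathcal{E}(\hat{u})$ via the lattice theorem, reduce to the key claim $\underline{q}\in\mathcal{E}(u)$, then handle $f$ via the contrapositive of the first hypothesis — is a legitimate alternative route to the result. However, the hard subcase of the key claim (where $\tilde{\Psi}_f=\{\hat{\omega}\}$ supports $\underline{q}$ under $\hat{u}$ and $\hat{u}^f(\hat{\omega},\underline{q}_{\hat{\omega}})>u^f(\hat{\omega},\underline{q}_{\hat{\omega}})$, hence $u^f(\hat{\omega},\underline{q}_{\hat{\omega}})>u^f(\bar{\Psi},\bar{p})$) is not resolved by your sketch. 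Theorem~\ref{extreme} only delivers some buyer-optimal price vector; it does not say that the given $\bar{p}$ is the lattice minimum of $\mathcal{E}(u)$, and the claim is stated for an arbitrary buyer-optimal $\bar{p}$. Even granting minimality, it would not order $\bar{p}$ against $\underline{q}$, since you only know $\underline{q}\in\mathcal{E}(\hat{u})$ — its membership in $\mathcal{E}(u)$ is exactly what you are trying to establish, so the comparison is circular. And ``raising $\hat{\omega}$'s price back toward $\bar{p}_{\hat{\omega}}$ while tracking the seller's demand through monotone substitutability'' is not a local move: changing one price perturbs $s(\hat{\omega})$'s demand and propagates through the network, and there is no reason the adjusted vector is an equilibrium of $u$ without an entire tatonnement-style existence argument, which is far from what monotone substitutability alone supplies.

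The paper closes exactly this subcase with a different tool, namely the truncation lemma (Claim~\ref{trunc}) combined with the rural hospitals theorem. Truncate: take $\tilde{u}^f$ with $\tilde{u}^f(\omega,\cdot)=u^f(\omega,\cdot)$ for all $\omega\in\Omega_{\rightarrow f}$ and $u^f(\bar{\Psi},\bar{p})<\tilde{u}^f(\emptyset)<u^f(\hat{\omega},\underline{q}_{\hat{\omega}})$; Claim~\ref{trunc} forces $f$ to be unmatched in every equilibrium under $(\tilde{u}^f,u^{-f})$. Then pass to $\tilde{u}^f_*$, which makes only $\{\hat{\omega}\}$ feasible; since $\mathcal{E}(\tilde{u})\subseteq\mathcal{E}(\tilde{u}_*)$, there is an equilibrium under $\tilde{u}_*$ with $f$ unmatched, while $[\tilde{\Psi},\underline{q}]$ is an equilibrium under $\tilde{u}_*$ with $\tilde{\Psi}_f=\{\hat{\omega}\}$ — contradicting the rural hospitals theorem (second part of Theorem~\ref{Util}). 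If you graft this argument into your key claim, the rest of your proof goes through; as written, however, the step you identify as ``the main obstacle'' is a genuine gap, and the resolution you propose is not on the right track.
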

\begin{proof}
Let $[\hat{\Psi},\hat{p}]$ be a buyer-optimal equilibrium under $\hat{u}$.  If
$u^f(\hat{\omega},\hat{p}_{\hat{\omega}})\leq u^{f}(\bar{\Psi},\bar{p})$, then we have $D^f(\hat{p})=\hat{D}^f(\hat{p})$ and $[\hat{\Psi},\hat{p}]$ is an equilibrium under $u$. In this case, by buyer-optimality of $[\bar{\Psi},\bar{p}]$ under $u$, we have $\hat{u}^{f'}(\hat{\Psi},\hat{p})= u^{f'}(\hat{\Psi},\hat{p})\leq u^{f'}(\bar{\Psi},\bar{p})=u^{f'}(\bar{\Psi},\bar{p})$ for each $f'\in F'$. Moreover,  $\hat{u}^f(\hat{\omega},\bar{p}_{\hat{\omega}})=\hat{u}^f(\hat{\omega},\bar{p}_{\hat{\omega}})$, and therefore $[\bar{\Psi},\bar{p}]$ is an equilibrium under $\hat{u}$. Thus, in this case $[\bar{\Psi},\bar{p}]$ is a buyer-optimal equilibrium under $\hat{u}$.  It remains to consider the case that $u^f(\hat{\omega},\hat{p}_{\hat{\omega}})> u^{f}(\bar{\Psi},\bar{p})$. In this case, consider the two sub-cases that $\hat{\Psi}_f=\{\hat{\omega}\}$ or $\hat{\Psi}_f\neq\{\hat{\omega}\}$.

 If $\hat{\Psi}_f\neq\{\hat{\omega}\}$, we can show that  $[\hat{\Psi},\hat{p}]$ is an equilibrium under $u$. Suppose not. Then, as $\hat{\Psi}_f\notin D^f(\hat{p})$ and $u^f(\omega,\hat{p}_{\omega})=\hat{u}^f(\omega,\hat{p}_{\omega})$ for $\omega\neq \hat{\omega}$, we have $u^f(\hat{\omega},\hat{p}_{\hat{\omega}})> u^f(\hat{\Psi},\hat{p})$. Thus $\hat{u}^f(\hat{\omega},\hat{p}_{\hat{\omega}})\geq u^f(\hat{\omega},\hat{p}_{\hat{\omega}})> u^f(\hat{\Psi},\hat{p})=\hat{u}^f(\hat{\Psi},\hat{p})$ and therefore $\hat{\Psi}_f\notin \hat{D}^f(\hat{p})$. This contradicts the assumption that $[\hat{\Psi},\hat{p}]$ is an equilibrium under $\hat{u}$. Thus, $[\hat{\Psi},\hat{p}]$ is an equilibrium under $u$ and by the same reasoning as above, $[\bar{\Psi},\bar{p}]$ is a buyer-optimal equilibrium under $\hat{u}.$

If $\hat{\Psi}_f=\{\hat{\omega}\}$, consider the utility function $\tilde{u}^f$ obtained from ${u}^f$ by truncating as follows: $\tilde{u}^f(\omega,\cdot)={u}^f(\omega,\cdot)$ for all $\omega\in\Omega_{\rightarrow f}$ and ${u}^f(\bar{\Psi},\bar{p})< \tilde{u}^f(\emptyset)<{u}^f(\hat{\omega},\hat{p}_{\hat{\omega}})$. By Claim~\ref{trunc}, for each equilibrium $[\Psi,p]$ under $\tilde{u}:=(\tilde{u}^f,u^{-f})$ we have $\Psi_f=\emptyset$. Define the utility function $\tilde{u}_*^f$ by $\tilde{u}_*^f(\hat{\omega},\cdot)=\tilde{u}^f(\hat{\omega},\cdot)={u}^f(\hat{\omega},\cdot)$, by  $\tilde{u}^f_*(\omega,\cdot)=-\infty$ for each $\omega\neq\hat{\omega}$, and $\tilde{u}^f_*(\emptyset)=\tilde{u}^f(\emptyset)$. As for each equilibrium $[\Psi,p]$ under $\tilde{u}$ we have $\Psi_f=\emptyset$, we have $\mathcal{E}(\tilde{u})\subseteq\mathcal{E}(\tilde{u}_*)$ for $\tilde{u}_*:=(\tilde{u}_*^f,u^{-f})$, and in particular, there is an equilibrium $[\tilde{\Psi},\tilde{p}]$ under $\tilde{u}_*$ with $\tilde{\Psi}_f=\emptyset$. Observe however that $\tilde{u}_*^f(\hat{\omega},\hat{p}_{\hat{\omega}})=\tilde{u}^f(\hat{\omega},\hat{p}_{\hat{\omega}})={u}(\hat{\omega},\hat{p}_{\hat{\omega}})>\tilde{u}_*^f(\emptyset)$. Thus $\tilde{D}^f_*(\hat{p})=\{\{\hat{\omega}\}\}$ and $[\hat{\Psi},\hat{p}]$ is an equilibrium under $\tilde{u}_*$ with $\tilde{u}_*^f(\hat{\Psi},\hat{p})>\tilde{u}_*^f(\emptyset)$. This contradicts the rural hospitals theorem (the second part of Theorem~\ref{Util}).
\end{proof}
With the claim, we can prove the result. Suppose there are profiles $u,\tilde{u}\in\mathcal{U}$ such that $\tilde{u}^{-F'}=u^{-F'}$ and for each $f\in F'$, we have
$u^f(\mathcal{M}(\tilde{u}))> u^f(\mathcal{M}(u)).$
Let $\mathcal{M}(u)=(\bar{\Psi},\bar{p})$ and $\mathcal{M}(\tilde{u})=(\tilde{\Psi},\tilde{p})$.
We define for each $f\in F'$, a $\hat{u}^f\in\mathcal{U}_f$ as follows: Note that $\tilde{\Psi}_f\neq\emptyset$ as $u^f(\tilde{\Psi},\tilde{p})> u^f(\bar{\Psi},\bar{p})\geq u^f(\emptyset)$. Let $\tilde{\omega}\in\tilde{\Psi}$ be the unique trade in $\tilde{\Psi}$ such that $b(\tilde{\omega})=f$. We let $\hat{u}^f({\omega},\cdot)={u}^f({\omega},\cdot)$ for $\omega\neq \tilde{\omega}$ and we let $\hat{u}^f(\emptyset)=u^f(\emptyset)$. To construct $\hat{u}^f(\tilde{\omega},\cdot)$ we proceed as follows:  Define $\hat{u}^f(\tilde{\omega},p_{\tilde{\omega}}):={u}^f(\tilde{\omega},p_{\tilde{\omega}})$ for each ${p}_{\tilde{\omega}}\in\mathbb{R}$ with $u^f(\tilde{\omega},p_{\tilde{\omega}}) \leq u^f(\bar{\Psi},\bar{p})$. Define  $$\hat{u}^f(\tilde{\omega},\tilde{p}_{\tilde{\omega}}):=\max_{\omega\in\Omega_{\rightarrow f}}u^f(\omega,\tilde{p}_{\omega}).$$ Note that $$\hat{u}^f(\tilde{\omega},\tilde{p}_{\tilde{\omega}})\geq u^f(\tilde{\omega},\tilde{p}_{\tilde{\omega}})>u^f(\bar{\Psi},\bar{p})=\hat{u}^f(\bar{\Psi},\bar{p}).$$
For prices $p_{\tilde{\omega}}\neq \tilde{p}_{\tilde{\omega}}$ with $u^f(\tilde{\omega},p_{\tilde{\omega}})\geq u^f(\bar{\Psi},\bar{p})$, we can choose any continuous and monotonic extension such that $\hat{u}^f(\tilde{\omega},p_{\tilde{\omega}})\geq u^f(\tilde{\omega},p_{\tilde{\omega}})$.  By Claim~\ref{cl}, $[\bar{\Psi},\bar{p}]$ is a buyer-optimal equilibrium for $(\hat{u}^{f},u^{-f})$. Iterating for all $f\in F'$, $[\bar{\Psi},\bar{p}]$ is a buyer-optimal equilibrium under $\hat{u}:=(\hat{u}^{F'},u^{-F})$. Note however that by construction of $\hat{u}$, for each $f\in F'$ we have $\tilde{\Psi}_f\in \hat{D}^f(\tilde{p})$. Thus $[\tilde{\Psi},\tilde{p}]$ is an equilibrium under $\hat{u}$ with $\hat{u}^f(\tilde{\Psi},\tilde{p})>\hat{u}^f(\bar{\Psi},\bar{p})$ for each $f\in F'$. This contradicts the buyer-optimality of $[\bar{\Psi},\bar{p}]$ under $(\hat{u}^{F'},u^{-F})$.
\end{proof}
\section{Proofs for Section~\ref{Exchange}}
\subsection*{Proof of Lemma~\ref{equivalencei}}
\begin{proof}
    First we show the result for non-negative prices. Let $p,p'\in\mathbb{R}_+^{\Omega_f}$. Define $q,q'\in\mathbb{R}_{+}^X$ by
    \begin{align*}
    q_{x}:=\begin{cases}
    \min_{\omega\in\Omega_{\rightarrow f}:x(\omega)=x}p_{\omega},\text{ for }x\notin X_f,\\
    \max_{\omega\in \Omega_{f\rightarrow}:x(\omega)=x}p_{\omega},\text{ for }x\in X_f,
    \end{cases}q'_{x}:=\begin{cases}
    \min_{\omega\in\Omega_{\rightarrow f}:x(\omega)=x}p'_{\omega},\text{ for }x\notin X_f,\\
    \max_{\omega\in \Omega_{f\rightarrow}:x(\omega)=x}p'_{\omega},\text{ for }x\in X_f.
    \end{cases}
    \end{align*}
    By construction we have \begin{align*}D^f(p)&=\{\Psi\subseteq \Omega_f:X_f(\Psi)\in \tilde{D}^f(q),p_{\omega}=q_{x(\omega)}\text{ for }\omega\in\Psi\},\\
    D^f(p')&=\{\Psi'\subseteq \Omega_f:X_f(\Psi')\in \tilde{D}^f(q'),p'_{\omega}=q'_{x(\omega)}\text{ for }\omega\in\Psi'\}.
    \end{align*}

    If $p_{\omega}= p'_{\omega}$ for $\omega\in\Omega_{f\rightarrow}$ and $p_{\omega}\leq p'_{\omega}$ for $\omega\in\Omega_{\rightarrow f}$, then for $\Psi'\in D^f(p')$ there is, by gross substitutability a $Y\in \tilde{D}^f(q)$ with $\{x\in Y:q'_{x}= q_x\}\subseteq X_f(\Psi').$ Thus, if $x\in Y\setminus X_f$ and $q'_x=q_x$, then $x\in X_f(\Psi')$, and if $x\in X_f\setminus X_f(\Psi')$, then, as $q_x'=q_x$,  we have $x\in X_f\setminus Y$. Therefore there is a $\Psi\in D^f(p)$ with
    \begin{align*}
    \{\omega\in\Psi_{\rightarrow f}:p'_{\omega}=q'_{x(\omega)}=q_{x(\omega)}=p_{\omega}\}\subseteq \Psi'_{\rightarrow f},\quad
    \Psi'_{f\rightarrow}\subseteq\Psi_{f\rightarrow}.
    \end{align*}
    Similarly, by the law of aggregate demand, there is a $Y\in\tilde{D}^f(q)$ such that $|Y|\geq|X_f(\Psi')|.$ Then there is a $\Psi\in D^f(p)$ with $Y=X_f(\Psi)$. But then
    \begin{align*}&|\Psi_{\rightarrow f}|-|\Psi_{f\rightarrow}|=|Y\setminus X_f|-|X_f\setminus Y|=|Y|-|X_f|\\\geq&|X_f(\Psi')|-|X_f|=|X_f(\Psi')\setminus X_f|-|X_f\setminus X_f(\Psi')|=|\Psi'_{\rightarrow f}|-|\Psi_{f\rightarrow}|.\end{align*}

    An analogous argument shows that $D^f$ satisfies the second part of the SSS condition, the second part of the CSC condition, and LAS.

Next we establish FS and LAD/LAS on $\mathbb{R}^{\Omega_f}$. Let $p,p'\in\mathbb{R}^{\Omega_f}$ and define $q,q'\in\mathbb{R}^{X}$ as previously. Moreover, define $p^{0}:=(\max\{p_{\omega},0\}_{\omega\in\Omega})\in\mathbb{R}^{\Omega_f}$ and $(p')^{0}:=(\max\{p'_{\omega},0\}_{\omega\in\Omega})\in\mathbb{R}^{\Omega_f}$. By construction of $u^f$ and the assumption that $\tilde{u}^f(Y,t)\leq\tilde{u}^f(Y',t)$ for $Y\subseteq Y'$, we have
\begin{align*}
&D^f(p)=\{\Psi\in D^f(p^0):\{x\in X:q_x<0\}\subseteq X_f(\Psi),p_{\omega}=q_{x(\omega)}\text{ for }\omega\in\Psi\},\\
&D^f(p')=\{\Psi'\in D^f((p')^0):\{x\in X:q'_x<0\}\subseteq X_f(\Psi'),p'_{\omega}=q'_{x(\omega)}\text{ for }\omega\in\Psi'\}.
\end{align*}
In particular, for $\Psi'\in D^f(p')$ we have $\Psi'\in D^f((p')^0)$ and by FS for non-negative prices, there is a $\tilde{\Psi}\in D^f(p^0)$ with  \begin{align*}
\{\omega\in\tilde{\Psi}_{\rightarrow f}:(p')^0_{\omega}=p_{\omega}^0\}\subseteq \Psi'_{\rightarrow f},\quad
\Psi'_{f\rightarrow}\subseteq\tilde{\Psi}_{f\rightarrow}.
\end{align*}
We can find a $\Psi\in D^f(p)$, such that $\{\omega\in\tilde{\Psi}:q_{x(\omega)}=q'_{x(\omega)}\}\subseteq \Psi.$ Now let $\omega\in \Psi_{\rightarrow f}$ and $p'_{\omega}=p_{\omega}$. Then $q'_{x(\omega)}=p'_{\omega}=p_{\omega}=q_{x(\omega)}$ and $\omega\in \tilde{\Psi}$. Moreover, $(p')_{\omega}^0=p^0_{\omega}$ and therefore $\omega\in\Psi'_{\rightarrow f}$. Similarly, for all $\omega\in\Omega_{f\rightarrow}$ we have $p'_{\omega}=p_{\omega}$. If $p_{\omega}=p'_{\omega}<q'_{x(\omega)}=q_{x(\omega)}$ then $\omega\notin \Psi$ and $\omega\notin \Psi'$. If $p_{\omega}=p'_{\omega}=q'_{x(\omega)}=q_{x(\omega)}$,  then $\omega\notin \Psi_{\rightarrow f}$ implies $\omega\notin \tilde{\Psi}_{\rightarrow f}$. Moreover, $(p')_{\omega}^0=p^0_{\omega}$ and therefore $\omega\notin\Psi'_{\rightarrow f}$.

To establish LAD, let $\Psi'\in D^f(p')$. Since $\Psi'\in D^f((p')^0)$ and by LAD for non-negative price vectors, there is a $\tilde{\Psi}\in D^f(p^0)$ and hence a $\Psi\in D^f(p)$ with $X_f(\Psi)=X_f(\tilde{\Psi})\cup\{x\in X:q_x<0\}$ such that $$|{\Psi}_{\rightarrow f}|-|{\Psi}_{f\rightarrow}|=|X_f(\Psi)\setminus X_f|-|X_f\setminus X_f(\Psi)|\geq|\tilde{\Psi}_{\rightarrow f}|-|\tilde{\Psi}_{f\rightarrow}|\geq|\Psi'_{\rightarrow f}|-|\Psi'_{f\rightarrow}|.$$

    An analogous argument shows that $D^f$ satisfies the second part of the SSS condition, the second part of the CSC condition, and LAS.
\end{proof}
\subsection*{Proof of Corollary~\ref{general}}
\begin{proof}
    For the first part, consider price vectors in the induces trading network $q,q'\in\mathbb{R}_+^{\Omega}$ defined by $q_{\omega}:=p_{x(\omega)}$ and $q'_{\omega}:=p'_{x(\omega)}$ for each $\omega\in\Omega$. By Proposition~\ref{equivalence}, $q$ and $q'$ are equilibrium prices in the induced trading network. By Lemma~\ref{equivalencei}, utility functions in the induced trading network satisfy FS, LAD and LAS. Thus, by Theorem~\ref{Util}, price vectors $\bar{q},\underline{q}\in\mathbb{R}_+^{\Omega}$ with $$\bar{q}_{\omega}=\max\{q_{\omega},q'_{\omega}\},\quad\underline{q}_{\omega}=\min\{q_{\omega},q'_{\omega}\},$$
    are equilibrium prices in the trading network. By construction of $q$ and $q'$, for each $\omega,\omega'\in{\Omega}$ with $x(\omega)=x(\omega')$ we have $q_{\omega}=p_{x(\omega)}=q_{\omega'}$ and $q'_{\omega}=p'_{x(\omega)}=q'_{\omega'}$. Therefore, $$\bar{p}_x=\max_{\omega\in\Omega,x=x(\omega)}\bar{q}_{\omega}\quad\text{and}\quad\underline{p}_x=\max_{\omega\in\Omega,x=x(\omega)}\underline{q}_{\omega},$$
    and, by Proposition~\ref{equivalence}, $\bar{p}$ and $\underline{p}$ are equilibrium price vectors.

    For the second part, define $q$ and $q'$ as before and let $$\Psi:=\{\omega\in \Omega:x(\omega)\in Y_{b(\omega)}\cap X_{s(\omega)}\}.$$ As shown in the proof of Proposition~\ref{equivalence}, $[\Psi,q]$ is an equilibrium of the trading network. By the second part of Theorem~\ref{Util}, there is a $\Psi'\subseteq\Omega$ such that $[\Psi',q']$ is an equilibrium of the trading network with $$|\Psi_{\rightarrow f}|-|\Psi_{f\rightarrow}|=|\Psi'_{\rightarrow f}|-|\Psi'_{f\rightarrow}|.$$

    Let $Y'=(Y'_f)_{f\in F}$ with $Y'_f:=X_f(\Psi')$. As shown in the proof of Proposition~\ref{equivalence}, $[Y',p']$ is an equilibrium of the exchange economy. Moreover,
    \begin{align*}
    |Y_f|=|Y_f\setminus X_f|-|X_f\setminus Y_f|+|X_f|=|\Psi_{\rightarrow f}|-|\Psi_{f\rightarrow}|+|X_f|\\=|\Psi'_{\rightarrow f}|-|\Psi'_{f\rightarrow}|+|X_f|=|Y'_f\setminus X_f|-|X_f\setminus Y'_f|+|X_f|=|Y'_f|.
    \end{align*}

    For the third part, we first show that the set of equilibrium price vectors in the induced trading network, $\mathcal{E}(u)$ is compact. The same argument as in the proof of Proposition~\ref{BWP} establishes that the surplus function $Z:\mathbb{R}_+^{\Omega}\rightarrow\mathbb{R}$ is continuous and hence $\mathcal{E}(u)\subseteq\mathbb{R}^{\Omega}_+$ is closed. To show that $\mathcal{E}(u)$ is bounded, note that by the full range assumption there exists a $K>0$ such that for each $f\in F$ and $Y\subseteq X$ we have $\tilde{u}^f(Y,-K)<\tilde{u}^f(X_f,0)$. For each equilibrium $[\Psi,p]$ in the trading network and each $f\in F$, we have $$u^f(\Psi,p)=\tilde{u}^f(X_f(\Psi),p_f(\Psi))\geq \tilde{u}^f(X_f,0)=u^f(\emptyset),$$ and therefore by monotonicity of utility in transfers $p_f(\Psi)>-K.$ Moreover, $\sum_{f\in F}p_f(\Psi)=0$. Thus, $p_f(\Psi)<|F|\cdot K$ for each $f\in F$. By the full range assumption, there is a $\tilde{K}>0$ such that for each $f\in F$ and  $Y\subseteq X$, we have $\tilde{u}^f(\emptyset,\tilde{K})>\tilde{u}^f(Y,|F|\cdot K).$ Note that for each equilibrium $[\Psi,p]$ of the trading network, each $f\in F$ and each $\Psi'\subseteq \Omega_{f}$ with $X_f(\Psi')=\emptyset$, we have $$\tilde{u}^f(\emptyset,\sum_{\omega\in \Psi'}p_{\omega})=u^f(\Psi',p)\leq u^f(\Psi,p)=\tilde{u}^f(X_f(\Psi),p_f(\Psi))<\tilde{u}^f(X_f(\Psi),|F|\cdot K)<\tilde{u}^f(\emptyset,\tilde{K}).$$
    Thus $\sum_{\omega\in \Psi'}p_{\omega}<\tilde{K}$ and, as $p_{\omega}\geq0$ for each $\omega\in\Omega$, we have $0\leq p_{\omega}<\tilde{K}$ for each $\omega\in \Psi'$. Now note that for each $\omega\in \Omega$, there exists a $\Psi'\subseteq\Omega_{s(\omega)}$ with $X(\Psi')=\emptyset$ and $\omega\in\Psi'$. Thus for each $\omega\in\Omega$ we have $0\leq p_{\omega}<\tilde{K}$. Thus $\mathcal{E}(u)$ is compact, by Propositions~\ref{equivalence} and~\ref{BWP} non-empty. Moreover, by Theorem~\ref{Util}, $\mathcal{E}(u)$ is a sublattice of $\mathbb{R}^{\Omega}.$ Since $\mathcal{E}(u)$ is a non-empty, compact sublattice of $\mathbb{R}^{\Omega}$, there exist $\bar{p},\underline{p}\in\mathcal{E}(u)$ such that for each $p\in\mathcal{E}(u)$ we have $\underline{p}_{\omega}\leq p_{\omega}\leq\bar{p}_{\omega}$ for each $\omega\in\Omega$. By the first part of Proposition~\ref{equivalence}, the vectors $\underline{q},\bar{q}\in\mathbb{R}_{+}^X$ defined by $$\underline{q}_{x}:=\max_{\omega\in\Omega,x=x(\omega)}\underline{p}_{\omega},\quad \bar{q}_x:=\max_{\omega\in\Omega,x=x(\omega)}\bar{p}_{\omega}
    $$ are equilibrium price vectors in the exchange economy. Now let $q\in\mathbb{R}_+^X$ be an equilibrium price vector in the exchange economy. By the second part of Proposition~\ref{equivalence}, the price vector $p\in\mathbb{R}_+^{\Omega}$ defined by $p_{\omega}:=p_{x(\omega)}$ for each $\omega\in\Omega$, is in $\mathcal{E}(u)$.
    Let $x\in X$. Let $\omega\in\Omega$ with $x=x(\omega)$ and $\underline{q}_x=\underline{p}_{\omega}$. Then $\underline{q}_x=\underline{p}_{\omega}\leq p_{\omega}=q_x.$ Similarly, let $\omega\in\Omega$ with $x=x(\omega)$ and $\bar{q}_x=\bar{p}_{\omega}$. Then $\bar{q}_x=\bar{p}_{\omega}\geq p_{\omega}=p_x.$ Thus $\bar{q},\underline{q}$ are the desired price vectors.
\end{proof}
\newpage
\section{Results for Trading Networks with General Preferences}\label{yolo}
\begin{table}[h]
    \centering
    \caption{Sufficient conditions for results for trading networks with general preferences.}
    {\footnotesize\begin{threeparttable}
            \begin{tabular}{|l|l|c|c|c|c|c|c|}\hline
                {Result (Theorem*)} & Source &{C\&M} & {FS} & {LADS} & {BCV} & {BWP}  & {NF}\\\hline\hline
                Existence of Equil. (1) &\cite{Ravi}& x& x& &x&  &\\               \hline
                                1st Welfare Theorem (2) &\cite{Ravi} &x & & & & &x \\\hline
                Rural Hospitals (3) & Theorem~\ref{Util}, part 2& x& x& x& & & \\\hline
                Lattice  (4) & Theorem~\ref{Util}, part 1& x& x& x&& & \\\hline

                Side Optimality (4) & Theorem~\ref{extreme}& x& x& x&x&  & \\\hline

                Equil.$\Rightarrow$ Stable  (5) & \cite{Ravi}& x& & &  &x& \\\hline
                Stable $\Rightarrow$ Equil. (6) & \cite{Ravi}& x& x& &x & & \\
                \hline
                Stable $\Leftrightarrow$ Group-Stable (8 \& 9) & \cite{Ravi}& x& x& &x & &x \\\hline
                Trail-Stable $\Leftrightarrow$ Equil. & \cite{Ravi}& x& x& &&x &  \\\hline
                Chain-Stable $\Leftrightarrow$ Stable & \cite{Hatfield:2018} & &x& x& & &  \\\hline
                Group-Strategy-Proofness   & Theorem~\ref{GSP}& x& x& x&x & & \\\hline

            \end{tabular}
            \medskip

            \begin{tablenotes}
                \small \textbf{Notation}:

        \textbf{Theorem*} Corresponding theorem in~\cite{Hatfield2013} under quasi-linear utility. The existence of a side-optimal equilibrium additionally assumes finite valuations.

                \textbf{C\&M} stands for \textsl{Continuity and Monotonicity},

                \textbf{FS} stands for \textsl{Full Substitutability},

                \textbf{LADS} stands for the \textsl{Laws of Aggregate Demand and Supply},

                \textbf{BCV} stands for \textsl{Bounded Compensating Variations},

                \textbf{BWP} stands for \textsl{Bounded Willingness to Pay},
                and

                \textbf{NF} stands for \textsl{No Frictions}.       \end{tablenotes}
    \end{threeparttable}}
    \label{table:HAproperties}

\end{table}
\end{document}